\def\draft{1}  
\newcommand{\RJ}[1]{\ifthenelse{\equal{\draft}{1}}{{\color{red}{---CJ:#1---}}}{}}
\newcommand{\CM}[1]{\ifthenelse{\equal{\draft}{1}}{{\color{blue}{---CJ:#1---}}}{}}
\newcommand{\YY}[1]{\ifthenelse{\equal{\draft}{1}}{{\color{red}{---YY:#1---}}}{#1}}
\newcommand{\commentout}[1]{}
\newtheorem{theorem}{Theorem}[section]
\newtheorem{proposition}[theorem]{Proposition}
\newtheorem{corollary}[theorem]{Corollary}
\newtheorem{definition}[theorem]{Definition}
\newtheorem{lemma}[theorem]{Lemma}
\def\Tr{\textnormal{Tr}}
\def\Par{\textrm{Par}}
\def\MGuess{\textrm{MGuess}}
\def\MagicKey{\textrm{MagicKey}}
\def\MagicQKD{\textrm{MagicQKD}}
\def\Alice\textnormal{Alice}
\def\Bob\textnormal{Bob}
\def\AliceKey{\textit{AliceKey}}
\def\BobKey{\textit{BobKey}}
\def\Eve{\textit{Eve}}
\def\Perm{\textrm{Perm}}
\def\Ham{\textrm{Ham}}
\def\Guess{\textrm{Guess}}
\title{Parallel Device-Independent Quantum Key Distribution}
\author[1]{Rahul Jain}
\author[2]{Carl A. Miller}
\author[3]{Yaoyun Shi}
\affil[1]{Centre for Quantum Technologies, National University of Singapore, and MajuLab, CNRS-UCA-SU-NUS-NTU International Joint Research Unit, Singapore}
\affil[2]{Joint Center for Quantum Information and Computer Science, University of Maryland, and National Institute of Standards and Technology, USA}
\affil[3]{Aliyun Quantum Laboratory, Alibaba, USA}
\date{}
\begin{document}
\maketitle
\thispagestyle{empty}

\begin{abstract}
\noindent
A prominent application of quantum cryptography is the distribution of cryptographic keys that are provably secure.
Recently, such security proofs were extended by Vazirani and Vidick ({\em Physical Review Letters}, 113, 140501, 2014) to the device-independent (DI) scenario, where the users do not need to trust the integrity of the underlying quantum devices. The protocols analyzed by them and by subsequent authors all require a sequential execution of $N$ multiplayer games, where $N$ is the security parameter. In this work, we prove unconditional security of a protocol where  all games are executed in parallel. 
Besides decreasing the number of time-steps necessary for key generation, this result  reduces the security requirements for DI-QKD by allowing arbitrary information leakage
of each user's inputs within his or her lab. 
To the best of our knowledge, this is the first parallel security proof for a fully device-independent QKD protocol.  Our protocol tolerates a constant level of device imprecision and achieves a linear key rate.
\end{abstract}

\clearpage

\section{Introduction}
Key Distribution (KD) is a task where two parties establish a common secret by communicating through a public channel.
It is a necessary step for symmetric key cryptography (i.e., for protocols that require a shared secret) in a setting where a secure communication channel is initially not available. Thus KD is a primitive
foundational to information security.

Information theoretically secure KD is impossible for classical protocols (i.e., protocols that exchange bits).
Thus all classical solutions must necessarily rely on computational assumptions. Widely used protocols,
such as the Diffie-Hellman-Merkle key exchange protocol~\cite{DH} and those making use of digital signatures (e.g., as in the implementation of Secure Sockets Layer) all rely on the computational security of public key cryptography.
The hardness assumptions underlying all known
public key cryptography are mathematically unproven.  The practical security of these solutions are being challenged on the one hand
by the rapidly increasing and widely available high performance computing power, and on the other hand,
by new insights into the design flaws. For example, Adrian {\em et al.}~\cite{Adrian} recently showed how Diffie-Hellman-Merkle could fail in practice. A further threat to all widely used public-key-based KD protocols is that they are not secure against quantum cryptanalysis. With universal quantum computer within sight~\cite{q_Manifesto} and quantum-resilient protocols yet to emerge,  these challenges call for alternative and fundamentally more secure solutions for KD.

Quantum mechanics provides such a solution. The quantum key distribution (QKD) protocol of Bennett and Brassard~\cite{BB84} and its several subsequent variants have been proved to be unconditionally secure (i.e., against a computationally all-powerful adversary)~\cite{Mayers, LC, Shor:2000,Renner:2005,Ben:2005}. Experimental networks implementing QKD have been developed and deployed with increasingly large scales.
With the rapid advances of quantum information technologies, QKD protocols may be widely adopted in the near future.

A major challenge for QKD (and other quantum information tasks) is that quantum information is extremely fragile. How could a user of a QKD protocol be sure that the quantum devices are operating according to the specifications?  
This consideration motivates the field of device-independent (DI) quantum cryptography, pioneered by Ekert \cite{Ekert:1991} and Mayers and Yao~\cite{Mayers:1998}. The goal of DI quantum cryptography is to develop protocols and prove security
in a strictly black-box fashion, with the only trusted assumption being that quantum physics is correct and complete,
and that the users have the ability to restrict information transmission.  The field has seen enormous success
in recent years, including the achievement of fully device-independent and robust security proofs for QKD \cite{Vazirani:QKD:PRL, Miller:2016, Arnon:2016, Dupuis:2016,
Arnon:2018}.

All the known secure DI-QKD protocols are {\em sequential} in the following sense.
Alice and Bob share a two-part quantum device $D = (D_1, D_2)$, each of which is treated as a black box
which accepts classical inputs and returns classical outputs.  Alice creates a random input $X_1$, gives it to her device $D_1$, and receives
an output  $A_1$.  Meanwhile, Bob gives a random input $Y_1$ to his device $D_2$ and receives an output
$B_1$.  This process is repeated sequentially $N$ times to obtain $X_1, \ldots, X_N, Y_1, \ldots, Y_N, A_1 , \ldots, A_N,
B_1, \ldots, B_N$. 
(These data are then used to determine whether a certain Bell inequality has been violated, and if so,
these registers are then postprocessed using information reconciliation and privacy amplification to obtain the final shared key.)
The sequential assumption means specifically that output $A_i$ is recorded before the device gains knowledge of $X_{i+1}$.

The question addressed by the current paper is the following: is the sequential assumption in DI-QKD necessary?  We show that,
in fact, it can be removed: we prove robust DI-QKD in a more general model where there is no time-ordering assumption on the generation of the
outputs $A = (A_1, \ldots, A_N)$ and $B = (B_1, \ldots, B_N)$.  The devices may be treated as black boxes which receive their input sequences $X = (X_1, \ldots, X_N)$
and $Y = (Y_1, \ldots, Y_N)$ all at once and return output sequences $A_1, \ldots, A_N$ and $B_1, \ldots, B_N$ all at once (parallel repetition).  In particular, we do
not require the assumption that $A_i$ is independent of $X_{i+1}$.
The only necessary
assumption is that the inputs $X_1, \ldots, X_N$ are uniformly random conditioned on any information
outside of Alice's lab, and the inputs $Y_1, \ldots, Y_N$ are
uniformly random conditioned on any information outside of Bob's lab. 

Broadening the model for device-independence allows for more flexible implementations of quantum key distribution --- in particular, our
result shows that before quantum key distribution takes place, arbitrary interaction can be allowed between each player's input sequence and his or her device. 
(Indeed, the input sequences can even be re-used from previous experiments, provided that they are completely unknown to the other player
and the adversary when the protocol begins.)  Our model also allows for any of the Bell experiments in the DI-QKD procedure to be performed simultaneously,
which may open the door to faster implementations.

Our work addresses a general theoretical question: what are the minimal assumptions necessary to generate a uniformly random secret between two players?  The main result shows that, if we can assume perfect private randomness and trusted classical computation for each player, then Bell nonlocality itself is enough to generate shared keys of arbitrary length.

\subsection{The protocol and technical statements}

All DI protocols use {\em nonlocal games} as building blocks. For our protocol, we use the Magic Square game.
\begin{definition}
The Magic Square game (MSG) is a two-player game in which the input alphabet for both players
is $\mathcal{X} = \mathcal{Y} = \{ 1, 2, 3 \}$, the output alphabet for the first player is $\mathcal{A} = \{ 000, 011, 101, 110 \}$
(the set of all $3$ bit strings of even parity),
and the output alphabet for the second player is $\mathcal{B} = \{ 001, 010, 100, 111 \}$ (the set of all $3$ bit strings
of odd parity).  The inputs
are chosen according to a uniform distribution, and the game is won if the inputs $x,y$ and the outputs $a,b$
satisfy $a_y = b_x$. 
\end{definition}
The Magic Square game has optimal quantum winning probability $1$ and optimal classical winning probability $8/9$.

For our device model, we assume that Alice and Bob possess an untrusted two-part quantum device $D = (D_1, D_2)$.  The device
$D_1$ receives input from the set $\mathcal{X}^N$, where $N$ is a positive integer, and gives an output
in the set $\mathcal{A}^N$.  The device $D_2$ receives input from the set $\mathcal{Y}^N$ and yields
output in the set $\mathcal{B}^N$.

\begin{figure}[!h]
\fbox{\parbox{6in}{\textbf{Quantum Key Distribution Protocol ($\MagicQKD$)} \\ \\
\textit{Parties:} Alice, Bob, Eve \\ \\
\textit{Parameters:} \\
\begin{tabular}{rl}
$\epsilon$ : & A rational number from the interval $(0, 1/2]$ \\
$N$ : & A positive integer such that $N \epsilon^2$ is an integer \\
\end{tabular} \\ \\
\textit{Equipment:} \\
\begin{tabular}{rl}
$D = (D_1, D_2)$: & A two-part quantum device for playing $N$ copies of the Magic Square \\ &game \\
$E$: & A quantum system (possessed by Eve) which purifies the initial state \\ &of $D$.   \\
$Z$: & A noiseless authenticated public classical channel.
\end{tabular}
\\ \\
\textit{Procedure:} 
\begin{enumerate}
\item Alice chooses $X_1 , \ldots, X_N \in \{ 1, 2, 3 \}$ uniformly at random, gives $(X_1, \ldots, X_N )$
to $D_1$ as input, and records output $(A_1, \ldots, A_N)$.

\item Bob chooses $Y_1, \ldots, Y_N \in \{ 1, 2, 3  \}$ uniformly at random, gives
$(Y_1, \ldots, Y_n )$ to $D_2$ as input, and records output $(B_1, \ldots, B_N)$. 

\item Alice chooses a random permutation $F \colon \{ 1, 2, \ldots, N \} \to \{ 1, 2, \ldots, N \}$ and broadcasts
it to Bob.  The players apply permutation $F$ to $\{ X_i \}, \{ Y_i \}, \{ A_i \}, \{ B_i \}$. 

\item Alice broadcasts $( X_1 , \ldots, X_{ \epsilon N } )$ and
Bob broadcasts $(Y_1, \ldots, Y_{ \epsilon N } )$.

\item For each $j \in \{ 1, 2, \ldots, \epsilon N \}$, Alice records the bit $R_j := (A_j)_{Y_j}$ (that is, the $(Y_j)$th bit of
$A_j$).  For $j \in \{ 1, 2, \ldots, \epsilon N \}$, Bob records the bit $S_j := (B_j)_{X_j}$.

\item Alice broadcasts $( R_1 , \ldots, R_{ \epsilon^2  N } )$ and
Bob broadcasts $(S_1, \ldots, S_{ \epsilon^2  N  } )$.

\item If the average score at the Magic Square game
on games $1, \ldots,  \epsilon^2 N $ is below $1 - \epsilon$, the protocol aborts.
Otherwise, the protocol succeeds, and Alice's raw key consists of the sequence $(R_1 , \ldots, R_{ \epsilon N  } )$
and Bob's raw key consists of the sequence $(S_1, \ldots, S_{ \epsilon N } )$.

\end{enumerate}
}}
\caption{A protocol for key distribution.}
\label{qkdfig}
\end{figure}

Our parallel DI-QKD protocol, {\em MagicQKD}, is given in Figure~\ref{qkdfig}.  Alice and Bob are the
parties who wish to share a key, and Eve is an adversary.  It is assumed that the untrusted
devices $(D_1, D_2 )$ are initially in a pure state with Eve's side information $E$ (which is the worst-case
scenario) and that Eve has access to any communications between Alice and Bob during the protocol. 
The security parameter $N$ is the number of instances of Magic Square played. 
The parameter $\epsilon$ is a positive rational number.  In our proof we show that there is some fixed
positive value $\epsilon := \epsilon_0$ (not given explicitly) such that the protocol achieves a positive linear rate of key distribution
as $N$ tends to infinity.

Our security proof is based on the following assumptions for the protocol $\MagicQKD$.
\begin{enumerate}[{Assumption} 1.]
\item The behavior of the devices $D_1, D_2$ and the system $E$ is modeled by quantum physics.
\item Alice and Bob have the ability to generate perfect private randomness at steps $1$, $2$, and $3$.
\item Any information broadcast by Alice is perfectly received by both Bob and Eve, and any information broadcast by Bob is perfectly received
by both Alice and Eve.
\item Aside from broadcasts by the players, no information is transmitted from Alice's laboratory (which contains $D_1, X, A, R$) or from Bob's laboratories (which contains $D_2, Y, B, S$) once the protocol has started.
\end{enumerate}

Let $\AliceKey$ denote the raw key $R_1, \ldots,  R_{\epsilon N}$ possessed by Alice at the end of
the protocol $\MagicQKD$, let $\BobKey$ denote the raw key $S_1, \ldots, S_{\epsilon N}$ possessed
by Bob, and let $\Eve$ denote all information possessed by Eve at the conclusion of the protocol (including
her side information $E$ and any information obtained by eavesdropping).  Let $\Gamma$ denote the final
state of $\MagicQKD$, and let $SUCC$ denote the event
that the protocol succeeds.  Then, the smooth min-entropy
$H^\delta_{min} ( \AliceKey \mid \Eve, SUCC )$ measures the number of uniformly random bits that
can be extracted from $\AliceKey$ in Eve's presence, while the smooth zero-entropy
$H_0^\delta ( \AliceKey \mid \BobKey, SUCC )$ measures the least number of bits that Alice needs
to publicly reveal in order for Bob to perform information reconciliation and
reconstruct $\AliceKey$ (see section~\ref{collisionsec} for details).  Therefore,
to show security for 
a quantum key distribution protocol, it suffices to show that the difference between the former quantity
and the latter quantity is lower bounded by $\Omega ( N )$, for some negligible error term $\delta := \delta ( N )$.

Our main result is the following.

\begin{theorem}
\label{mainthm}
There exists a constant $\epsilon := \epsilon_0 > 0$ and functions $\delta := \delta ( N ) \in 2^{-\Omega ( N ) }$ and 
$R(N) \in \Omega ( N)$ such that the following always holds for protocol $\MagicQKD$: either
\begin{eqnarray}
\mathbf{P} ( SUCC ) < \delta
\end{eqnarray}
or 
\begin{eqnarray}
\label{secstatement}
H_{min}^{\delta} ( \AliceKey \mid \Eve, SUCC ) - H_0^{\delta} ( \AliceKey \mid \BobKey, SUCC ) & \geq &
R ( N ).
\end{eqnarray}
\end{theorem}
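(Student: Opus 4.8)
The plan is to bound the two terms of~(\ref{secstatement}) separately and then check that the gap between them is $\Omega(N)$ for a suitable constant $\epsilon_0$. The information-reconciliation term $H_0^{\delta}(\AliceKey\mid\BobKey,SUCC)$ is the easier one. The point is that the Magic Square winning condition forces agreement of the raw keys on every instance that is won: if instance $j$ is won then $(A_j)_{Y_j}=(B_j)_{X_j}$, i.e., $R_j=S_j$. Since the permutation $F$ is generated and broadcast in Step~3 only after the devices have produced all of $A$ and $B$, the $\epsilon^2 N$ tested instances are a uniformly random subset of the $N$ instances, chosen independently of the devices' behavior; a sampling bound (for example Serfling's inequality for sampling without replacement) then shows that, except on an event of probability $2^{-\Omega(N)}$, occurrence of $SUCC$ forces all but a $C\epsilon$-fraction of the $N$ instances to be won, and hence (by a second application of the same bound to the sub-block $\{1,\dots,\epsilon N\}$) all but a $C'\epsilon$-fraction of the key positions to satisfy $R_j=S_j$. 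Absorbing the bad event into the smoothing parameter, Bob recovers $\AliceKey$ from $\BobKey$ by being told which positions disagree, so $H_0^{\delta}(\AliceKey\mid\BobKey,SUCC)\le \epsilon N\cdot h(C'\epsilon)+O(\log N)$, where $h$ is the binary entropy function.

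For the secrecy term $H_{min}^{\delta}(\AliceKey\mid\Eve,SUCC)$, note first that $\Eve$ already holds the broadcast bits $R_1,\dots,R_{\epsilon^2 N}$, so it suffices to lower-bound the min-entropy of the remaining $\epsilon(1-\epsilon)N$ bits $R_{\epsilon^2 N+1},\dots,R_{\epsilon N}$ conditioned on $E$, the revealed inputs $X_1,\dots,X_{\epsilon N},Y_1,\dots,Y_{\epsilon N}$, the permutation $F$, and the revealed outputs $(R_j,S_j)_{j\le\epsilon^2 N}$. I would reformulate the goal operationally: show that an adversary holding exactly this data predicts all of these bits, while simultaneously the devices pass the test, with probability at most $2^{-c\epsilon N}$ for a universal $c>0$; with $\delta\in 2^{-\Omega(N)}$ chosen appropriately this gives $H_{min}^{\delta}(\AliceKey\mid\Eve,SUCC)\ge c\epsilon N-O(\log N)$. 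It is likely cleanest to pass through the collision (R\'enyi-$2$) entropy here, since $H_{min}\ge \tfrac12 H_2$ (with an analogous smoothed inequality) and the two-copy quantity controlling $H_2$ is a quadratic functional of the device's state, which is far more tractable under parallel composition than the min-entropy itself.

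The crux, then, is a parallel monogamy-of-entanglement bound for the Magic Square game, and this is where I expect the main difficulty to lie. The single-copy ingredient is standard and robust: since the optimal Magic Square strategy is a pure state consisting of two maximally entangled pairs, the (robust) rigidity of the game implies that, up to a local isometry, any purifying system held by the adversary is within trace distance $O(\sqrt\eta)$ of a tensor factor when the instance is won with probability $1-\eta$, so the bit $(A)_Y$ is $(\tfrac12\pm O(\sqrt\eta))$-unpredictable to the adversary even given $Y$. The hard part is lifting this to the parallel setting: in the sequential model one would invoke an entropy-accumulation or Azuma-type argument, using the round ordering to forbid the device from shifting entropy between rounds, but with all instances played at once there is no such ordering, and a priori the device could play some instances badly in order to correlate the key bits of others with the adversary. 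The two mechanisms that must compensate for this are (i) the random permutation $F$, which denies the device any knowledge of which instances will be tested or used for key and thereby forces it to behave well uniformly across instances, letting the sampling argument above convert ``test passed'' into ``all but a $C\epsilon$-fraction of instances are (conditionally) near-optimal''; and (ii) a rigidity argument robust enough to survive parallel repetition, together with a chain-rule / two-copy argument over the key positions that turns the per-instance bound $\tfrac12+O(\sqrt\epsilon)$ into the exponential bound $2^{-c\epsilon N}$ without the error blowing up with $N$. Making these fit together --- in particular bridging ``instance $j$ was won on its realized input pair'' to ``instance $j$'s reduced strategy is near-optimal'', handling correlations between the target bit $R_j$ and the other revealed outputs, and keeping every bad-event probability below $\delta$ even after conditioning on $SUCC$ (whose probability is only guaranteed to be $\ge\delta$) --- is the substance of the proof.

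Finally, combining the two bounds yields $H_{min}^{\delta}(\AliceKey\mid\Eve,SUCC)-H_0^{\delta}(\AliceKey\mid\BobKey,SUCC)\ge (c-h(C'\epsilon))\,\epsilon N-O(\log N)$. Since $h(C'\epsilon)\to 0$ as $\epsilon\to 0$ while $c>0$ is the fixed constant coming from the single-copy analysis, we may fix $\epsilon:=\epsilon_0$ so small that $c-h(C'\epsilon_0)>0$; then the right-hand side is $R(N)=\Omega(N)$. Together with the alternative $\mathbf{P}(SUCC)<\delta$, which we set aside (the protocol then essentially never produces a key), this establishes the stated dichotomy.
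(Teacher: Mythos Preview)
Your high-level architecture --- bound the two terms separately, use sampling to control $H_0^\delta$, pass through collision entropy for the secrecy term, then pick $\epsilon_0$ small enough --- matches the paper's exactly, and your treatment of the $H_0^\delta$ term is essentially the same as the paper's (Proposition~\ref{zeroboundprop} and Lemma~\ref{succprimelemma}).

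The gap is in the secrecy bound. You correctly sense that the ``two-copy quantity controlling $H_2$'' is the right object, but you then frame the remaining work as a parallel monogamy/rigidity statement to be combined with a ``chain-rule / two-copy argument over the key positions.'' A chain rule is exactly what you do \emph{not} have in the parallel model: there is no round ordering to chain over, and the paper never attempts one. Instead, the paper makes the collision-entropy intuition literal. It shows (Proposition~\ref{collisioninter}) that $h_2(\,\cdot\,\mid E)$ equals the probability that two independent copies of Alice's and Bob's measurements, applied to the canonical purification, produce matching key bits. This turns the quantity you want to bound into the winning probability of a concrete six-player free game $\MGuess$ (Figure~\ref{magic6}) played $N$ times in parallel, with only $\epsilon N$ randomly chosen coordinates scored. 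Rigidity enters only at the level of a \emph{single} copy of $\MGuess$, to show $w(\MGuess)<1/9$ strictly (Proposition~\ref{magicprop}); the parallel lifting is then a parallel-repetition theorem for free games (Proposition~\ref{keyrandomprop}), not a rigidity or chain-rule argument. Your proposal does not contain this reduction, and without it the passage from ``per-instance unpredictability $\tfrac12+O(\sqrt{\epsilon})$'' to ``$2^{-c\epsilon N}$'' has no mechanism.

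Two smaller points. First, the inequality $H_{\min}\ge \tfrac12 H_2$ is not correct; the paper uses $H_{\min}^\delta \ge H_2 - \log(2/\delta^2)$ (Proposition~\ref{collminprop}). Second, you omit the auxiliary bit registers $T_i$ (Figure~\ref{symmagic}): the paper adds a fair coin whenever a round is lost, proves the collision-entropy bound for $R_iT_i$ jointly, and only then strips the $T_i$ using that under $WIN(2\epsilon)$ their support has size at most $2^{\epsilon N H(2\epsilon)}$. This trick is what makes the $\MGuess$ reduction go through cleanly while still yielding robustness, and it is not visible in your outline.
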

The proof of this theorem is given in Subsection~\ref{disec}.
This theorem establishes both robustness and a linear rate for $\MagicQKD$.  (The data $\epsilon_0, \delta, R$ are
are not given explicitly and are left for future work.)

We note that in the protocol we have assumed that all entanglement shared by the devices $(D_1, D_2)$ is shared
before the protocol begins.  Practically this may be difficult, since it may require a quantum memory size that grows with $N$.  A model
which requires less quantum memory is shown in Figure~\ref{receiveent}, where the entanglement is periodically
updated during step $1$ of $\MagicQKD$ from an outside entanglement source.  (The source and its channels are
both untrusted, and the only assumption is that the communication is one-way.)  Fortunately this case is also covered
by our analysis: a device which behaves as in Figure~\ref{receiveent} is equivalent to one in which all transmissions
from the entanglement source are sent in advance, and are stored in the components $D_1$ and $D_2$.
This illustrates the generality of the parallel model.

If we measure time by the number of prepare-and-measure steps executed by the devices, then a speed-up over
sequential DI-QKD occurs in Figure~\ref{receiveent} if the devices are capable of winning multiple rounds of the Magic Square game at a single iteration.
Quantifying how this speed-up affects the key rate (and also how it increases demands on the devices) is a topic for further research.

\begin{figure}[!h]
\begin{center}
\begin{tikzpicture}
\node[style=rectangle, draw=black] (ent) at (0,0) {\tiny $\begin{array}{c} \textnormal{entanglement} \\ \textnormal{source} \end{array}$};
\node[style=rectangle, draw=black] (d1) at (3,-1) {$D_1$};
\node[style=rectangle, draw=black] (d2) at (3,1) {$D_2$};
\draw[dashed, very thick, ->] (ent) to (d1);
\draw[dashed, very thick, ->] (ent) to (d2);
\end{tikzpicture}
\end{center}
\caption{A device model in which Alice's and Bob's device receive entanglement from an external source.  The dashed arrows indicate
public one-way communication.}
\label{receiveent}
\end{figure}

\subsection{Security analysis and proof techniques}
\label{secsubsec}

In order to achieve secure parallel DI-QKD, there are two challenges that must be met
simultaneously.  The first is that the parallel scenario opens up the possibility
of correlated cheating strategies by the devices (with correlations going both
``forward'' and ``backward'' between rounds) and one must show a linear amount
of entropy in the key bits despite such strategies.  The second is that the linear rate of
entropy in the raw key must not only be positive; it must be larger than the amount
of entropy that is lost during information reconciliation.

To meet these challenges we made two specific choices in $\MagicQKD$, which differentiate
our protocol from protocols for sequential DI-QKD.  The first is that we use the Magic Square
game, which has special properties for parallel DI-QKD (discussed below).  The second is that the raw keys are only computed
from a randomly chosen subset of the rounds.  This allows us to decrease the amount 
of information that is revealed to Eve during the protocol, and is a necessary assumption for our security proof.

The central challenge when moving from the sequential setting to the parallel
setting is the possibility of new correlations in the behavior of $D_1$ and $D_2$
on separate games.  These correlations can have counter-intuitive properties:
for example, Fortnow gave an example of a two-player game $G$ such that 
$w_c (G^2 ) > w_c ( G )^2$, where $w_c$ denotes the optimal score for classical players
and $G^2$ denotes the game $G$ repeated twice in parallel (see Appendix
A in \cite{Holenstein:2007}).  The same could not
be true in the sequential setting -- the optimal score for $G$ repeated twice in sequence
must be exactly $w_c ( G )^2$.  Thus the parallel assumption opens up new demands for
cheating and requires new techniques.

A technique that has been highly successful for the parallel repetition problem
is based on bounding the amount of information
that players learn about one another's inputs when we condition on events that
depend on a limited number of other games \cite{Raz:1998}. This technique was brought
into the quantum context in \cite{Chailloux:2014, Jain:2014, Chung:2015, Bavarian:2015arXiv},
and allows the proofs of exponentially vanishing upper bounds for the quantum winning probability of $G^N$ for certain broad classes 
of games.  A useful consequence of this technique, which is implicit in \cite{Chailloux:2014, Jain:2014, Chung:2015, Bavarian:2015arXiv}, is that for some games $G$ the behavior of parallel
players on a randomly chosen subset of rounds cannot be much better than the behavior of sequential players.

We apply this technique for parallel repetition to prove security for $\MagicQKD$.  
Specifically, we show that the collision entropy
$H_2 ( \AliceKey\mid \Eve )$ (which, as a well known fact, provides a lower bound on $H_{min}^\epsilon ( \AliceKey \mid \Eve )$) can be expressed in terms of  the winning probability of the ``doubled'' version of the Magic Square game (\MGuess) shown in Figure~\ref{magic6}.
In this expanded game, players Alice, Bob, Alice$'$, and Bob$'$ try to win the Magic Square game
while also trying to guess one another's inputs and outputs.  By the techniques of
\cite{Chailloux:2014, Jain:2014, Chung:2015, Bavarian:2015arXiv}, the probability of winning this doubled game
on $\epsilon N$ randomly chosen rounds in an $N$-fold parallel repetition is not much more than
the probability of winning $\epsilon N$ instances of the games independently.  This fact is the basis for our security claim.

We  also make use of a technique from sequential device-independent
quantum cryptography \cite{Miller:2016, Dupuis:2016}: each time players who are generating random numbers fail
at a single instance of a game, we introduce additional artificial randomness to
compensate for the failure (here represented
by the register $T$ in Figure~\ref{symmagic}).  This artificial randomness (which is useful
for induction) is used only for intermediate steps in the proof and is removed before the final security claim.
This aspect of the proof is important for proving noise tolerance in $\MagicQKD$.

We note that our proof makes use of all of the following properties
of the Magic Square game: (1) it is perfectly winnable by a quantum strategy, (2) its input distribution is uniform, and
 (3) an optimal strategy yields perfectly correlated random bits between Alice and Bob.  (As a consequence of (3),
 there is a positive rate of min-entropy in the raw key bits in $\MagicQKD$, while the communication cost for information reconciliation tends to $0$ when the noise tolerance is lowered, thus enabling a positive key rate.)  The Magic Square game is the simplest game
 that we know of which satisfies all of these properties.  A natural next step
is to study which other games can be used for parallel DI-QKD.
 
After our result was publicized, Thomas Vidick~\cite{Vidick:2017} sketched an alternate proof of DI-QKD,
using a strengthened parallel repetition result that appeared after our result \cite{Bavarian:2015}.  
Vidick's approach uses the class of ``anchored'' games introduced in 2015 \cite{Bavarian:2015arXiv}.
With this approach one can replace  Alice$'$ and Bob$'$ in $\MGuess$ with a single party, and a lower bound on $H_{min}$ (rather than $H_2$) follows via parallel repetition. The protocol in \cite{Vidick:2017} is a version of our protocol which retains the crucial features discussed above.  A comparison between
the rates achieved by these two approaches is a topic for further research.

\medskip

\paragraph{Organization.} Section~\ref{notationsec} establishes notation for our proofs.  Section~\ref{collisionsec} provides
the basis for our interpretation of collision entropy as the winning probability of a ``doubled'' game.
Section~\ref{guesssec} defines the doubled Magic Square game and proves an upper bound
on its winning probability.  Section~\ref{keysec} gives the proof of the central 
security claims.  The appendix proves supporting propositions, including the parallel
repetition result derived from \cite{Chailloux:2014, Jain:2014, Chung:2015, Bavarian:2015arXiv}.

\section{Notation and Preliminaries}

\label{notationsec}

Some of the notation in this section is based on \cite{Tomamichel:2016}. 
If $T$ is a finite set, let $\Perm ( T )$ denote the set of permutations of $T$.   
If $t \in T$, then we write 
can $T \smallsetminus \{ t \}$ to denote the complement of $t$, or if the set $T$ is understood from
the context, we simply write $\widehat{t}$ for $T \smallsetminus \{ t \}$. 

Let $\mathcal{D} ( T )$ denote the set of probability distributions
on the finite set $T$, and let $\mathcal{S} ( T )$ denote the set of subnormalized probability distributions.
If $p, q \in \mathcal{S} ( T )$ are subnormalized distributions let 
\begin{eqnarray}
\Delta ( p, q ) & = & \frac{1}{2} \left( \sum_{t \in T} \left| p ( t ) - q ( t ) \right| + \left| x - y \right| \right)
\end{eqnarray}
where $x := \sum_{t \in T} p ( t)$ and $y := \sum_{t \in T} q ( t )$ respectively.
The function $\Delta$ is a metric on $\mathcal{S} ( T )$.

If $x_1, \ldots, x_N$ and
$y_1, \ldots, y_N$ are binary sequences, let $\Ham ( \mathbf{x} , \mathbf{y} )$ denote the Hamming
distance between $\mathbf{x}$ and $\mathbf{y}$.  The following lemma will be useful in a later proof.
For any $t \in [0, 1]$, let $H ( t )$ denote the Shannon entropy quantity: $H(t)  = - t \log t - (1-t) \log (1-t)$.

\begin{proposition}
\label{hammingprop}
For any $c \in  [0, 1/2]$ and any positive integer $N$, let $L_{c, N}$ denote
the number of $N$-length binary strings whose sum is less than or equal to $cN$.  Then,
$L_{c,N} \leq 2^{N H ( c ) }$.
\end{proposition}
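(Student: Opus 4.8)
The plan is to use the standard first-moment (entropy) bound on binomial tails. Write $L_{c,N} = \sum_{k=0}^{m} \binom{N}{k}$ where $m = \lfloor cN \rfloor$. First I would dispose of the degenerate case $c = 0$: then $m = 0$, $L_{0,N} = 1$, and $H(0) = 0$ under the usual convention $0 \log 0 = 0$, so the claim holds with equality. From now on assume $c > 0$.

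The heart of the argument is to bound each binomial coefficient appearing in the sum against a corresponding term of the expansion $1 = (c + (1-c))^N = \sum_{k=0}^N \binom{N}{k} c^k (1-c)^{N-k}$. For each integer $k$ with $0 \le k \le m \le cN$, consider the weight $c^k (1-c)^{N-k} = (1-c)^N \big( \tfrac{c}{1-c} \big)^k$. Since $c \le 1/2$ we have $c/(1-c) \le 1$, so this weight is non-increasing in $k$; hence for every $k \le cN$,
\[
c^k (1-c)^{N-k} \ \ge\ (1-c)^N \Big( \tfrac{c}{1-c} \Big)^{cN} \ =\ c^{cN} (1-c)^{(1-c)N} \ =\ 2^{-N H(c)},
\]
using the definition $H(c) = -c \log c - (1-c)\log(1-c)$ with logarithm to base $2$. (The one point that deserves a second look is that the exponent $cN$ here need not be an integer; but the monotonicity inequality $(\,c/(1-c)\,)^k \ge (\,c/(1-c)\,)^{cN}$ holds for every real $k \le cN$, so this causes no difficulty.)

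Combining these estimates,
\[
1 \ =\ \sum_{k=0}^N \binom{N}{k} c^k (1-c)^{N-k} \ \ge\ \sum_{k=0}^m \binom{N}{k} c^k (1-c)^{N-k} \ \ge\ 2^{-N H(c)} \sum_{k=0}^m \binom{N}{k} \ =\ 2^{-N H(c)} L_{c,N},
\]
and rearranging yields $L_{c,N} \le 2^{N H(c)}$, as claimed. I do not expect a genuine obstacle here: the only step worth verifying carefully is the monotonicity of $c^k (1-c)^{N-k}$ in $k$, which is precisely where the hypothesis $c \le 1/2$ enters; everything else is routine bookkeeping.
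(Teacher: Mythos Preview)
Your argument is correct and is exactly the standard entropy bound on the binomial tail. The paper itself does not give a proof but simply cites Theorem~1.4.5 of \cite{van:1998}; the proof you have written is essentially the one that appears there, so there is nothing to compare.
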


\begin{proof}
We have $L_{c, N} = \sum_{0 \leq i \leq cN} {{N}\choose{i}}$.  The desired inequality
is given in Theorem 1.4.5 in \cite{van:1998}.
\end{proof}

\subsection{Quantum states and operations}

A \textit{quantum register} (or simply \textit{register}) is a finite-dimensional complex Hilbert space with
a fixed orthonormal basis.
We use Roman letters (e.g., $B$) to denote quantum registers.  Given two quantum registers $Q, Q'$,
we will sometimes write $Q Q'$ for the tensor product $Q \otimes Q'$.

If $\mathcal{S}$ is a finite set, an \textit{$\mathcal{S}$-valued quantum register}
is quantum register that has a fixed isomorphism with $\mathbb{C}^\mathcal{S}$.
If $Q$ is a quantum register, let $\mathcal{L} ( Q ), \mathcal{H} ( Q ) , \mathcal{P} ( Q ),
\mathcal{S} ( Q)$, and $\mathcal{D} (  Q  )$, denote, respectively, the sets of
linear, Hermitian, positive semidefinite, subnormalized positive semidefinite (trace $\leq 1$)
and normalized positive semidefinite operators on $Q$.
A \textit{state}
of $Q$ is an element of $\mathcal{D} ( Q )$.  Elements of
$\mathcal{S} ( Q )$ are referred to as \textit{subnormalized states} of $Q$.
A \textit{reflection}
is a Hermitian operator whose eigenvalues are contained in $\{ -1 , 1 \}$.

For any quantum register $Q$, the symbol $I^Q$ denotes the identity operator on $I$, and
$U^Q$ denotes the completely mixed state $I^Q / (\textnormal{dim} ( Q ) )$.

If $Q, Q'$ are quantum registers, the set $\mathcal{L} ( Q )$ has a natural embedding into $\mathcal{L} ( Q \otimes Q')$
by tensoring with $I_{Q'}$. 
We use this embedding implicitly: if $T \in \mathcal{L} ( Q )$ and $\Phi \in \mathcal{D} ( Q \otimes Q' )$,
then $T ( \Phi )$ denotes $(T \otimes I_{Q'}) \Phi$.

Note that if $Q$ is a $\mathcal{Q}$-valued register and $R$ is an $\mathcal{R}$-valued register, then
any function $f \colon \mathcal{Q} \to \mathcal{R}$ determines a process
from $Q$ to $R$ via
\begin{eqnarray}
Z \mapsto \sum_{r \in \mathcal{R} } \left| r \right> \left< r \right| \cdot \left< q \right| Z \left| q \right>.
\end{eqnarray}
We may denote this process by the same letter, $f$.

A \textit{copy} of a register $Q$ is a register $Q'$ with the same dimension with a fixed
isomorphism $Q' \cong Q$.  If $\Gamma \in \mathcal{P} ( \Gamma )$ is a state, then the \textit{canonical purification}
of $\Gamma$ is the projector $\Phi$ on $Q \otimes Q'$ onto the one-dimensional space spanned by $\left( \sqrt{\Gamma} \otimes I_{Q'} \right) \left( \sum_i e_i \otimes e_i \right) \in Q \otimes Q'$,
where the sum is taken over all standard basis elements $e_i$.  We then have $\Phi^Q = \Gamma$ and
$\Phi^{Q'} = \Gamma^\top = \overline{\Gamma}$ under the fixed isomorphism $Q \cong Q'$.

A \textit{measurement} on a register $Q$ is an indexed set $\{ M_i \}_{i \in \mathcal{I}} \subseteq \mathcal{P} ( Q )$ which sums to the
identity.  A \textit{measurement strategy} on $Q$ is a collection of measurements on $Q$ that all have the same index set.

We will use lower case Greek letters (e.g., $\gamma$) to denote complex vectors,
and either uppercase Greek letters (e.g., $\Gamma$) or Roman letters to denote Hermitian operators on finite-dimensional Hilbert spaces.
If $\Gamma$ is a Hermitian operator on a tensor product space $W \otimes V$, then 
$\Gamma^V$ denotes the operator
\begin{eqnarray}
\Gamma^V & := & \Tr_W \Gamma.
\end{eqnarray}
Alternatively we may write $\Gamma^{\widehat{W}}$ for $\Tr_W \Gamma$.
If $T$ is a projector on $W$, let
\begin{eqnarray}
\Gamma_T & = & ( I \otimes T ) \Gamma ( I \otimes T )
\end{eqnarray}
and if $\Tr ( \Gamma_T ) \neq 0$, let $\Gamma_{\mid T} = \Gamma_{T} / \Tr ( \Gamma_T)$.

If $R$ is a register whose values are real numbers, and $\psi$ is a classical state
of $R$, then $\mathbb{E}_\psi [R]$ denotes the expected value of $R$.  
If $\mu$ is a probability distribution on a finite set $\mathbb{Z}$, and $f \colon \mathbb{Z} \to \mathbb{R}$
is a function, then $\mathbb{E}_{z \leftarrow \mu} [f ( z )]$ denotes the expected value of $f ( z )$ under $\mu$.

If $\Phi$ is a positive semidefinite operator, then $\Phi^r$ denotes the operator
that arises from applying the function
\begin{eqnarray}
f ( x ) & = & \left\{ \begin{array}{rl} x^r & \textnormal{ if } x > 0 \\
0 & \textnormal{ if } x = 0. \end{array} \right. 
\end{eqnarray}
to the eigenvalues of $\Phi$.

We make free use of the following shorthands.  If $x_1, \ldots, x_N$ is a sequence,
then the boldface letter $\mathbf{x}$ denotes $(x_1, \ldots, x_N )$.  If
$X_1, \ldots, X_N$ are quantum registers, then $X$ denotes $X_1 X_2 \cdots X_N$.
We write $X_{i \ldots j}$ for the registers $X_{i} X_{i+1} \cdots X_j$.  
If $\{ Y_i^j \}$ is an array of registers, then $Y_i = \{ Y_i^j \}_j$ and $Y^j = \{ Y_i^j \}_i$.
The expression $X_{\widehat{i}}$ denotes the set $\{ X_k \}_{k \neq i}$.

\subsection{Distance measures}

 If $\Gamma_1, \Gamma_2 \in \mathcal{D} ( Q )$
for some quantum register $Q$, let
\begin{eqnarray}
\Delta ( \Gamma_1, \Gamma_2 ) & = & \frac{1}{2} \left\| \Gamma_1 - \Gamma_2 \right\|_1 \\
F ( \Gamma_1 , \Gamma_2 ) & = & \left\| \sqrt{\Gamma_1} \sqrt{\Gamma_2} \right\|_1 \\
P ( \Gamma_1 , \Gamma_2 ) & = & \sqrt{1 - F ( \Gamma_1 , \Gamma_2 )^2 }.
\end{eqnarray}
For $\Lambda_1, \Lambda_2 \in \mathcal{S} (Q )$, let $\left[ \Lambda_i \right]$ denote the density
operator\footnote{That is, the operator on $Q \oplus \mathbb{C}$ given by $\left[ \begin{array}{cc} \Lambda_i  \\ & 1 - \Tr ( \Lambda_i ) \end{array}
\right]$.} $\Lambda_i \oplus (1 - \Tr ( \Lambda_i) )$.  Let
\begin{eqnarray}
\Delta ( \Lambda_1 , \Lambda_2 ) & = & \Delta ( \left[ \Lambda_1 \right] , \left[ \Lambda_2 \right] ) \\ 
F ( \Lambda_1 , \Lambda_2 ) & = & F ( \left[ \Lambda_1 \right] , \left[ \Lambda_2 \right] ) \\ 
P ( \Lambda_1 , \Lambda_2 ) & = & P ( \left[ \Lambda_1 \right], \left[ \Lambda_2 \right] ) 
\end{eqnarray}
The functions $P$ (purified distance) and $\Delta$ (generalized trace distance) are metrics on $\mathcal{S} ( Q )$, and $\Delta \leq P \leq \sqrt{2 \Delta}$.
If $\Lambda_1$ and $\Lambda_2$ are both pure, then $P = \Delta$.
Both quantities $P$ and $\Delta$ satisfy data processing inequalities. (See Chapter 3
in \cite{Tomamichel:2016}).

\subsection{Games}

An \textit{$n$-player nonlocal game} $G$ with input alphabets $\mathcal{X}^1, \ldots , \mathcal{X}^n$
and output alphabets $\mathcal{A}^1, \ldots, \mathcal{A}^n$ is a probability distribution
\begin{eqnarray}
p \colon \prod_i \mathcal{X}^i \to [0, 1 ]
\end{eqnarray}
together with a predicate
\begin{eqnarray}
L \colon \prod_i \mathcal{X}^i \times \prod_i \mathcal{A}^i \to \{ 0, 1 \}.
\end{eqnarray}
Such a game is \textit{free} if $p$ is a uniform distribution.
Let $G^N$ denote the $N$-fold parallel repetition of $G$ (i.e., the game with input alphabets $(\mathcal{X}^i)^N$, output alphabets
$(\mathcal{A}^i)^N$, probability distribution $p ( \mathbf{x} ) = p ( \mathbf{x}_1) \cdot \ldots \cdot p ( \mathbf{x}_n )$, and
predicate $L ( \mathbf{x}, \mathbf{a} ) = \bigwedge_{i=1}^N L ( \mathbf{x}_i, \mathbf{a}_i )$).

A \textit{measurement strategy} for a game $G$ is a family $\{ \{ M_{\mathbf{a} \mid \mathbf{x}} \}_{\mathbf{a}} \}_{\mathbf{x}}$ of $\mathcal{A}$-valued measurements, indexed
by $\mathcal{X}$, on a quantum register $Q = Q_1 \otimes \cdots \otimes Q_n$, where each measurement
operator $M_{\mathbf{a} \mid \mathbf{x}}$ is given by
\begin{eqnarray}
M_{\mathbf{a} \mid \mathbf{x}} = M^1_{a^1 \mid x^1} \otimes \ldots \otimes M^n_{a^n \mid x^n}
\end{eqnarray}
where $\{ M^i_{a^i \mid x^i} \}_{a^i}$ is  a measurement on $Q^i$.

\begin{figure}[h]
\fbox{\parbox{6in}{\textbf{The Parallel Repetition Process ($\Par$)} \\ \\
\textit{Parameters:} \\
\begin{tabular}{rl}
$N$: & A positive integer \\
$G$: & An $n$-player game with input alphabet $\mathcal{X} = \mathcal{X}^1 \times \ldots \times \mathcal{X}^n$ \\ & and output alphabets
$\mathcal{A} = \mathcal{A}^1 \times \ldots \times \mathcal{A}^n$ \\
$\{ M_{\mathbf{a} \mid \mathbf{x}} \}_{\mathbf{a}, \mathbf{x}}$:
& A measurement strategy for $G^N$ on an $n$-partite register $C^1 \cdots C^n$ \\
$\Phi$: & A state of $C^1 \ldots C^n$.
\end{tabular} \\ \\
\textit{Registers:} \\
\begin{tabular}{rl}
$\{ C^k \mid 1 \leq k \leq n \}$: & Quantum registers (for players $1, 2, \ldots, n$, respectively) \\
$\{ X_j^k \mid 1 \leq j \leq N, 1 \leq k \leq n \}$: & Input registers (where $X_j^k$ is $\mathcal{X}^k$-valued) \\
$\{ A_j^k \mid 1 \leq j \leq N, 1 \leq k \leq n \}$:  & Output registers (where $A_j^k$ is $\mathcal{A}^k$-valued) \\
$\{ W_j \mid 1 \leq j \leq N \}$: & Score registers (bit-valued).
\end{tabular}
\\ \\
\textit{Procedure:} 
\begin{enumerate}
\item Prepare $C$ in state $\Phi$.

\item For each $j \in \{ 1, 2, \ldots, N \}$, choose $( x_j^1 , \ldots, x_j^n ) = ( X_j^1 , \ldots, X_j^n )$ at random according to
the input distribution for game $G$.  

\item For each $i$, apply the measurement $\{ M_{a^i \mid x^i} \}_{a^i}$ to the system $C_i$
and record the result in the registers $(A^i_1, \ldots, A^i_N )$.

\item Let $W_i = 1$ if the $i$th game has been won, and let $W_i = 0$ if the $i$th game has been lost.

\item Choose a permutation $\sigma \in \Perm \{ 1, 2, \ldots, N \}$ uniformly at random.
Apply the permutation $\sigma$ to the registers $\{ X_j \}$, the registers $\{ A_j \}$, and the registers $\{ W_j \}$.
\end{enumerate}
}}
\caption{A process defining the parallel repetition of a game.}
\label{parfig}
\end{figure}
It is helpful to describe a parallel repeated game as a process.  In Figure~\ref{parfig}, we introduce the parallel repetition
process $\Par ( N, G, \mathbf{M}, \Phi )$ associated to a game $G$.  The process $\Par$ includes
a final step which shuffles the different instances of the game according to a randomly chosen permutation.

For any $G$, let $w ( G )$ denote the supremum
quantum score of $G$ (i.e., the supremum of $\mathbf{P} ( W_1 = 1 )$ 
in $\Par ( 1, G , \mathbf{M}, \Phi )$ taken over all initial states $\Phi \in \mathcal{D} ( C )$ and all measurements strategies $\mathbf{M}$).

We will typically refer to states arising from processes as follows: the initial state
will be denoted by $\Gamma^0$, and 
$\Gamma^i$ will refer to the state that occurs after step $i$.  The symbol $\Gamma$ will denote the final state.

The following proposition asserts that if $G$ is a free game, then the winning probability in a small number of
rounds in $\Par$ is not much better than that which could be achieved by sequential players.  This fact is implicit in the
entropy approach to parallel repetition given in \cite{Chailloux:2014, Jain:2014, Chung:2015, Bavarian:2015arXiv}.
Since we are not aware of a statement in the literature in the form that we will need, we have given a proof in 
Appendix~\ref{parsec} (see Theorem~\ref{parallelthm}).

\begin{proposition}
\label{keyrandomprop}
Suppose that $G$ is a free nonlocal game.  Then, the registers $W_1, \ldots, W_N$ at the conclusion of process $\Par$
satisfy
\begin{eqnarray}
\label{keyrandineq}
\mathbf{P} ( W_1 = W_2 = \ldots = W_k =1) & \leq & \left[ w(G ) + O_G ( \sqrt{k/N } ) \right]^k.
\end{eqnarray}
for any $k \in \{ 1, 2, \ldots, N \}$.  \qed
\end{proposition}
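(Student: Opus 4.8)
The plan is to run the chain-rule-plus-information argument that underlies the entropy approach to parallel repetition \cite{Chailloux:2014, Jain:2014, Chung:2015, Bavarian:2015arXiv}, making essential use of the uniformly random permutation $\sigma$ applied in the last step of $\Par$. If $k > N/2$ one can choose the constant in $O_G$ so that the right-hand side of (\ref{keyrandineq}) is at least $1$, and there is nothing to prove; so assume $k \leq N/2$. If $\mathbf{P}(W_1 = \cdots = W_{i-1} = 1) = 0$ for some $i \leq k$ then the left-hand side vanishes; otherwise we factor
\begin{eqnarray}
\mathbf{P}( W_1 = \cdots = W_k = 1 ) & = & \prod_{i=1}^{k} \mathbf{P}\big( W_i = 1 \mid W_1 = \cdots = W_{i-1} = 1 \big).
\end{eqnarray}
Because $\sigma$ is uniform, for each $i$ the indices carried by the registers $W_1, \ldots, W_{i-1}$ in the final state $\Gamma$ form a uniformly random $(i-1)$-element subset $S$ of $\{1, \ldots, N\}$ of the \emph{pre-permutation} process, and the index carried by $W_i$ is uniformly random outside of $S$; hence $\mathbf{P}(W_i = 1 \mid W_1 = \cdots = W_{i-1} = 1)$ is a convex combination over such $S$ of the quantities $\mathbb{E}_{t \notin S}\, \mathbf{P}(\textnormal{game }t\textnormal{ won}\mid \mathcal{E}_S)$, where $\mathcal{E}_S$ is the event in the pre-permutation process that every game indexed by $S$ is won. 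It thus suffices to show, for every such $S$ with $\mathbf{P}(\mathcal{E}_S) > 0$,
\begin{eqnarray}
\mathbb{E}_{t \notin S}\, \mathbf{P}(\textnormal{game }t\textnormal{ won}\mid \mathcal{E}_S) & \leq & w(G) + O_G(\sqrt{k/N}).
\end{eqnarray}

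The heart of the argument is to show that conditioning on $\mathcal{E}_S$ barely perturbs the effective single-round strategy at a typical coordinate $t \notin S$. Since $G$ is free, the input registers are a priori independent and uniform across all rounds and all players, so the only correlations that $\mathcal{E}_S$ --- a deterministic function of the inputs $X_S$ and outputs $A_S$ in the rounds of $S$ --- can induce between the inputs outside $S$ and the conditioning data are limited by $H(A_S \mid X_S) \leq |S| \log|\mathcal{A}| = O_G(|S|)$. A chain rule across the $N - |S|$ remaining coordinates then shows that for an average $t \notin S$ the mutual information between $(X_t^1, \ldots, X_t^n)$ and the conditioning data in the state $\Gamma_{\mid \mathcal{E}_S}$ is $O_G(|S|/N)$; hence, by (a quantum version of) Pinsker's inequality, $\Gamma_{\mid \mathcal{E}_S}$ is, for an average $t$, within trace distance $O_G(\sqrt{|S|/N}) \leq O_G(\sqrt{k/N})$ of a state in which $(X_t^1, \ldots, X_t^n)$ is uniform, independent across players, and decoupled from the players' registers $C^1, \ldots, C^n$. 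On such a ``repaired'' state the players' coordinate-$t$ measurements --- together with shared randomness used to supply the inputs of the remaining coordinates, as in the correlated-sampling step of \cite{Chailloux:2014, Jain:2014, Chung:2015, Bavarian:2015arXiv} --- constitute a legitimate strategy for a single copy of $G$, whose winning probability is at most $w(G)$. Transferring back along the distance bound gives the displayed inequality, and multiplying the $k$ factors yields (\ref{keyrandineq}). This is precisely what Theorem~\ref{parallelthm} in Appendix~\ref{parsec} carries out.

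The step I expect to be the main obstacle is the quantum ``state repair'': unlike the classical case, one cannot literally condition a shared quantum state on measurement outcomes while keeping the strategy in product form, so the argument must go through the quantum correlated-sampling / one-shot reconstruction machinery of the cited works, and it must do so while (i) controlling the error by $|S|/N$ --- which, together with $|S| = i - 1 < k \leq N/2$, is what makes the bound $O_G(\sqrt{k/N})$ uniform in $i$ --- and (ii) preserving the $n$-partite product structure of the coordinate-$t$ measurement, so that the reconstructed object is genuinely an $n$-player strategy for $G$. Once this lemma is in hand, the remaining ingredients --- the chain rule, the symmetrization over $\sigma$, and the final product estimate --- are routine.
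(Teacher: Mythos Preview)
Your high-level plan matches Appendix~\ref{parsec}: factor via the chain rule, exploit the random permutation to symmetrize, bound the information the other players have about $X^i_{t+1}$, and repair the state to extract a single-round strategy. The gap is in the information-theoretic step, and it is not the quantum state repair that you flag as the obstacle.

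You assert that the correlation conditioning on $\mathcal{E}_S$ can induce between the inputs outside $S$ and the remaining data is bounded by $H(A_S\mid X_S)\le |S|\log|\mathcal{A}|$, and hence that the per-coordinate mutual information is $O_G(|S|/N)$ \emph{for every} $S$ with $\mathbf{P}(\mathcal{E}_S)>0$. This is not justified: the bound $I(X^i_{[N]\setminus S}:X_S,A_S)\le H(A_S\mid X_S)$ controls conditioning on the \emph{random variable} $(X_S,A_S)$, not on the \emph{event} $\mathcal{E}_S$. Passing from the former to the latter costs a factor $1/\mathbf{P}(\mathcal{E}_S)$ (equivalently, conditioning on an event of probability $p$ can shift entropies by as much as $\log(1/p)$). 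If $\mathbf{P}(\mathcal{E}_S)$ happens to be exponentially small in $N$, your per-$S$ inequality $\mathbb{E}_{t\notin S}\,\mathbf{P}(\textnormal{game }t\textnormal{ won}\mid \mathcal{E}_S)\le w(G)+O_G(\sqrt{k/N})$ can simply fail, and the product does not bound itself factor by factor.

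The paper fixes exactly this by working with the averaged event $W_{1\ldots t}=\mathbf{1}$ (rather than individual $S$) and by inserting a threshold case split inside an induction on $t$ (Theorem~\ref{parallelthm}). Lemma~\ref{subuniformlemma} bounds the entropy defect of $X^i$ under conditioning by $2\log\bigl(1/\mathbf{P}(W_{1\ldots t}=\mathbf{1})\bigr)$; if $\mathbf{P}(W_{1\ldots t}=\mathbf{1})\ge w(G)^{2t}$ this is $O(t)$, and Propositions~\ref{infopossprop}--\ref{infopossprop2} (via Proposition~\ref{approxgameprop}) give $\mathbf{P}(W_{t+1}=1\mid W_{1\ldots t}=\mathbf{1})\le w(G)+O(\sqrt{t/N})$. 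If instead $\mathbf{P}(W_{1\ldots t}=\mathbf{1})<w(G)^{2t}$, then already $\mathbf{P}(W_{1\ldots t+1}=\mathbf{1})<w(G)^{2t}\le w(G)^{t+1}$ and the induction closes trivially. Your sketch handles only the degenerate case $\mathbf{P}=0$; replacing the $H(A_S\mid X_S)$ justification by the $\log(1/\mathbf{P})$ bound and adding this case split is what is required.
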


For our purposes, it is crucial not only that the bound in (\ref{keyrandineq}) is an exponential function, but also
that its base approaches $w(G)$ as $k/N$ approaches zero.

\section{Entropy quantities}

\label{collisionsec}

\begin{definition}
\label{entropydef1}
Let $QR$ be a bipartite quantum register, and let $\Gamma$ be a subnormalized state of $QR$.  Then,
\begin{eqnarray}
h_{min} ( Q \mid R )_{\Gamma} & = & \min_{\substack{\sigma \in \mathcal{S} ( R ) \\
I_Q \otimes \sigma \geq \Gamma}} \Tr ( \sigma) \\
h_2 ( Q \mid R )_{\Gamma} &= & \Tr [ \Gamma (\Gamma^R)^{-1/2}  \Gamma ( \Gamma^R)^{-1/2} ].
\end{eqnarray}
Let
\begin{eqnarray}
\label{minimizationh}
h_{min}^\delta ( Q \mid R )_{\Gamma} & = & \min_{\Gamma'} h_{min} ( Q \mid R )_{\Gamma'}
\end{eqnarray}
where $\Gamma'$ varies over all subnormalized states of $QR$ that
are within distance $\delta$ from $\Gamma$ under the purified distance metric $P$. 
\end{definition}

Note that we can equivalently let the minimization in (\ref{minimizationh}) be taken only over the
the states of $QR$ that have trace no larger than $\Tr ( \Gamma )$, since if $\Tr ( \Gamma' )$ were larger than 
$\Tr ( \Gamma )$, then the scalar multiple $[ \Tr ( \Gamma ) / \Tr ( \Gamma' ) ] \Gamma'$ would be at least as close
to $\Gamma$ as was the original state $\Gamma'$ (see Lemma \ref{renormlemma}).

\begin{definition}
For any subnormalized state $\Lambda$ of a quantum register $T$, let 
\begin{eqnarray}
h ( T )_\Lambda & = & 2^{\Tr [ \Lambda \log \Lambda ]}.  
\end{eqnarray}
and let
\begin{eqnarray}
h ( Q \mid R ) & = & \frac{ h ( QR ) }{h ( R ) }.
\end{eqnarray}
\end{definition}

Additionally, we define some entropy quantities for probability distributions.

\begin{definition}
If $p$ is a probability distribution on a set $S$, let
\begin{eqnarray}
h ( S )_p & = & \prod_{s \in S} p ( s )^{p ( s )}.
\end{eqnarray}
If $q$ is a subnormalized probability distribution on a set $S \times T$, let
\begin{eqnarray}
h_0 ( S \mid T )_q & = & \left( \max_t \left| \left\{ s \in S \mid q ( s, t ) > 0 \right\} \right| \right)^{-1}.
\end{eqnarray}
Let 
\begin{eqnarray}
\label{cneighborhood}
h^\delta_0 ( S \mid T )_q & = & \max_{q'} h_0 ( S \mid T)_{q'},
\end{eqnarray}
where $q'$ varies over all subnormalized probability distributions on $S \times T$ such 
that $\Delta ( q, q' ) \leq \delta$.
\end{definition}

Similar to the definition of smooth min-entropy, in (\ref{cneighborhood}), we can equivalently assume that the minimization is taken over
distributions that are dominated by $q$ (i.e., $q' \leq q$).  For all the entropy quantities specified so far in this
subsection, we let $H_*^* ( \cdot \mid \cdot ) = - \log h_*^* ( \cdot \mid \cdot )$.  (Thus, for example,
$H_{min}^\delta ( Z \mid Y ) = - \log h^\delta_{min} ( Z \mid Y )$.)

If $\Gamma$ is a classical-quantum
state of a bipartite register $ZQ$, and $B$ is a subset of the range $\mathcal{Z}$ of $Z$, then
$\Gamma_B := \Gamma_{P_B}$, where $P \colon \mathbb{C}^\mathcal{Z} \to \mathbb{C}^\mathcal{Z}$ denotes
the projector onto the subspace spanned by $B$, and let
$\Gamma_{\mid B} = \Gamma_{\mid P_B}$. 
When the state is implicit from the context, we may write
\begin{eqnarray}
H_{min} ( Z \mid Q )_B & \textnormal{ and } & H_{min} ( Z \mid Q, B )
\end{eqnarray}
to denote, respectively, 
\begin{eqnarray}
H_{min} ( Z \mid Q )_{\Gamma_B} & \textnormal{ and } & H_{min} ( Z \mid Q)_{\Gamma_{\mid B}},
\end{eqnarray}
and we can use similar notation for the other conditional entropies defined above.

Some of the applications of these quantities are as follows.  Assume that $Z$ is a classical register.
The quantity $H_{min} ( Z \mid Y )$ (quantum conditional min-entropy) is a measure
of the number of bits that can be extracted from $Z$ in the presence of an adversary who possesses $Y$  (see, e.g.,
\cite{Tomamichel:2011}).
The quantity $H ( Z \mid Y )$ (von Neumann entropy) measures the number of bits that can be extracted in the case in which
multiple copies of the state $ZY$ are available (see Chapter 11 in \cite{Nielsen:2010}). 
The quantity
$H_2 ( Z \mid Y )$ is the conditional collision entropy.  In the case where $Y$ is a trivial register,
the quantity $H_2 ( Z \mid Y )$ is the negative logarithm of the probability that two independent samples of $Z$ will agree.
An interpretation of the case where $Y$ is nontrivial will be explained in the next subsection.  

If $Z, Y$ are classical registers with a joint distribution $q$, then
the quantity $H_0 ( Z \mid Y )$ is a measure of the minimum number of bits needed to reconstruct the state
$Y$ from $Z$.  This can be understood as follows: let $M > H_0 ( Z \mid Y )$, and let $R = \{ r \colon \mathcal{Z} \to (\mathbb{Z}_2)^M \}$
be a $2$-universal hash function family.\footnote{That is, $R$ is a family of functions such that for
any distinct $y_1, y_2 \in Y$, the probability that $r ( y_1) = r ( y_2 )$ is no more than $2^{-M}$ when $r$ is chosen
uniformly at random from $R$.}  Suppose that Alice possesses $Z = z$ and Bob possesses $Y = y$, and Alice chooses
$r \in R$ uniformly at random and reveals $r$ and $r ( z )$ to Bob.  Then, except with probability at most $2^{M - H_0 ( Z \mid Y )}$,
there will be only one value in the set $\left\{ z \mid q ( z, y ) > 0 \right\}$ which maps to $r ( z )$ under $r$, and thus
Bob can uniquely determine $z$.

Collision entropy and min-entropy are related by the following 
proposition (see subsection 6.4.1 in \cite{Tomamichel:2016}):
\begin{proposition}
\label{collminprop}
For any quantum registers $RS$, any normalized classical-quantum state $\Gamma$ of $RS$, and any $\delta > 0$,
\begin{eqnarray}
H_{min}^\delta ( R \mid S )_\Gamma \geq H_2 ( R \mid S )_\Gamma - \log ( 2 / \delta^2 ). \qed
\end{eqnarray}
\end{proposition}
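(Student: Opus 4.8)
The plan is to prove the equivalent inequality $h_{min}^{\delta}(R\mid S)_\Gamma \le (2/\delta^2)\,h_2(R\mid S)_\Gamma$ by exhibiting a single subnormalized state $\Gamma'$ with $P(\Gamma,\Gamma')\le\delta$ and $h_{min}(R\mid S)_{\Gamma'}\le(2/\delta^2)\,h_2(R\mid S)_\Gamma$, so that $\Gamma'$ is admissible in the smoothing minimization. Write $\sigma:=\Gamma^S$ and fix a threshold $\lambda>0$ to be chosen at the end. The state $\Gamma'$ is obtained from $\Gamma$ by \emph{truncation relative to $\sigma$}: one deletes from $\Gamma$ the part on which it overshoots $\lambda\,(I_R\otimes\sigma)$, so that the remaining sub-state satisfies $\Gamma'\le\Gamma$ and $\Gamma'\le\lambda\,(I_R\otimes\sigma)$. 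The second inequality is the whole point: it makes $\tau:=\lambda\sigma$ feasible in the definition of $h_{min}$, hence $h_{min}(R\mid S)_{\Gamma'}\le\Tr(\lambda\sigma)=\lambda$.

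The heart of the argument is a Markov-type bound on the deleted weight, $\Tr[\Gamma]-\Tr[\Gamma']\le h_2(R\mid S)_\Gamma/\lambda$. The intuition is transparent in the fully classical case: if $\Gamma,\sigma$ are diagonal with corresponding probabilities $p(r,s)$ and $p_S(s)$, the truncation removes exactly the atoms with $p(r,s)>\lambda\,p_S(s)$, and for each such atom $p(r,s)\le\lambda^{-1}\,p(r,s)^2/p_S(s)$; summing, the deleted mass is at most $\lambda^{-1}\sum_{r,s}p(r,s)^2/p_S(s)=\lambda^{-1}\,h_2(R\mid S)_\Gamma$. In the quantum case one runs the same estimate through the spectral projector $P$ associated with the cut, conjugating by suitable powers of $\sigma$ so that the collision operator $(I_R\otimes\sigma^{-1/4})\,\Gamma\,(I_R\otimes\sigma^{-1/4})$ enters (its squared Hilbert--Schmidt norm is exactly $h_2(R\mid S)_\Gamma$), and invoking the operator facts that $\Tr[AB]\le\lambda^{-1}\Tr[A^2]$ when $0\le\lambda B\le A$ and $\Tr[(PZP)^2]\le\Tr[Z^2]$ for $Z\ge 0$. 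With the deleted weight bounded by $\eta:=h_2(R\mid S)_\Gamma/\lambda$, I convert this to purified distance using $\Gamma'\le\Gamma$: since $\sqrt{\sqrt{\Gamma'}\,\Gamma\,\sqrt{\Gamma'}}\ge\sqrt{(\Gamma')^2}=\Gamma'$ one gets $F(\Gamma,\Gamma')=\Tr\sqrt{\sqrt{\Gamma'}\,\Gamma\,\sqrt{\Gamma'}}\ge\Tr[\Gamma']=1-\eta$, whence $P(\Gamma,\Gamma')^2=1-F(\Gamma,\Gamma')^2\le 1-(1-\eta)^2\le 2\eta=2\,h_2(R\mid S)_\Gamma/\lambda$.

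Finally, set $\lambda:=2\,h_2(R\mid S)_\Gamma/\delta^2$. Then $P(\Gamma,\Gamma')\le\delta$, so $h_{min}^{\delta}(R\mid S)_\Gamma\le h_{min}(R\mid S)_{\Gamma'}\le\lambda=2\,h_2(R\mid S)_\Gamma/\delta^2$, and taking $-\log$ gives the claim. The routine parts are the classical Markov computation and the fidelity-to-purified-distance step. The main obstacle is the noncommutative realization of the truncation: one must choose the cut so that $\Gamma'$ is simultaneously a genuine positive sub-state, is dominated by $\lambda\,(I_R\otimes\sigma)$ (for the clean $h_{min}$ bound), and has its deleted weight controlled by $h_2$ rather than by a larger Petz-type R\'enyi-$2$ quantity (for the distance bound) --- balancing these while keeping exactly the factor $2$ in $\log(2/\delta^2)$ is the delicate point, and the precise execution is the content of subsection~6.4.1 of \cite{Tomamichel:2016}, which I would follow for the bookkeeping.
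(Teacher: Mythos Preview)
The paper does not supply its own proof of this proposition: it is stated with a terminal \qed and attributed to subsection~6.4.1 of \cite{Tomamichel:2016}. Your sketch correctly outlines the truncation/Markov argument found in that reference, so your approach coincides with what the paper invokes; you have in fact provided more detail than the paper itself does.
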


\subsection{An operational interpretation of collision entropy for measurements on a pure entangled state}

If $\Gamma$ is a classical-quantum state of a register $ZY$, then a common way 
to describe $h_2 ( Z \mid Y)_{\Gamma}$ is that it is the likelihood that an adversary
who possesses $Y$ can guess $Z$ via the pretty good measurement $\{ (\Gamma^Y )^{-1/2}
\Gamma_{Z=z} ( \Gamma^Y)^{-1/2} \}_z$.  We present an alternative interpretation
which is useful for measuring the randomness obtained from measurements
on an entangled state.  The following proposition refers
to the process $\Guess$ shown in Figure~\ref{guessprocess}.

\begin{figure}[h]
\fbox{\parbox{5.8in}{\textbf{The Guessing Process ($\Guess$)} \\ \\
\textit{Parameters:} \\
\begin{tabular}{rl}
$\Phi$: & A  state of a register $V$\\
$\{ P_j \mid j \in \mathcal{J} \}$ : & A measurement on $V$
\end{tabular}
\\ \\
\textit{Registers:} \\
\begin{tabular}{rl}
$V, V'$: & Registers with a fixed isomorphism $V \cong V'$ \\
$J, J'$: & $\mathcal{J}$-valued registers
\end{tabular}
\\
\\
\textit{Procedure:} 
\begin{enumerate}
\item Prepare $V V'$ in the canonical purification state of $\Phi$.

\item Measure $V$ with $\{ P_j \}$ and store the result in $J$.

\item Measure $V'$ with $\{ \overline{P_j} \}$ and store the result in $J'$.
\end{enumerate}
}}
\caption{A process for guessing measurement outcomes via a purification}
\label{guessprocess}
\end{figure}

\begin{proposition}
\label{collisioninter}
Let $\Gamma^1, \Gamma^2, \Gamma^3$ denote the states that occur after steps 1, 2, and 3, respectively, in the process $\Guess ( \Phi, 
\{ P_j \}_j )$.  Then,
\begin{eqnarray}
\mathbf{P}_{\Gamma^3} ( J = J' ) & = & h_2 ( J \mid V' )_{\Gamma^2}.
\end{eqnarray}
\end{proposition}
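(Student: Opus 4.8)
The plan is to compute $\mathbf{P}_{\Gamma^3}(J = J')$ directly by tracking the state through the three steps of $\Guess$ and recognizing the resulting expression as $h_2(J \mid V')_{\Gamma^2}$. First I would set up coordinates: let $\Phi \in \mathcal{D}(V)$ with purification $\ket{\psi} = (\sqrt{\Phi} \otimes I_{V'})\sum_i e_i \otimes e_i \in V \otimes V'$, so that $\Gamma^1 = \ketbra{\psi}$. The key computational tool is the standard transpose trick: for any operator $T$ on $V$, $(T \otimes I_{V'})\ket{\psi} = (I_V \otimes T^\top)\ket{\psi}$, equivalently $(I_V \otimes \overline{T})\ket{\psi} = (T^\dagger \otimes I_{V'})\ket{\psi}$ when $T$ is applied as $\overline{T}$ on the $V'$ side. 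Since the $P_j$ are positive semidefinite (hence Hermitian), $\overline{P_j} = P_j^\top$, and I can move the $V'$-side measurement operators over to the $V$ side.

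The main steps are: (1) After step 2, $\Gamma^2 = \sum_j \ketbra{j}_J \otimes (\sqrt{P_j}\otimes I_{V'})\ketbra{\psi}(\sqrt{P_j}\otimes I_{V'})$ — here I record the outcome in $J$ and keep the post-measurement state on $VV'$ (or at least on $V'$, which is what $h_2(J\mid V')$ sees). Tracing out $V$, one computes $(\Gamma^2)^{V'} = \Phi^\top = \overline{\Phi}$ and, for each $j$, $(\Gamma^2_{J=j})^{V'} = \sqrt{\overline{\Phi}}\,\overline{P_j}\,\sqrt{\overline{\Phi}}$ using the transpose trick. (2) After step 3, the probability that the $V'$-measurement with $\{\overline{P_j}\}$ yields outcome $j'$, jointly with $J = j$, is $\Tr[(\sqrt{P_j}\otimes \sqrt{\overline{P_{j'}}})\ketbra{\psi}(\sqrt{P_j}\otimes\sqrt{\overline{P_{j'}}})]$; summing the diagonal $j = j'$ terms gives $\mathbf{P}_{\Gamma^3}(J=J') = \sum_j \Tr[P_j \overline{\Phi}^{1/2}\,\overline{P_j}\,\overline{\Phi}^{1/2}]$ after another application of the transpose trick to collapse everything onto one side. (3) On the other side, unwinding the definition $h_2(J \mid V')_{\Gamma^2} = \Tr[\Gamma^2 ((\Gamma^2)^{V'})^{-1/2}\,\Gamma^2\,((\Gamma^2)^{V'})^{-1/2}]$ with $\Gamma^2$ classical on $J$ gives $\sum_j \Tr[(\Gamma^2_{J=j})^{V'}\,((\Gamma^2)^{V'})^{-1/2}(\Gamma^2_{J=j})^{V'}((\Gamma^2)^{V'})^{-1/2}]$; substituting $(\Gamma^2)^{V'} = \overline{\Phi}$ and $(\Gamma^2_{J=j})^{V'} = \sqrt{\overline{\Phi}}\,\overline{P_j}\,\sqrt{\overline{\Phi}}$ and cancelling the $\overline{\Phi}^{\pm 1/2}$ factors yields exactly $\sum_j \Tr[\overline{P_j}\,\overline{\Phi}\,\overline{P_j}]$ — wait, more carefully it gives $\sum_j \Tr[\overline{\Phi}^{1/2}\overline{P_j}\overline{\Phi}^{1/2}\overline{\Phi}^{-1/2}\overline{\Phi}^{1/2}\overline{P_j}\overline{\Phi}^{1/2}\overline{\Phi}^{-1/2}] = \sum_j \Tr[\overline{\Phi}^{1/2}\overline{P_j}\overline{\Phi}^{1/2}\overline{P_j}]$, which matches the expression from step (2) after taking complex conjugates (both sides are real) and relabeling. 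I would then observe that the two final expressions coincide.

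The main obstacle is bookkeeping with the transpose trick when the $V'$-side operators are only positive semidefinite rather than strictly positive, and in handling the pseudo-inverse $((\Gamma^2)^{V'})^{-1/2} = \overline{\Phi}^{-1/2}$ on the support of $\overline{\Phi}$. I would address this by noting that the measurement operators $\sqrt{P_j}$ acting on $\ket{\psi}$ always produce vectors in the range of $\sqrt{\Phi}\otimes I_{V'}$, equivalently the relevant $V'$-components lie in the support of $\overline{\Phi}$, so the pseudo-inverse cancellations $\overline{\Phi}^{-1/2}\overline{\Phi}^{1/2} = \Pi_{\mathrm{supp}\,\overline{\Phi}}$ act as the identity where it matters. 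One small check to record is that the subnormalization is handled correctly — since $\Phi$ is a genuine state (trace one) and $\{P_j\}$ sums to the identity, $\Gamma^2$ and $\Gamma^3$ are normalized, so there are no boundary terms from the $[\Lambda]$ construction. Everything else is routine manipulation of traces.
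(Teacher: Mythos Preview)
Your approach is correct and essentially identical to the paper's: both compute $(\Gamma^2)^{JV'}$ as $\sum_j \ketbra{j}\otimes \overline{\sqrt{\Phi}P_j\sqrt{\Phi}}$, expand $h_2(J\mid V')$ using $(\Gamma^2)^{V'}=\overline{\Phi}$, cancel $\sqrt{\overline{\Phi}}\,\overline{\Phi}^{-1/2}$ on the support, and match against $\mathbf{P}_{\Gamma^3}(J=J')=\sum_j\Tr[\sqrt{\Phi}P_j\sqrt{\Phi}P_j]$. One small slip: your intermediate expression $\sum_j \Tr[P_j\,\overline{\Phi}^{1/2}\,\overline{P_j}\,\overline{\Phi}^{1/2}]$ mixes a conjugated and an unconjugated $P_j$ --- collapsing fully to the $V'$ side gives both factors as $\overline{P_j}$ (or both as $P_j$ on the $V$ side) --- but your subsequent remark about taking complex conjugates shows you land on the right final expression.
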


\begin{proof}
The states $\left( \Gamma^2 \right)^{J V'}$ and $\left( \Gamma^3 \right)^{J J'}$ are given by
\begin{eqnarray}
\left( \Gamma^2 \right)^{J V'} & = & \sum_j \left| j \right> \left< j \right| \otimes \overline{ \sqrt{\Phi} P_j \sqrt{\Phi}} \\
\left( \Gamma^3 \right)^{J J'} & = & \sum_{j,j'} \left| j j' \right> \left< j j' \right| \Tr [  \overline{ \sqrt{\Phi} P_j \sqrt{\Phi} P_{j'}} ]
\end{eqnarray}
and thus
\begin{eqnarray}
h_2 ( J \mid V')_{\Gamma^2} 
& = & \sum_j \Tr [ \overline{ \sqrt{\Phi} P_j \sqrt{\Phi} \Phi^{-1/2}  \sqrt{\Phi} P_j \sqrt{\Phi} \Phi^{-1/2} } ]  \\
& = & \sum_j \Tr [ \overline{ \sqrt{\Phi} P_j \sqrt{\Phi}  P_j } ] \\
& = & \mathbf{P}_{\Gamma^3} [ J = J' ],
\end{eqnarray}
as desired.
\end{proof}


\section{The Magic Square Guessing Game}
\label{guesssec}

In this section we consider the $6$-player game described in Figure~\ref{magic6}.  In this game,
two pairs of players (Alice and Bob, and Alice$'$ and Bob$'$) each play the Magic Square
game and their inputs and outputs are compared.  There are also additional
players Charlie and Charlie$'$ who receive a random bit and always produce the
same output letter.  (Note that none of the outputs Bob$'$, Charlie, and Charlie$'$ are
used in the scoring rule --- these players are present merely because their inputs are used
in the scoring rule.)

\begin{figure}[h]
\fbox{\parbox{5.6in}{\textbf{The Magic Square Guessing Game ($\MGuess$)} \\ \\
\textit{Players}: Alice, Bob, Charlie, Alice$'$, Bob$'$, Charlie$'$ \\ \\
\textit{Alphabets:} \\
\begin{tabular}{rl}
$\mathcal{X} = \{ 1, 2, 3 \}$: & The input alphabet for Alice and Alice$'$ \\
$\mathcal{Y} = \{ 1, 2, 3 \}$: & The input alphabet for Bob and Bob$'$ \\
$\mathcal{Z} = \{ 0, 1 \}$: & The input alphabet for Charlie and Charlie$'$ \\
$\mathcal{A} = \{ g_1 g_2 g_3 \in \{ 0, 1 \}^3 \mid \bigoplus g_i = 0 \}$: & The output alphabet for Alice and Alice$'$ \\
$\mathcal{B} = \{ g_1 g_2 g_3 \in \{ 0, 1 \}^3 \mid \bigoplus g_i = 1 \}$: & The output alphabet for Bob and Bob$'$ \\
$\mathcal{C} = \{ 0 \}$: & The output alphabet for Charlie and Charlie$'$ \\ \\
\end{tabular} \\
\textit{Probability distribution:} \\
\[p ( x, y, z, x', y', z' ) = \frac{1}{18^2} \hskip0.5in \textnormal{(uniform)}
\]
\\ \\
\textit{Predicate:} \\
\begin{eqnarray*}
L & = & [(x, y, a_y) = (x', y', a'_{y'})] \wedge [(a_y = b_x ) \vee (z = z') ]. 
\end{eqnarray*} \\
The game is won if all three of the following conditions hold:
\begin{enumerate}
\item Alice and Bob's inputs match those of Alice$'$ and Bob$'$.
\item Alice's key bit matches that of Alice$'$.
\item Either $z = z'$ or Alice and Bob win the Magic Square game.
\end{enumerate}
}}
\caption{A game with $6$ players.}
\label{magic6}
\end{figure}

In the game, Alice and Bob are attempting to win the Magic Square game, while Alice$'$ and
Bob$'$ are simultaneously attempting to guess Alice's input, Bob's input, and Alice's key bit.
However, a failure by Alice and Bob at winning the Magic Square game is forgiven if it happens that Charlie and Charlie$'$
have the same output.  (This last rule has the effect of making the game easier
to win.
It underlies the robustness property of our security proof for $\MagicQKD$.)

It is obvious that $w ( \MGuess ) \leq 1/9$, since the probability that Alice's and Bob's inputs
match those of Alice$'$ and Bob$'$ is $1/9$.  We will
prove that in fact $w ( \MGuess )$ is less than $1/9$ minus a positive constant.  This will
be crucial for establishing a nonzero key rate for $\MagicQKD$.

The proof of the next proposition is given in the appendix.
Roughly speaking, the proposition holds because rigidity for the Magic Square game \cite{Wu:2016} implies that
any near-optimal strategy by Alice and Bob involves Alice and Bob performing approximate Pauli measurements
on two approximate EPR pairs shared between them.  The outcomes of such measurements are not guessable by
an outside party (even with entanglement).
Therefore it is impossible for Alice and Bob to achieve a near-perfect score at the Magic Square
game while at the same time allowing Alice$'$ to guess Alice's outcomes.

\begin{proposition}
\label{magicprop}
Let $\MGuess$ denote the game in Figure~\ref{magic6}.  Then, \begin{eqnarray} \label{magicpropineq} w ( \MGuess ) \leq (1/9)  - 0.00035. \end{eqnarray}
\end{proposition}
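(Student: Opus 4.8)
The plan is to fix an arbitrary strategy for $\MGuess$ and bound its winning probability, combining an elementary decomposition of the winning event with the self-testing (rigidity) property of the Magic Square game.

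\emph{Reduction to a monogamy statement.} Charlie and Charlie$'$ never use their inputs $z,z'$ and no other player sees them, so the event $Z:=[z=z']$ is independent of all the other registers and has probability $1/2$. Let $M$ be the event $(x,y)=(x',y')$, let $K$ be the event $a_y=a'_{y'}$, and let $S$ be the event $a_y=b_x$ (Alice and Bob win the Magic Square). The winning event is $M\wedge K\wedge(S\vee Z)$, so conditioning on $Z$,
\[
\mathbf{P}(\text{win})=\mathbf{P}(MKS)+\tfrac12\mathbf{P}(MK\bar S)=\tfrac12\mathbf{P}(MK)+\tfrac12\mathbf{P}(MKS).
\]
The primed inputs are uniform and independent of everything, so conditioning on $M$ keeps the unprimed inputs uniform and $\mathbf{P}(S\mid M)=\mathbf{P}(S)$. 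Writing $s:=\mathbf{P}(S)$ for the Magic Square score of Alice and Bob and $g:=\mathbf{P}(K\mid M)$ for the probability that Alice$'$ reproduces Alice's key bit on the input-matching event, we get $\mathbf{P}(MK)=g/9$ and $\mathbf{P}(MKS)\le\mathbf{P}(MS)=s/9$, hence
\[
\mathbf{P}(\text{win})\;\le\;\frac{s+g}{18}.
\]
So it suffices to show $s+g\le 2-0.0063$, i.e.\ that $s$ and $g$ cannot both be close to $1$.

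\emph{Bounding $g$ by rigidity.} Suppose $s\ge 1-u$ for small $u$. I would invoke the robust self-test for the Magic Square game of \cite{Wu:2016}: up to a local isometry acting only on Alice's and Bob's registers, the reduced state on those registers is $O(\sqrt u)$-close to $\ket{\Phi^+}\otimes\ket{\Phi^+}$ tensored with a junk state, and Alice's input-$x$ measurement is $O(\sqrt u)$-close to the ideal joint Pauli measurement on her two EPR halves. Since the isometry leaves Alice$'$'s register $A'$ (indeed every register besides Alice's and Bob's) untouched, a purification argument (Uhlmann applied to the near-product reduced state) shows that the two EPR pairs are, to within $O(\sqrt u)$, in tensor product with the entire rest of the global state, $A'$ included. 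In the ideal strategy $a_y$ is the $\pm1$ outcome of a nontrivial two-qubit Pauli measured on a maximally mixed pair of qubits, hence uniformly distributed and independent of $A'$; consequently no measurement on $A'$ (even after handing Alice$'$ the value of $y$) guesses $a_y$ with probability exceeding $1/2$. Transporting this through the $O(\sqrt u)$-closeness yields $g\le\tfrac12+C\sqrt{1-s}$ with $C=O(1)$ the explicit robustness constant from \cite{Wu:2016}.

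\emph{Optimization.} Fix a threshold $u_0$ slightly above $0.0063$. If $s\le 1-u_0$ then $\mathbf{P}(\text{win})\le(s+1)/18\le(2-u_0)/18\le 1/9-0.00035$. If $s\ge 1-u_0$ then, with $u=1-s\in[0,u_0]$,
\[
\mathbf{P}(\text{win})\;\le\;\frac{s+\tfrac12+C\sqrt{1-s}}{18}\;=\;\frac{\tfrac32-u+C\sqrt u}{18},
\]
and over $u\in[0,u_0]$ the right side is at most $(2-u_0)/18$ provided $C\sqrt{u_0}\le\tfrac12$, i.e.\ provided $C$ is below an explicit $O(1)$ threshold (around $6$) that the Magic Square self-test meets. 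Taking the maximum of the two cases and inserting the numerical value of $C$ gives $w(\MGuess)\le 1/9-0.00035$.

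\emph{Main obstacle.} The delicate step is the monogamy bound for $g$. One must use the self-test in an ``extractability with side information'' form: the two EPR pairs have to factor out of the \emph{joint} state of Alice's register, Bob's register, and the extra register $A'$, not merely out of the marginal on Alice's and Bob's registers, so that Alice$'$'s quantum side information is genuinely decoupled from the key bit. Establishing this cleanly and carrying the resulting constant through the optimization is where the real work lies; a black-box appeal to rigidity is not quite enough, and it may be cleanest to reprove a version of the Magic Square self-test tailored to this guessing quantity.
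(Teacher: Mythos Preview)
Your decomposition $\mathbf{P}(\text{win})\le (s+g)/18$ is correct and close to what the paper uses (the paper records the equivalent pair of bounds $\mathbf{P}(\text{win})\le (1+s)/18$ and $\mathbf{P}(\text{win})\le g/9$, i.e.\ $(1/9)(1-\max\{1-g,(1-s)/2\})$). The real divergence is in how the trade-off between $s$ and $g$ is proved.

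The paper does \emph{not} invoke the full self-test statement of \cite{Wu:2016} and then argue, via Uhlmann, that the EPR pairs decouple from Alice$'$'s register. Instead it reuses only one ingredient from the rigidity proof: the chain of substitutions $F_{ij}\leftrightarrow G_{ij}$ on the global vector $\phi$ that forces approximate anticommutation, $\|F_{11}F_{22}\phi+F_{22}F_{11}\phi\|\le 18\sqrt{1-s}$. It then runs a second, very short chain using Alice$'$: since $\|\phi-F_{ij}F'_{ij}\phi\|=2\sqrt{\epsilon_{ij}}$ and $F'_{22}$ commutes with $F_{11}$ (different tensor factors), one gets $\|F_{11}F_{22}\phi-F_{22}F_{11}\phi\|\le 4\sqrt{\epsilon_{ij}}+4\sqrt{\delta_{i'j'}}$. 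The triangle inequality on these two bounds yields the explicit relation $1\le 11\sqrt{1-s}+2\sqrt{1-g}$, from which $\max\{1-g,(1-s)/2\}\ge \tfrac12(11+\sqrt2)^{-2}$ and hence the constant $0.00035$ drop out immediately. Crucially, because both chains are computed on the full state vector $\phi$ (which already carries Alice$'$'s register), the ``extractability with side information'' obstacle you flag never arises: no purification or tensor-factor extraction is needed.

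Your route is sound in principle as a monogamy argument, but as written it does not deliver the stated constant. Your optimization requires $g\le \tfrac12+C\sqrt{1-s}$ with $C\lesssim 6$; you do not verify this, and off-the-shelf robustness constants for the Magic Square self-test (after also absorbing the Uhlmann step and the operator-closeness-to-outcome-closeness conversions) are not obviously that small. By contrast, the paper's anticommutation/commutation contradiction yields its constants directly, at the price of the slightly weaker leading term $g\le 3/4$ at $s=1$---which is still enough because the paper uses the sharper upper bound $\mathbf{P}(\text{win})\le g/9$ rather than your $(s+g)/18$. If you want to salvage your approach with the target $0.00035$, the cleanest fix is to adopt the paper's second chain (the $F'_{ij}$-commutation trick) in place of the black-box self-test; this is exactly the ``version of the Magic Square self-test tailored to this guessing quantity'' you allude to in your last paragraph.
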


\begin{proof}
See Appendix~\ref{sgapp}.
\end{proof}

\section{Security Proof}

\label{keysec}

In the current section we give the proof of Theorem~\ref{mainthm}.  Our approach can be roughly understood as follows: 
our upper bound on the winning probability of $\MGuess$ implies, using parallel repetition, that the collision entropy of Alice's and Bob's inputs $X_{1 \ldots \epsilon N} Y_{1 \ldots \epsilon N}$ together with Alice's key bits $R_{1 \ldots \epsilon N}$ is substantially more than that of Alice's and Bob's inputs alone
(for small $\epsilon$).
It follows that, even when we condition on $X_{1 \ldots \epsilon N} Y_{1 \ldots \epsilon N}$ and all of the adversary's other information,
an amount of entropy that is linear in $N$ remains in $R_{1 \ldots \epsilon N}$ (Proposition~\ref{minboundprop}).  On the other hand,
a classical statistical argument shows that the rate of noise between Alice's key bits $R_{1 \ldots \epsilon N }$ and Bob's key bits
$S_{1 \ldots \epsilon N }$ vanishes as $\epsilon \to 0$ (Proposition~\ref{zeroboundprop}).  Combining these facts allows us to
deduce inequality (\ref{secstatement}).

\subsection{An Intermediate Protocol}

In order to show that Alice's raw key in $\MagicQKD$ is sufficiently random, we begin by analyzing the entropy produced by the related protocol $\MagicKey$
in Figure~\ref{symmagic}.  In $\MagicKey$, we use an idea from our previous
work on randomness expansion \cite{Miller:2016, MSUniversal:2015}: when Alice and Bob fail to win the Magic Square game,
we compensate by artificially introducing randomness.  In \cite{Dupuis:2016}, this artificial
randomness is represented by additional registers that have some prescribed entropy,
and we adopt the same style here (by including the registers $T_1, \ldots, T_N$).  We use these
auxilliary registers to establish a lower bound on collision entropy, and the registers will subsequently be dropped.

We begin with the following proposition, which addresses the amount of collision entropy that is collectively contained in 
Alice's and Bob's inputs, Alice's key register, and the auxiliary registers $T_i$.

\begin{theorem}
\label{bigthm}
Let $\Gamma$ be the final state of $\MagicKey$.  Then,
\begin{eqnarray}
\label{collthmineq}
h_2 ( X_{1 \ldots \epsilon N} Y_{1 \ldots \epsilon N } R_{1 \ldots \epsilon N } T_{1 \ldots \epsilon N } \mid E F )_\Gamma & \leq & 
( w ( \MGuess ) + O (\sqrt{ \epsilon})  )^{\epsilon N }.
\end{eqnarray}
\end{theorem}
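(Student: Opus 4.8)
The plan is to express the collision entropy $h_2 ( X_{1 \ldots \epsilon N} Y_{1 \ldots \epsilon N } R_{1 \ldots \epsilon N } T_{1 \ldots \epsilon N } \mid E F )$ as the winning probability of a suitable parallel-repeated guessing game, and then apply the parallel repetition bound of Proposition~\ref{keyrandomprop} together with the single-copy bound $w ( \MGuess ) \leq (1/9) - 0.00035$ of Proposition~\ref{magicprop}. Concretely, I would use the operational interpretation of collision entropy from Proposition~\ref{collisioninter}: since $\MagicKey$ begins with a pure state of the devices purified by $E$, the joint state of the classical registers $X, Y, R, T$ conditioned on $F$ together with the purifying system can be realized by the guessing process $\Guess$ applied to an appropriate measurement. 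The register being measured is the device state together with the public data, and the ``primed'' copy produced by the canonical purification gives a second, independent party ($E F$-side) who attempts to reconstruct $X_{1 \ldots \epsilon N} Y_{1 \ldots \epsilon N } R_{1 \ldots \epsilon N } T_{1 \ldots \epsilon N }$. Proposition~\ref{collisioninter} then equates $h_2$ with the probability that the two copies agree.

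The second step is to recognize that ``the two copies agree on $X_{1\ldots\epsilon N}, Y_{1\ldots\epsilon N}, R_{1\ldots\epsilon N}, T_{1\ldots\epsilon N}$'' is exactly the event of winning $\epsilon N$ rounds of $\MGuess$ in an $N$-fold parallel repetition, after the random permutation $F$ has shuffled the rounds. Here the unprimed copy plays the roles of Alice and Bob, the primed copy plays Alice$'$ and Bob$'$, the agreement of inputs corresponds to condition (1) in $\MGuess$, the agreement of key bits $R_j = (A_j)_{Y_j}$ corresponds to condition (2), and the $T_j$ registers — which carry artificial randomness injected exactly when the Magic Square game is lost — are what implement condition (3) (the ``$z = z'$ or Magic Square won'' clause, which makes a lost game forgivable at the cost of the outside party guessing a uniform bit). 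The point of introducing the $T_i$ in $\MagicKey$ is precisely to make this correspondence exact: a failed Magic Square round contributes a factor controlled by the Charlie/Charlie$'$ guessing probability rather than destroying the entropy bound. I would verify this identification round-by-round, checking that the measurement realizing $\Guess$ factors over the $\epsilon N$ distinguished rounds in the way required to match $\MGuess^{\epsilon N}$ restricted to those rounds.

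Once the collision entropy is identified with $\mathbf{P}(W_1 = \cdots = W_{\epsilon N} = 1)$ in $\Par(N, \MGuess, \cdot, \cdot)$ (using that $F$ is the uniformly random permutation in the last step of $\Par$), Proposition~\ref{keyrandomprop} with $k = \epsilon N$ gives the bound $\left[ w(\MGuess) + O_{\MGuess}(\sqrt{\epsilon}) \right]^{\epsilon N}$, which is exactly the right-hand side of (\ref{collthmineq}). Note that it is essential here that the base in Proposition~\ref{keyrandomprop} tends to $w(\MGuess)$ as $k/N \to 0$, so that for small $\epsilon$ the base stays below $1/9$; this is what the theorem's $O(\sqrt{\epsilon})$ error term records.

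The main obstacle I expect is the careful setup in the first two steps: correctly specifying the register $V$ and the measurement $\{P_j\}$ so that $\Guess$ reproduces the state of $\MagicKey$, and then matching the ``$J = J'$'' event with the predicate of $\MGuess^{\epsilon N}$ in a way that respects the role of the auxiliary $T_i$ registers and the fact that only $\epsilon N$ of the $N$ rounds (those surviving the truncations in steps 4–6 of the protocol, post-permutation) enter the key. In particular, one must be careful that the primed copy is genuinely a valid player in $\MGuess$ — i.e., that measuring $V'$ with $\{\overline{P_j}\}$ corresponds to a legitimate (entangled) strategy for Alice$'$, Bob$'$, Charlie$'$ — so that the single-copy bound $w(\MGuess) \leq 1/9 - 0.00035$ actually applies to it. Handling the $T_i$ bookkeeping and the permutation/truncation structure correctly is the crux; the invocation of parallel repetition at the end is then routine given Propositions~\ref{keyrandomprop}, \ref{magicprop}, and~\ref{collisioninter}.
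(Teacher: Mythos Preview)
Your proposal is correct and follows essentially the same approach as the paper: use Proposition~\ref{collisioninter} to identify $h_2$ with the probability that the primed copy (obtained from the canonical purification $E \cong C'D'$, measured with the conjugate strategy) agrees with the unprimed copy on $X,Y,R,T$, recognize this agreement event as winning $\epsilon N$ randomly-permuted rounds of $\MGuess^N$ (with the $T_i$ registers implementing the Charlie/Charlie$'$ clause exactly as you describe), and then apply Proposition~\ref{keyrandomprop} with $k = \epsilon N$. The paper makes the identification explicit by writing down the six-player measurement strategy $\mathbf{M} = \{P_{\mathbf{a}}^{\mathbf{x}} \otimes Q_{\mathbf{b}}^{\mathbf{y}} \otimes I \otimes \overline{P_{\mathbf{a}'}^{\mathbf{x}'}} \otimes \overline{Q_{\mathbf{b}'}^{\mathbf{y}'}} \otimes I\}$ and noting that averaging $h_2(\{X_iY_iR_iT_i : i\in Z\}\mid E)$ over $(\epsilon N)$-subsets $Z$ is exactly the left-hand side of~(\ref{collthmineq}) once the permutation $F$ is included; your sketch anticipates all of this.
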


Note that in the above statement, we are conditioning not only on the register $E$ but also
on the permutation register $F$.

\begin{proof}
We prove this result via an application of Proposition~\ref{collisioninter}.
Upon an appropriate unitary embedding,
we may also assume $E = C'D'$, where $C', D'$ are copies of $C, D$, and that $\Phi$ is the canonical purification of $\Phi^{CD}$.  Suppose that the process $\Par ( N, \MGuess , \mathbf{M}, \Phi)$ is executed with the measurement
strategy\footnote{Here the tensor product respects the following ordering of the players: Alice, Bob, Charlie, Alice$'$, Bob$'$, Charlie$'$.  Charlie
and Charlie$'$ have trivial output, and we treat them as simply performing a unary measurement on a one-dimensional register.}
\begin{eqnarray}
\mathbf{M} = \{ P_\mathbf{a}^\mathbf{x} \otimes Q_\mathbf{b}^\mathbf{y} \otimes I \otimes \overline{P_\mathbf{a'}^\mathbf{x'}} \otimes \overline{Q_\mathbf{b'}^\mathbf{y'}} \otimes  I \},
\end{eqnarray}
For any $m$-subset $Z$ of $\{ 1, 2, \ldots, N \}$, the probability that $\bigwedge_{i \in Z} W_i = 1$ after step 4
in the process $\Par ( N, \MGuess , \mathbf{M}, \Phi)$ is the same as the value of
\begin{eqnarray}
h_2 ( \left\{ X_i Y_i R_i T_i \mid i \in Z \right\} \mid E )
\end{eqnarray}
after step 6 in $\MagicKey$.  The average of the former quantity over all $(\epsilon N)$-subsets is equal to 
the value of $\mathbf{P} ( W_1  \wedge W_2 \wedge \cdots \wedge W_{\epsilon N} )$ at the conclusion
of $\Par ( N, \MGuess , \mathbf{M}, \Phi)$, while the average of the latter quantity is equal
to the expression on the lefthand side of (\ref{collthmineq}).  The desired
result follows from Theorem~\ref{keyrandomprop}.
\end{proof}

\begin{figure}[h!]
\fbox{\parbox{6in}{\textbf{The Magic Square Key Process ($\MagicKey$)} \\ \\
\textit{Parameters:} \\
\begin{tabular}{rl}
$\epsilon$ : & A rational number from $(0, 1/2]$ \\
$N$ : & A positive integer such that $N \epsilon^2$ is an integer \\
$\{ P_{\mathbf{a} \mid \mathbf{x}} \otimes  Q_{\mathbf{b} \mid \mathbf{y}} \}$: &
A measurement strategy for $\textnormal{MagicSquare}^N$ on a bipartite system $CD$ \\
$\Phi$: & A pure state of a tripartite system $CDE$.
\end{tabular}
\\
\\
\textit{Registers:} \\
\begin{tabular}{rl}
$C, D, E$: & Quantum registers (possessed by Alice, Bob, and Eve, respectively) \\
$X_1, \ldots, X_N$: & Alice's input registers ($\{ 1, 2, 3 \}$-valued) \\
$Y_1, \ldots, Y_N$: & Bob's input registers ($\{1, 2, 3 \}$-valued) \\
$A_1, \ldots, A_N$: & Alice's output registers ($\{ 000, 011, 101, 110 \}$-valued) \\
$B_1, \ldots, B_N$: & Bob's output registers ($\{ 001, 010, 100, 111 \}$-valued) \\
$R_1, \ldots, R_N$: & Alice's key bit register \\
$S_1, \ldots, S_N$: & Bob's key bit register \\
$T_1, \ldots, T_N$: & Auxilliary bit registers \\
$F$: & A $\Perm ( \{ 1, 2, \ldots, N \} )$-valued register
\end{tabular}
\\ \\
\textit{Procedure:} 
\begin{enumerate}
\item Prepare $CD$ in state $\Phi$.

\item Choose $\mathbf{x}$ and $\mathbf{y}$ independently and uniformly at random from $\{ 1, 2, 3 \}^N$,
and set $X := \mathbf{x}$ and $Y := \mathbf{y}$.

\item Measure $C$ with $\{ P_{\mathbf{x}}^\mathbf{a} \}_\mathbf{a}$ and store the result in
register $A$.

\item Measure $D$ with $\{ Q_{\mathbf{y}}^\mathbf{b} \}_\mathbf{b}$ and store the result 
in register $B$.

\item For each $i \in \{ 1, 2, \ldots, N \}$, set $R_i$ to be equal to the $(Y_i)$th bit of $A_i$, and 
set $S_i$ to be equal to the $(X_i)$th bit of $B_i$.

\item For each $i \in \{ 1, 2, \ldots, N \}$, if $R_i \neq S_i$, then set $T_i$ to be a independent coin flip.
Otherwise, set $T_i$ to $0$.

\item Choose a random permutation $\sigma \in \Perm \{ 1, 2, \ldots, N \}$ and apply it
to the registers $\{ X_i \}, \{ A_i \}, \{ Y_i \}, \{ B_i \}, \{ R_i \}, \{ S_i \}, \{ T_i \}$.  Store $\sigma$ in the register $F$.
\end{enumerate}
}}
\caption{A protocol for generating a shared key.}
\label{symmagic}
\end{figure}

Next we deduce an upper bound on smooth min-entropy, focusing just
on the registers $R_{1 \ldots \epsilon N} T_{1 \ldots \epsilon N}$.
For compatibility with later derivations, we will take the error parameter to be
$2 \exp ( - \epsilon^4 N  )$.
\begin{corollary}
The following inequality holds:
\begin{eqnarray}
\label{labelcorollary}
H_{min}^{2 \exp ( -  \epsilon^4 N  )} (  R_{1 \ldots \epsilon N }
T_{1 \ldots \epsilon N } \mid X_{1 \ldots \epsilon N } Y_{1 \ldots \epsilon N } E F )_\Gamma
& \geq & \Omega ( \epsilon ) N.
\end{eqnarray}
\end{corollary}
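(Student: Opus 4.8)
The plan is to obtain this corollary as a short consequence of Theorem~\ref{bigthm}, Proposition~\ref{magicprop}, and the collision-to-min-entropy conversion of Proposition~\ref{collminprop}. Throughout, abbreviate $Z := X_{1 \ldots \epsilon N} Y_{1 \ldots \epsilon N}$ and $W := R_{1 \ldots \epsilon N} T_{1 \ldots \epsilon N}$, so that Theorem~\ref{bigthm} states $h_2 ( Z W \mid E F )_\Gamma \leq ( w ( \MGuess ) + O ( \sqrt{\epsilon} ) )^{\epsilon N}$, and the goal is a lower bound on $H_{min}^{2 \exp ( - \epsilon^4 N )} ( W \mid Z E F )_\Gamma$. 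The key observation is that $Z$ contributes exactly $\epsilon N \log 9$ bits ``for free'' relative to $EF$, so after subtracting them off one is left with a collision entropy that is still linear in $N$ precisely because $9 \, w ( \MGuess ) < 1$, which is what Proposition~\ref{magicprop} supplies.

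First I would check that in $\MagicKey$ the register $Z$ is uniformly distributed on $\{ 1, 2, 3 \}^{2 \epsilon N}$ and is independent of $EF$. Indeed, $X$ and $Y$ are sampled in step~2 uniformly and independently of the initial state $\Phi$; the measurements in steps~3--4 are trace preserving, so conditioned on any value of $( X, Y )$ the marginal state on $E$ is $\Phi^E$; the registers created in steps~5--6 are deterministic functions of the measurement outcomes and are not fed back into $X, Y$; and the permutation drawn in step~7 is independent of everything else. Hence after step~7 the first $\epsilon N$ coordinates of $X$ and $Y$ are still jointly uniform and independent of $( E, F )$, i.e.\ $\Gamma^{Z E F} = \Gamma^Z \otimes \Gamma^{E F}$ with $\Gamma^Z$ completely mixed. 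Splitting $\Gamma^{Z W E F}$ along the classical index carried by $Z$ and using $\Gamma^{E F \mid Z = z} = \Gamma^{E F}$, a direct computation from Definition~\ref{entropydef1} gives the chain rule
\begin{eqnarray}
h_2 ( Z W \mid E F )_\Gamma & = & 9^{- \epsilon N} \, h_2 ( W \mid Z E F )_\Gamma .
\end{eqnarray}
Combining this with Theorem~\ref{bigthm} yields $h_2 ( W \mid Z E F )_\Gamma \leq ( 9 \, w ( \MGuess ) + O ( \sqrt{\epsilon} ) )^{\epsilon N}$. By Proposition~\ref{magicprop}, $9 \, w ( \MGuess ) \leq 1 - 0.00315$, so fixing $\epsilon$ below an absolute constant small enough that the additive $O ( \sqrt{\epsilon} )$ term is below $0.001$, we get $h_2 ( W \mid Z E F )_\Gamma \leq ( 1 - c )^{\epsilon N}$ for a fixed $c > 0$, that is,
\begin{eqnarray}
H_2 ( W \mid Z E F )_\Gamma & \geq & \Bigl( \log \tfrac{1}{1 - c} \Bigr) \, \epsilon N \; = \; \Omega ( \epsilon ) N .
\end{eqnarray}

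Finally I would apply Proposition~\ref{collminprop} to the normalized classical-quantum state $\Gamma$ with $R := W$, $S := Z E F$, and $\delta := 2 \exp ( - \epsilon^4 N )$. Since $\log ( 2 / \delta^2 ) = 2 ( \log_2 e ) \epsilon^4 N - 1 = O ( \epsilon^4 N )$, this gives
\begin{eqnarray}
H_{min}^{2 \exp ( - \epsilon^4 N )} ( W \mid Z E F )_\Gamma & \geq & H_2 ( W \mid Z E F )_\Gamma - O ( \epsilon^4 N ) \; \geq \; \Omega ( \epsilon ) N ,
\end{eqnarray}
where in the last step we shrink the absolute constant bounding $\epsilon$ once more so that the linear-in-$\epsilon N$ term dominates the $O ( \epsilon^4 N )$ correction. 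Unwinding the abbreviations $Z$ and $W$ gives exactly (\ref{labelcorollary}).

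I expect the only step needing genuine care to be the chain-rule identity together with the independence of $Z$ from $E F$ in $\MagicKey$: the identity is a one-line operator computation once one knows $\Gamma^{Z E F}$ is a product with $Z$ uniform, but it is worth being explicit that nothing in the parallel device model --- in which $X, Y$ are chosen by Alice and Bob by fiat after $\Phi$ is fixed --- can create a correlation between $Z$ and Eve's side information. The remaining work is purely quantitative: keeping both the $\sqrt{\epsilon}$-slack inherited from Theorem~\ref{bigthm} and the $O ( \epsilon^4 N )$ smoothing loss strictly below the entropy gain, which is what forces $\epsilon := \epsilon_0$ to be a sufficiently small absolute constant.
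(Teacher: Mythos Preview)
Your proof is correct and uses the same three ingredients as the paper (Theorem~\ref{bigthm}, Proposition~\ref{magicprop}, Proposition~\ref{collminprop}), but in a slightly different order. You first prove a collision-entropy chain rule $h_2(ZW\mid EF)=9^{-\epsilon N}h_2(W\mid ZEF)$, which relies on verifying that $Z=X_{1\ldots\epsilon N}Y_{1\ldots\epsilon N}$ is uniform and independent of $EF$, and only afterwards apply the collision-to-min conversion. The paper instead applies Proposition~\ref{collminprop} directly to $H_2(X_{1\ldots\epsilon N}Y_{1\ldots\epsilon N}R_{1\ldots\epsilon N}T_{1\ldots\epsilon N}\mid EF)$ and then removes $X_{1\ldots\epsilon N}Y_{1\ldots\epsilon N}$ from the left side using the min-entropy chain rule $H_{\min}^\delta(W\mid ZEF)\geq H_{\min}^\delta(ZW\mid EF)-\log|\mathrm{supp}\,Z|$, which only needs the support-size bound $9^{\epsilon N}$ and not the independence statement. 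Your route is perfectly valid and arguably cleaner at the collision-entropy level; the paper's route is slightly shorter because it avoids the independence check altogether.
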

\begin{proof}
By Proposition~\ref{collminprop}, we have
\begin{eqnarray*}
H_{min}^{2 \exp ( -  \epsilon^4 N  )} ( X_{1 \ldots \epsilon N } Y_{1 \ldots \epsilon N } R_{1 \ldots \epsilon N }
T_{1 \ldots \epsilon N } \mid E F )_\Gamma
& \geq & \epsilon N [ \log \frac{ 1}{w ( \MGuess )} - O ( \sqrt{\epsilon})  ] - 2 (\log e ) \epsilon^4 N
\end{eqnarray*}
By Proposition~\ref{magicprop}, $\log [ 1/ w ( \MGuess ) ] > \log (1/9)$, and this bound can be simplified to
\begin{eqnarray*}
H_{min}^{2 \exp ( -  \epsilon^4 N  )} ( X_{1 \ldots \epsilon N } Y_{1 \ldots \epsilon N } R_{1 \ldots \epsilon N }
T_{1 \ldots \epsilon N } \mid E F )_\Gamma
& \geq & N [ ( \log 9 ) \epsilon + \Omega ( \epsilon )  ] .
\end{eqnarray*}
When we condition on the registers $X_{1 \ldots \epsilon N} Y_{1 \ldots \epsilon}$, whose support has size $9^{\epsilon N} =
2^{N \epsilon \log 9 }$, we obtain the bound (\ref{labelcorollary}).
\end{proof}

In the next subsection, we will address conditioning on the event $SUCC$.  For the time being it 
is helpful to condition on a related event.  For any $\delta > 0$, let $WIN ( \delta )$ denote the event that the bit strings
$R_{1\ldots \epsilon N}$ and $S_{1 \ldots \epsilon N}$ differ in at most $\delta ( \epsilon N)$ places.
(That is,
$WIN ( \delta )$ denotes the event that the proportion of wins among the first $\epsilon N$ rounds
is at least $1 - \delta$.) 
Consider the event $WIN ( 2 \epsilon )$.  We have
\begin{eqnarray}
\label{expcorollary}
H_{min}^{2 \exp ( - \epsilon^4 N  )} (  R_{1 \ldots \epsilon N }
T_{1 \ldots \epsilon N } \mid X_{1 \ldots \epsilon N } Y_{1 \ldots \epsilon N } E F )_{\Gamma_{WIN ( 2 \epsilon ) }}
& \geq & \Omega ( \epsilon ) N.
\end{eqnarray}
We assert that a lower bound  in the same form holds when the registers $T_{1 \ldots \epsilon N}$ are omitted.
\begin{corollary}
\label{launchpoint}
The subnormalized state $\Gamma_{WIN ( 2 \epsilon ) }$ satisfies
\begin{eqnarray}
\label{launchpointineq}
H_{min}^{2 \exp ( - \epsilon^4 N  )} (  R_{1 \ldots \epsilon N }
\mid X_{1 \ldots \epsilon N } Y_{1 \ldots \epsilon N } E F )
& \geq & \Omega ( \epsilon ) N .
\end{eqnarray}
\end{corollary}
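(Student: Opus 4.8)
The plan is to deduce (\ref{launchpointineq}) from (\ref{expcorollary}) by removing the auxiliary registers $T_{1 \ldots \epsilon N}$ from the numerator of the conditional min-entropy; I will show this costs only $\epsilon N H(2\epsilon) = o(\epsilon N)$ bits, which is dominated by the $\Omega(\epsilon)N$ term already in hand.

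The first step is to bound the support of $T_{1 \ldots \epsilon N}$ in the subnormalized state $\Gamma_{WIN(2\epsilon)}$. By step 6 of $\MagicKey$, $T_i=1$ can occur only if $R_i \neq S_i$, and by definition the event $WIN(2\epsilon)$ forces $R_{1 \ldots \epsilon N}$ and $S_{1 \ldots \epsilon N}$ to disagree in at most $2\epsilon \cdot \epsilon N = 2\epsilon^2 N$ places. Hence, in $\Gamma_{WIN(2\epsilon)}$, the classical register $T_{1 \ldots \epsilon N}$ takes values only in the set $L' = \{ \mathbf{t} \in \{0,1\}^{\epsilon N} : \sum_i t_i \leq 2\epsilon^2 N \}$. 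Taking $\epsilon \leq 1/4$ so that $2\epsilon \leq 1/2$, Proposition~\ref{hammingprop} (applied with $c = 2\epsilon$ and block length $\epsilon N$) gives $|L'| \leq 2^{\epsilon N H(2\epsilon)}$.

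Now I pass from (\ref{expcorollary}) to the desired bound. Let $\delta = 2\exp(-\epsilon^4 N)$ and let $\Gamma'$ be a subnormalized state, within purified distance $\delta$ of $\Gamma_{WIN(2\epsilon)}^{R_{1 \ldots \epsilon N} T_{1 \ldots \epsilon N} X_{1 \ldots \epsilon N} Y_{1 \ldots \epsilon N} EF}$, achieving the optimum in (\ref{expcorollary}), so that $H_{min}(R_{1 \ldots \epsilon N} T_{1 \ldots \epsilon N} \mid X_{1 \ldots \epsilon N} Y_{1 \ldots \epsilon N} EF)_{\Gamma'} \geq \Omega(\epsilon)N$. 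Let $P$ be the projector onto $\textnormal{span}\{ \ket{\mathbf{t}} : \mathbf{t} \in L' \}$ in the register $T_{1 \ldots \epsilon N}$, acting as the identity on all other registers. Since $\Gamma_{WIN(2\epsilon)}$ is supported on $\textnormal{range}(P)$, the trace non-increasing completely positive map $\Lambda \mapsto P\Lambda P$ fixes it, so by the data-processing inequality for purified distance (Chapter~3 of \cite{Tomamichel:2016}) the state $P\Gamma' P$ is still within $\delta$ of $\Gamma_{WIN(2\epsilon)}$; moreover $P$ commutes with every operator $I_{R_{1 \ldots \epsilon N} T_{1 \ldots \epsilon N}} \otimes \sigma$ and satisfies $P(I \otimes \sigma)P \leq I \otimes \sigma$, which shows that any $\sigma$ feasible for $\Gamma'$ is feasible for $P\Gamma'P$, hence $h_{min}(R_{1 \ldots \epsilon N} T_{1 \ldots \epsilon N} \mid X_{1 \ldots \epsilon N} Y_{1 \ldots \epsilon N} EF)$ does not decrease. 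Finally, $P\Gamma'P$ has its $T_{1 \ldots \epsilon N}$ register supported on a subspace of dimension $|L'| \leq 2^{\epsilon N H(2\epsilon)}$, so the standard dimension bound for conditional min-entropy (for classical $T_{1 \ldots \epsilon N}$ it is just: summing the $\leq |L'|$ diagonal blocks of a state over $T_{1 \ldots \epsilon N}$ turns a feasible $\sigma$ into a feasible $|L'|\sigma$) gives $H_{min}(R_{1 \ldots \epsilon N} \mid X_{1 \ldots \epsilon N} Y_{1 \ldots \epsilon N} EF)_{P\Gamma'P} \geq H_{min}(R_{1 \ldots \epsilon N} T_{1 \ldots \epsilon N} \mid X_{1 \ldots \epsilon N} Y_{1 \ldots \epsilon N} EF)_{\Gamma'} - \epsilon N H(2\epsilon)$. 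Tracing out $T_{1 \ldots \epsilon N}$ affects neither this min-entropy nor the distance bound, so it produces a valid $\delta$-smoothing of $\Gamma_{WIN(2\epsilon)}^{R_{1 \ldots \epsilon N} X_{1 \ldots \epsilon N} Y_{1 \ldots \epsilon N} EF}$ witnessing $H_{min}^{\delta}(R_{1 \ldots \epsilon N} \mid X_{1 \ldots \epsilon N} Y_{1 \ldots \epsilon N} EF)_{\Gamma_{WIN(2\epsilon)}} \geq \Omega(\epsilon)N - \epsilon N H(2\epsilon)$. Since $H(2\epsilon) \to 0$ as $\epsilon \to 0$, for $\epsilon$ below a suitable constant the term $\epsilon N H(2\epsilon)$ is at most half of the $\Omega(\epsilon)N$ term, so the right side is itself $\Omega(\epsilon)N$, which is (\ref{launchpointineq}).

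I expect the delicate point to be exactly the smoothing manipulation in the last paragraph: one must ensure that the optimal smoothing state may be taken to respect the support constraint on $T_{1 \ldots \epsilon N}$ before invoking the dimension bound, which is precisely what conjugating by $P$ secures without changing the smoothing parameter. (Alternatively one could use a smooth chain rule together with $H_{max}^{0}(T_{1 \ldots \epsilon N} \mid X_{1 \ldots \epsilon N} Y_{1 \ldots \epsilon N} EF)_{\Gamma_{WIN(2\epsilon)}} \leq \log|L'|$, but that slightly worsens the smoothing parameter and then requires a matching adjustment downstream.)
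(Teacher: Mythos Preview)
Your proof is correct and follows essentially the same approach as the paper: both observe that under $WIN(2\epsilon)$ the register $T_{1\ldots\epsilon N}$ is supported on at most $2^{H(2\epsilon)\epsilon N}$ strings (via Proposition~\ref{hammingprop}), so dropping it from the left of (\ref{expcorollary}) costs at most $H(2\epsilon)\epsilon N = o(\epsilon)N$, which is absorbed into $\Omega(\epsilon)N$. Your treatment is in fact more careful than the paper's, which simply asserts that the registers can be dropped at this cost without spelling out the projection-by-$P$ argument ensuring the smoothing parameter is preserved.
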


\begin{proof}
The distribution of the registers $T_{1\ldots \epsilon N}$ under the subnormalized state $\Gamma_{WIN ( 2 \epsilon ) }$ is supported
only on binary strings of Hamming weight less than $2 \epsilon^2 N$.  Thus, by Proposition~\ref{hammingprop}, these registers
are supported on a set of size less than or equal to $2^{H ( 2 \epsilon ) \epsilon N}$.  Therefore we can drop
the registers $T_{1 \ldots \epsilon N}$ from the lefthand side of (\ref{expcorollary}) and
and deduct $H ( 2 \epsilon ) \epsilon N$ from its righthand side, and the inequality is preserved.  Since the term
$H ( 2 \epsilon ) \epsilon N$ is dominated by $\Omega ( \epsilon ) N$,
it may be ignored and the desired result follows.
\end{proof}

\subsection{Device-Independent Quantum Key Distribution}
\label{disec}

We now turn our attention toward the protocol $\MagicQKD$ (Figure~\ref{qkdfig}).
We will prove that $\MagicQKD$ generates a positive key rate.  Our final statement will use
the registers
\begin{eqnarray}
AliceKey & := & R_{1 \cdots \epsilon N} \\
BobKey & := & S_{1 \cdots \epsilon N} \\
Eve & := & X_{1 \cdots \epsilon N} Y_{1 \cdots \epsilon N} R_{1 \cdots \epsilon^2 N } S_{1 \cdots \epsilon^2 N} EF.
\end{eqnarray}
The registers $Eve$ denote the information possessed by Eve at the conclusion of $\MagicQKD$.

We begin by translating Corollary~\ref{launchpoint} into a statement about the success event
for $\MagicQKD$.  Let $SUCC$ denote the event that $\MagicQKD$ succeeds,
and let $SUCC'$ denote the event
that $\MagicQKD$ succeeds \textit{and} the event $WIN ( 2 \epsilon )$ occurs.
\begin{lemma}
\label{succprimelemma}
The events $SUCC'$ and $SUCC$ satisfy
\begin{eqnarray}
\mathbf{P} ( SUCC \wedge \neg SUCC' ) & \leq & e^{- 2 \epsilon^4 N}.
\end{eqnarray}
\end{lemma}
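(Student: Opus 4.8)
The plan is to reduce the claim to a tail bound for sampling without replacement. First note that, since $SUCC' = SUCC \wedge WIN(2\epsilon)$, we have $SUCC \wedge \neg SUCC' = SUCC \wedge \neg WIN(2\epsilon)$, so it suffices to bound $\mathbf{P}(SUCC \wedge \neg WIN(2\epsilon))$. Recall that in $\MagicQKD$ the $i$th game is won precisely when $R_i = S_i$ (this is the Magic Square winning condition $a_y = b_x$ applied to the recorded bits), that the abort test in step~7 is exactly the condition that the number of losses among the $\epsilon^2 N$ \emph{test rounds} (positions $1,\dots,\epsilon^2 N$ after the permutation $F$) exceeds $\epsilon\cdot\epsilon^2 N = \epsilon^3 N$, and that $WIN(2\epsilon)$ is exactly the event that the number of losses among the $\epsilon N$ \emph{key rounds} (positions $1,\dots,\epsilon N$) is at most $2\epsilon\cdot\epsilon N = 2\epsilon^2 N$. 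So I want to show: if more than $2\epsilon^2 N$ of the key rounds are losses, then except with probability at most $e^{-2\epsilon^4 N}$ more than $\epsilon^3 N$ of the test rounds are losses.

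The heart of the argument is the uniformly random permutation $F$ chosen in step~3. I would condition on all randomness generated before step~3, so that the win/loss indicator of each of the $N$ games is fixed; say $L$ of them are losses. By Assumption~2, $F$ is a fresh uniform permutation independent of that data, so after applying $F$ the win/loss string of the $N$ rounds is a uniformly random arrangement of a fixed binary string with $L$ ones. Write $\ell$ for the number of losses among the key rounds and $t$ for the number among the test rounds (so $t \le \ell$, the test rounds being a sub-collection of the key rounds). By exchangeability of the coordinates of a uniformly random arrangement, conditioned on the value of $\ell$ the string of key-round indicators is uniform over all binary strings of length $\epsilon N$ with $\ell$ ones; hence $t$ is the number of ones in a uniformly random $\epsilon^2 N$-length prefix of such a string, i.e.\ $t$ is hypergeometric with population $\epsilon N$, $\ell$ successes, and sample size $\epsilon^2 N$, so $\mathbf{E}[t \mid \ell] = \epsilon\ell$.

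Now I would invoke a Hoeffding-type tail bound for sampling without replacement (valid with the same constants as for independent Bernoulli trials): for any $d > 0$, $\mathbf{P}\big(t \le \mathbf{E}[t\mid\ell] - d\,\epsilon^2 N \,\big|\, \ell\big) \le e^{-2 d^2 \epsilon^2 N}$. On the event $\neg WIN(2\epsilon)$ we have $\ell > 2\epsilon^2 N$, hence $\mathbf{E}[t\mid\ell] = \epsilon\ell > 2\epsilon^3 N$; choosing $d$ so that $\mathbf{E}[t\mid\ell] - d\,\epsilon^2 N = \epsilon^3 N$ gives $d = \ell/(\epsilon N) - \epsilon > 2\epsilon - \epsilon = \epsilon > 0$, and therefore $\mathbf{P}(t \le \epsilon^3 N \mid \ell) \le e^{-2 d^2 \epsilon^2 N} < e^{-2\epsilon^4 N}$. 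Since $SUCC$ implies $t \le \epsilon^3 N$, and this bound is uniform over all values $\ell > 2\epsilon^2 N$ and over the conditioning on the pre-step-3 data, averaging back out gives $\mathbf{P}(SUCC \wedge \neg WIN(2\epsilon)) \le e^{-2\epsilon^4 N}$, which is the claim. (There is no rounding issue in $\ell > 2\epsilon^2 N$ since $2\epsilon^2 N$ is an integer, $N\epsilon^2$ being an integer.)

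I expect the only delicate point to be the second paragraph: making precise that, after the random permutation, the test rounds behave like a uniformly random subset of the key rounds once one conditions on the loss count of the latter — that is, the nested hypergeometric structure. Once that is in place, the remainder is a routine application of a standard concentration inequality, with the constants arranged so the exponent is exactly $2\epsilon^4 N$; one could alternatively replace the appeal to Hoeffding's inequality by a direct counting estimate on $\binom{\epsilon N}{\epsilon^2 N}$, at the cost of slightly messier constants.
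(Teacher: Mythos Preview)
Your proposal is correct and follows essentially the same route as the paper. Both arguments reduce to the observation that, conditioned on the win/loss pattern of the $\epsilon N$ key rounds, the $\epsilon^2 N$ test rounds form a uniform sample without replacement, and then apply Hoeffding's inequality for sampling without replacement (the paper cites \cite{Hoeffding:1963}, Theorem~1 and Section~6) with deviation parameter $\epsilon$ and sample size $\epsilon^2 N$ to get the exponent $2\epsilon^2\cdot\epsilon^2 N = 2\epsilon^4 N$; your version simply makes the conditioning structure (first on the pre-permutation data, then on the key-round loss count $\ell$) more explicit than the paper does.
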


\begin{proof}
We assume $\mathbf{P} ( \neg WIN ( 2 \epsilon ) ) > 0$ (since otherwise the desired assertion is obvious). 
We have
\begin{eqnarray}
\mathbf{P} ( SUCC \wedge \neg SUCC' ) & = & \mathbf{P} ( SUCC \wedge \neg WIN ( 2 \epsilon ) ) \\
\label{succprimestep}
& = & \mathbf{P} ( \neg WIN ( 2 \epsilon ) ) \cdot  \mathbf{P} ( SUCC \mid \neg WIN ( 2 \epsilon )) 
\end{eqnarray}
We consider the second factor in (\ref{succprimestep}).  Let $W_i$ denote the indicator variable for the event that the $i$th game is won.
After conditioning on $\neg  WIN ( 2 \epsilon )$, the only way that $SUCC$ can  occur is if
the average of the variables $W_1 , \ldots, W_{\epsilon^2 N}$ exceeds that of $W_1 \ldots, W_{\epsilon N}$
by at least $\epsilon$.  
By (\cite{Hoeffding:1963}, Theorem 1 and Section 6), if an $\epsilon^2 N$-subset $S$ is chosen at random from a set of
Boolean values $T$ of size $\epsilon N$, then the probability that the average of $S$
will exceed that of $T$ by more than $\epsilon$ is at most $e^{- 2 \epsilon^2 (\epsilon^2 N )}$.
This yields the desired bound.
\end{proof}

As a consequence of Lemma~\ref{succprimelemma}, we have $\Delta ( \Gamma_{SUCC} , \Gamma_{SUCC'} ) \leq 2 \exp ( - 2 \epsilon^4 N)$, and
therefore $P ( \Gamma_{SUCC} , \Gamma_{SUCC'} ) \leq \sqrt{ 4 \exp ( - 2 \epsilon^4 N)} = 2 \exp ( - \epsilon^4 N )$.  
Since $SUCC' \Longrightarrow WIN ( 2 \epsilon )$, $\Gamma_{SUCC'}$ also satisfies inequality (\ref{launchpointineq})
from Corollary~\ref{launchpoint}.  We therefore have by the triangle inequality that
 the state $\Gamma_{SUCC}$ satisfies
\begin{eqnarray}
\label{incompletesecurity}
H_{min}^{4 \exp ( - \epsilon^4 N  )} (  R_{1 \ldots \epsilon N }
\mid X_{1 \ldots \epsilon N } Y_{1 \ldots \epsilon N } E F )
& \geq & \Omega ( \epsilon ) N .
\end{eqnarray}
Conditioning also on the registers $R_{1 \ldots \epsilon^2 N} S_{1 \ldots \epsilon^2 N}$ decreases the quantity on the lefthand
side of (\ref{incompletesecurity}) by at most $2 \epsilon^2 N \leq o ( \epsilon ) N$, and thus we obtain the following result.
\begin{proposition}
\label{minboundprop}
The state $\Gamma_{SUCC}$ at the conclusion of $\MagicQKD$ satisfies
\begin{eqnarray}
H_{min}^{4 \exp ( - \epsilon^4 N  )} ( \AliceKey \mid \Eve )
& \geq & \Omega ( \epsilon ) N .
\end{eqnarray}
\end{proposition}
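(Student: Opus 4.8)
The plan is to deduce Proposition~\ref{minboundprop} from Corollary~\ref{launchpoint} by two reductions: first, replacing the event $WIN ( 2 \epsilon )$ with the actual success event $SUCC$ of $\MagicQKD$, and second, absorbing the extra registers of $\Eve$ into the bound. Both reductions should cost only lower-order terms, so the $\Omega ( \epsilon ) N$ rate survives.

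First I would control the passage from $\Gamma_{WIN ( 2 \epsilon )}$ to $\Gamma_{SUCC}$. Introduce the event $SUCC'$ that $\MagicQKD$ succeeds \emph{and} $WIN ( 2 \epsilon )$ holds. Since $SUCC' \Rightarrow WIN ( 2 \epsilon )$, the subnormalized state $\Gamma_{SUCC'}$ is a restriction of $\Gamma_{WIN ( 2 \epsilon )}$ to a classical subevent, so it still witnesses the bound of Corollary~\ref{launchpoint}: $H_{min}^{2 \exp ( - \epsilon^4 N )} ( R_{1 \ldots \epsilon N} \mid X_{1 \ldots \epsilon N} Y_{1 \ldots \epsilon N} E F ) \geq \Omega ( \epsilon ) N$ for $\Gamma_{SUCC'}$. (One should check that restricting a classical-quantum state to a subevent cannot drop the smooth min-entropy below the value established for the larger state, which holds because the optimal smoothing and the operator $\sigma$ for $\Gamma_{WIN ( 2 \epsilon )}$ restrict as well.) Next, Lemma~\ref{succprimelemma} gives $\mathbf{P} ( SUCC \wedge \neg SUCC' ) \leq e^{- 2 \epsilon^4 N}$; since $\Gamma_{SUCC} - \Gamma_{SUCC'}$ is subnormalized of trace at most this quantity, $\Delta ( \Gamma_{SUCC} , \Gamma_{SUCC'} ) \leq 2 e^{- 2 \epsilon^4 N}$, and hence $P ( \Gamma_{SUCC} , \Gamma_{SUCC'} ) \leq 2 \exp ( - \epsilon^4 N )$ using $P \leq \sqrt{2 \Delta}$. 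Applying the triangle inequality inside the smoothing ball then shows $\Gamma_{SUCC}$ satisfies the same lower bound with smoothing parameter $2 \exp ( - \epsilon^4 N ) + 2 \exp ( - \epsilon^4 N ) = 4 \exp ( - \epsilon^4 N )$; this is exactly inequality~(\ref{incompletesecurity}).

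Finally I would add the publicly revealed key portions $R_{1 \ldots \epsilon^2 N}$ and $S_{1 \ldots \epsilon^2 N}$ to the conditioning side. These together carry at most $2 \epsilon^2 N$ classical bits, so by the chain rule for min-entropy, conditioning on a classical system of that size decreases $H_{min}$ by at most $2 \epsilon^2 N = o ( \epsilon ) N$, which is dominated by $\Omega ( \epsilon ) N$. Since $\Eve = X_{1 \ldots \epsilon N} Y_{1 \ldots \epsilon N} R_{1 \ldots \epsilon^2 N} S_{1 \ldots \epsilon^2 N} E F$, this yields $H_{min}^{4 \exp ( - \epsilon^4 N )} ( \AliceKey \mid \Eve )_{\Gamma_{SUCC}} \geq \Omega ( \epsilon ) N$, as claimed.

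The genuinely hard work supporting this statement lies upstream — in Corollary~\ref{launchpoint}, hence in Theorem~\ref{bigthm} and the parallel-repetition bound Proposition~\ref{keyrandomprop}, and in the sampling-without-replacement estimate (Hoeffding) behind Lemma~\ref{succprimelemma}. Within the present argument the only delicate points are purely technical: verifying that the ``restriction to a subevent'' step is legitimate for subnormalized classical-quantum states, and tracking the smoothing parameters and constants so that the triangle inequality for purified distance is applied correctly. I do not expect either to present a real obstacle.
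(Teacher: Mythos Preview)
Your proposal is correct and follows essentially the same route as the paper: pass from $\Gamma_{WIN(2\epsilon)}$ to $\Gamma_{SUCC'}$ by restriction to a subevent, use Lemma~\ref{succprimelemma} and $P \leq \sqrt{2\Delta}$ to bound $P(\Gamma_{SUCC},\Gamma_{SUCC'})$, apply the triangle inequality on the smoothing parameter to reach inequality~(\ref{incompletesecurity}), and finally absorb the $2\epsilon^2 N$ bits of $R_{1\ldots\epsilon^2 N}S_{1\ldots\epsilon^2 N}$ into the conditioning at cost $o(\epsilon)N$. The paper makes exactly these moves in the paragraph preceding the proposition, and like you it leaves the ``restriction preserves the smooth min-entropy lower bound'' step implicit.
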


Meanwhile, by definition, the registers $\AliceKey$ and $\BobKey$ in the state $\Gamma_{SUCC'}$ differ
in at most $2 \epsilon^2 N$ places, and thus by Proposition~\ref{hammingprop}, we have
\begin{eqnarray}
\label{oepsilonineq}
H_0 ( \AliceKey \mid \BobKey )_{SUCC'} & \leq & N \epsilon H ( 2 \epsilon ) \\
& \leq & N o ( \epsilon ).
\end{eqnarray}
Applying Lemma~\ref{succprimelemma} yields the following.
\begin{proposition}
\label{zeroboundprop}
The state $\Gamma_{SUCC}$ at the conclusion of $\MagicQKD$ satisfies
\begin{eqnarray}
H_0^{2 \exp ( - 2 \epsilon^4 N  )} ( \AliceKey \mid \BobKey )
& \leq & o ( \epsilon ) N .
\end{eqnarray}
\end{proposition}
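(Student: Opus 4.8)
The plan is to obtain the statement with essentially no new work, by combining inequality~(\ref{oepsilonineq}) with Lemma~\ref{succprimelemma}.  Inequality~(\ref{oepsilonineq}) already gives the desired bound $H_0 ( \AliceKey \mid \BobKey )_{\Gamma_{SUCC'}} \leq N \epsilon H ( 2 \epsilon )$ for the subnormalized state $\Gamma_{SUCC'}$, with \emph{no} smoothing; and since $H ( 2 \epsilon ) \to 0$ as $\epsilon \to 0$, the right-hand side is $o ( \epsilon ) N$.  So the only thing left is to pass from $\Gamma_{SUCC'}$ to $\Gamma_{SUCC}$, which we can afford to do at the cost of the smoothing parameter $2 \exp ( - 2 \epsilon^4 N )$ because $SUCC'$ is a sub-event of $SUCC$ and the two conditioned states are exponentially close.

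In more detail: since $SUCC' \Longrightarrow SUCC$, the operator $\Gamma_{SUCC} - \Gamma_{SUCC'}$ equals $\Gamma_{SUCC \wedge \neg SUCC'} \succeq 0$, whose trace is $\mathbf{P} ( SUCC \wedge \neg SUCC' ) \leq e^{-2\epsilon^4 N}$ by Lemma~\ref{succprimelemma}; this gives $\Delta ( \Gamma_{SUCC} , \Gamma_{SUCC'} ) \leq 2 \exp ( - 2 \epsilon^4 N )$, exactly as in the discussion preceding Proposition~\ref{minboundprop}.  Applying the data processing inequality for the generalized trace distance $\Delta$ to the channel that traces out everything but the classical register $\AliceKey\, \BobKey$, the distribution of $\AliceKey\, \BobKey$ induced by $\Gamma_{SUCC'}$ lies within generalized trace distance $2 \exp ( - 2 \epsilon^4 N )$ of the one induced by $\Gamma_{SUCC}$.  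Since $h_0^{2 \exp ( - 2 \epsilon^4 N )} ( \AliceKey \mid \BobKey )_{\Gamma_{SUCC}}$ is by definition the maximum of $h_0 ( \AliceKey \mid \BobKey )$ over all subnormalized distributions within that distance, plugging in the $\Gamma_{SUCC'}$-distribution yields $H_0^{2 \exp ( - 2 \epsilon^4 N )} ( \AliceKey \mid \BobKey )_{\Gamma_{SUCC}} \leq H_0 ( \AliceKey \mid \BobKey )_{\Gamma_{SUCC'}}$, and then (\ref{oepsilonineq}) together with the asymptotics above finishes the proof.  (For completeness one recalls why (\ref{oepsilonineq}) holds: on $SUCC' \Longrightarrow WIN ( 2 \epsilon )$ we have $\Ham ( \AliceKey , \BobKey ) \leq 2 \epsilon^2 N$, so for each fixed $\BobKey$ the support of $\AliceKey$ lies in a Hamming ball of radius $(2\epsilon)(\epsilon N)$ in $\{0,1\}^{\epsilon N}$, of size at most $2^{H ( 2\epsilon ) \epsilon N}$ by Proposition~\ref{hammingprop}.)

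I do not expect a genuine obstacle: the two substantive ingredients --- the tail bound behind Lemma~\ref{succprimelemma} and the Hamming-ball count behind (\ref{oepsilonineq}) --- are already in hand, so what remains is bookkeeping.  The two small points to get right are (i) that the metric in the definition of $H_0^\delta$ is the generalized trace distance on \emph{subnormalized} distributions, which is precisely what makes the subnormalized $\Gamma_{SUCC'}$ an admissible smoothed state for $\Gamma_{SUCC}$, and (ii) the elementary asymptotic $\epsilon H ( 2 \epsilon ) = o ( \epsilon )$, i.e. $H ( 2 \epsilon ) \to 0$.  Neither requires any real computation.
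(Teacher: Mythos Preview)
Your proposal is correct and follows exactly the paper's approach: the paper's entire proof is the one-line remark ``Applying Lemma~\ref{succprimelemma} yields the following,'' and you have simply spelled out the bookkeeping behind that sentence---using (\ref{oepsilonineq}) for $\Gamma_{SUCC'}$ and then passing to $\Gamma_{SUCC}$ via the $\Delta$-closeness from Lemma~\ref{succprimelemma} and the definition of $H_0^\delta$. Your attention to the fact that the smoothing in $H_0^\delta$ is with respect to the generalized trace distance on subnormalized distributions is exactly the point that makes the argument go through.
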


We can now prove our main result.

\begin{proof}[Proof of Theorem~\ref{mainthm}]
Let
\begin{eqnarray}
\delta & = &  2 e^{-\epsilon^4N/3}.
\end{eqnarray}
If $\mathbf{P} ( SUCC ) \geq \delta $, then, by Propositions~\ref{nentprop1} and \ref{nentprop2} in the appendix,
\begin{eqnarray}
&& H_{min}^{\delta} ( \AliceKey \mid \Eve , SUCC ) - H_0^\delta ( \AliceKey \mid \BobKey, SUCC ) \\
\label{penultimate1}
& \geq & H_{min}^{\delta^3/2} ( \AliceKey \mid \Eve )_{SUCC} - H_0^{\delta^2} ( \AliceKey \mid \BobKey )_{SUCC} - \log (1/\delta) \\
\label{penultimate2}
& \geq & N \Omega ( \epsilon )  - N o ( \epsilon ) - [(\log e ) \epsilon^4N/3 + 1] \\
\label{finalquantity}
& \geq & N \Omega ( \epsilon ),
\end{eqnarray}
where in lines (\ref{penultimate1})--(\ref{penultimate2}), we used the fact that the terms $\delta^3/2$ and $\delta^2$ are at least as large as
the respective error terms in Propositions~\ref{minboundprop} and \ref{zeroboundprop}.
We now simply fix $\epsilon := \epsilon_0 > 0$ to be
sufficiently small that the function denoted by $\Omega ( \epsilon )$ in (\ref{finalquantity}) is positive,
and the proof is complete.
\end{proof}

\section{Acknowledgments}
This research was supported in part by the U.~S.~National Science Foundation (NSF) under Awards 1526928, 1500095, and 1717523, when Y.~S.~was at University of Michigan.
Work by R.~J.~on this research was supported by the Singapore
Ministry of Education and the National Research Foundation, through
the Tier 3 Grant ``Random numbers from quantum processes''
MOE2012-T3-1-009 and NRF RF Award NRF-NRFF2013-13.  This paper is partly a contribution
of the U. S. National Institute of Standards and Technology (NIST), and is not subject to copyright in the United States.

\newpage
\appendix

\appendix

\def\Player{\textit{Player}}

\section{Supporting Proofs for Entropy Measures}

The following two lemmas bound the amount that the
purified distance $P ( \sigma , \lambda )$ can increase under scalar multiplication of
the two states $\sigma, \lambda$.  We address a case where the scalar multiplication
makes the trace of the two states equal, and also a case where scalar multiplication normalizes the larger of the two states.

\begin{lemma}
\label{renormlemma}
Let $Q$ be a quantum register, let $\lambda, \sigma \in \mathcal{S} ( Q )$, and let $r = \Tr ( \lambda), s = \Tr ( \sigma)$.
Suppose that $s \geq r > 0$.  Then,
\begin{eqnarray}
\label{prenorm}
P ( (r/s) \sigma , \lambda ) & \leq & P ( \sigma, \lambda ).
\end{eqnarray}
\end{lemma}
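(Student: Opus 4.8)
The plan is to pass from the statement about purified distance to the equivalent statement about generalized fidelity, and then close with two applications of the Cauchy--Schwarz inequality. Since $P(\Lambda_1,\Lambda_2) = \sqrt{1-F(\Lambda_1,\Lambda_2)^2}$ and $t\mapsto\sqrt{1-t^2}$ is decreasing on $[0,1]$, inequality~(\ref{prenorm}) is equivalent to $F((r/s)\sigma,\lambda)\geq F(\sigma,\lambda)$, and I would work entirely with fidelities from here on.

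The first step is to unfold the generalized fidelity. Because $[\Lambda] = \Lambda\oplus(1-\Tr\Lambda)$ is block diagonal on $Q\oplus\mathbb{C}$, we have $\sqrt{[\Lambda_1]}\sqrt{[\Lambda_2]} = \bigl(\sqrt{\Lambda_1}\sqrt{\Lambda_2}\bigr)\oplus\sqrt{(1-\Tr\Lambda_1)(1-\Tr\Lambda_2)}$, and since the trace norm of a block-diagonal operator is the sum of the trace norms of its blocks,
\[ F(\Lambda_1,\Lambda_2) = \bigl\|\sqrt{\Lambda_1}\sqrt{\Lambda_2}\bigr\|_1 + \sqrt{(1-\Tr\Lambda_1)(1-\Tr\Lambda_2)}. \]
Writing $\alpha := \bigl\|\sqrt{\sigma}\sqrt{\lambda}\bigr\|_1$ and using $\Tr((r/s)\sigma) = r$ together with $\bigl\|\sqrt{(r/s)\sigma}\sqrt{\lambda}\bigr\|_1 = \sqrt{r/s}\,\alpha$, the target inequality $F((r/s)\sigma,\lambda)\geq F(\sigma,\lambda)$ rearranges to
\[ (1-r) - \sqrt{(1-s)(1-r)} \geq \alpha\Bigl(1-\sqrt{r/s}\Bigr). \]

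The key observation is that the coefficient $1-\sqrt{r/s}$ is nonnegative --- this is exactly where the hypothesis $s\geq r$ is used --- so it suffices to bound $\alpha$ from \emph{above}. By Cauchy--Schwarz for Schatten norms, $\alpha = \bigl\|\sqrt{\sigma}\sqrt{\lambda}\bigr\|_1 \leq \bigl\|\sqrt{\sigma}\bigr\|_2\,\bigl\|\sqrt{\lambda}\bigr\|_2 = \sqrt{\Tr\sigma}\,\sqrt{\Tr\lambda} = \sqrt{sr}$. Substituting this and simplifying $\sqrt{sr}\bigl(1-\sqrt{r/s}\bigr) = \sqrt{sr}-r$, the claim reduces to $\sqrt{sr} + \sqrt{(1-s)(1-r)} \leq 1$, which is Cauchy--Schwarz applied to the unit vectors $(\sqrt{s},\sqrt{1-s})$ and $(\sqrt{r},\sqrt{1-r})$ (legitimate since $0<r\leq s\leq 1$, so every radicand is nonnegative). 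Chaining these inequalities completes the proof. The only thing needing care is the subnormalization bookkeeping --- tracking the $1-\Tr$ terms in $F$, and noticing that an \emph{upper} bound on $\alpha$ is what is wanted rather than a lower bound; beyond that there is no genuine obstacle.
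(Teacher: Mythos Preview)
Your proof is correct and takes essentially the same approach as the paper: both pass to the equivalent fidelity inequality, unfold the generalized fidelity into its two additive pieces, and close using Cauchy--Schwarz together with the bound $\|\sqrt{\sigma}\sqrt{\lambda}\|_1 \leq \sqrt{rs}$ (equivalently, that the normalized fidelity is at most $1$). The paper arranges the same two ingredients in the opposite order---applying Cauchy--Schwarz to the $2$-vectors $(\sqrt{1-r},\sqrt{rf})$ and $(\sqrt{1-s},\sqrt{sf})$ first and then invoking $f\leq 1$ to replace $s$ by $r$---but the content is identical.
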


\begin{proof}
Let $\Lambda, \Sigma$ be the normalizations of $\lambda, \sigma$.
Using the Cauchy-Schwartz inequality, we have the following.
\begin{eqnarray}
F ( \sigma, \lambda ) & = & \sqrt{ (1 - r ) (1 - s ) } + \sqrt{ rs} \left\| \sqrt{ \Lambda } \sqrt { \Sigma } \right\|_1  \\
& = & \sqrt{ (1-r) + r \left\| \sqrt{ \Lambda } \sqrt { \Sigma } \right\|_1 } \sqrt{ (1-s) + s \left\| \sqrt{ \Lambda } \sqrt { \Sigma } \right\|_1 } \\
& \leq & \sqrt{ (1-r) + r \left\| \sqrt{ \Lambda } \sqrt { \Sigma } \right\|_1 } \sqrt{ (1-r) + r \left\| \sqrt{ \Lambda } \sqrt { \Sigma } \right\|_1 } \\
& = & F ( (r/s) \sigma, \lambda ),
\end{eqnarray}
Inequality (\ref{prenorm}) follows.
\end{proof}

\begin{lemma}
\label{renormlemma2}
Under the assumptions of Lemma~\ref{renormlemma}, the following inequality also holds.
\begin{eqnarray}
\label{prenorm2}
P ( \sigma/s , \lambda/s ) \leq
\sqrt{(2/s) P ( \sigma, \lambda)}.
\end{eqnarray}
\end{lemma}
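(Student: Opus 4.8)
The plan is to sidestep a direct fidelity computation and instead route the estimate through the generalized trace distance $\Delta$, exploiting the stated relations $\Delta \leq P \leq \sqrt{2\Delta}$ on $\mathcal{S}(Q)$. First I would check that $\sigma/s$ and $\lambda/s$ both lie in $\mathcal{S}(Q)$, so that $P$ and $\Delta$ are defined on them: $\Tr(\sigma/s) = 1$ and $\Tr(\lambda/s) = r/s \leq 1$, the latter by the hypothesis $s \geq r > 0$ inherited from Lemma~\ref{renormlemma}.

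Next I would compute $\Delta(\sigma/s, \lambda/s)$ directly from the definition of the generalized trace distance. Scaling by the positive constant $1/s$ pulls out of the trace norm, so $\|\sigma/s - \lambda/s\|_1 = \frac{1}{s}\|\sigma - \lambda\|_1$, and likewise $|\Tr(\sigma/s) - \Tr(\lambda/s)| = \frac{1}{s}|s - r|$; hence $\Delta(\sigma/s, \lambda/s) = \frac{1}{s}\Delta(\sigma, \lambda)$. Then I would chain the inequalities: $P(\sigma/s, \lambda/s) \leq \sqrt{2\,\Delta(\sigma/s, \lambda/s)} = \sqrt{(2/s)\,\Delta(\sigma, \lambda)} \leq \sqrt{(2/s)\,P(\sigma, \lambda)}$, where the first step is $P \leq \sqrt{2\Delta}$ and the last is $\Delta \leq P$. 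This is exactly inequality (\ref{prenorm2}).

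There is no substantive obstacle here; the only thing worth flagging is that the naive approach --- writing out $F(\sigma/s, \lambda/s) = \frac{1}{s}\|\sqrt{\sigma}\sqrt{\lambda}\|_1$ (the padding term $\sqrt{(1-1)(1-r/s)}$ vanishes) and comparing it against $F(\sigma,\lambda) = \|\sqrt{\sigma}\sqrt{\lambda}\|_1 + \sqrt{(1-s)(1-r)}$ --- leads to an awkward inequality between $1 - (\|\sqrt{\sigma}\sqrt{\lambda}\|_1/s)^2$ and $(2/s)\sqrt{1 - (\|\sqrt{\sigma}\sqrt{\lambda}\|_1 + \sqrt{(1-s)(1-r)})^2}$, whereas the trace-distance detour collapses the whole argument to one line. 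One should simply confirm that the square root and the factor $2$ in the claimed bound match the loss incurred by the passage $P \leq \sqrt{2\Delta}$ (one square root, factor $2$) composed with the lossless passage $\Delta \leq P$, which they do.
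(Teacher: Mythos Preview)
Your proof is correct and follows essentially the same route as the paper's own proof: both observe that the generalized trace distance scales linearly, $\Delta(\sigma/s,\lambda/s) = (1/s)\,\Delta(\sigma,\lambda)$, and then sandwich with the relations $\Delta \leq P \leq \sqrt{2\Delta}$ to obtain the bound. Your additional remarks (verifying subnormalization of the rescaled states, and noting why a direct fidelity computation is less convenient) are sound but not needed for the argument.
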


\begin{proof}
Note that the quantity
\begin{eqnarray}
\Delta ( c \sigma, c \lambda ) & = & c \left\| \sigma - \lambda \right\|_1 + c \left| \Tr \sigma - \Tr \lambda \right|
\end{eqnarray}
is linear in $c$.  We have
\begin{eqnarray}
P ( \sigma/s , \lambda/s ) \leq \sqrt{ 2 \Delta ( \sigma/s, \lambda/s ) } \leq \sqrt{(2/s) \Delta ( \sigma, \lambda ) }
\leq \sqrt{(2/s) P ( \sigma, \lambda)},
\end{eqnarray}
as desired.
\end{proof}

Now we use Lemma~\ref{renormlemma2} to address how smooth min-entropy behaves under normalization.
\begin{proposition}
\label{nentprop1}
Let $\sigma \in \mathcal{S} ( QR)$ be a nonzero state, let $\Sigma$ be its normalization, and let $\delta > 0$.  Then,
\begin{eqnarray}
\label{deltasquared}
H_{min}^{\delta} ( Q \mid R )_{\Sigma} & \geq & H_{min}^{\delta^2\Tr ( \sigma)/2} ( Q \mid R )_\sigma - \log (1/\Tr ( \sigma)).
\end{eqnarray}
\end{proposition}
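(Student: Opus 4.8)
The plan is to unwind the definitions and transfer an optimal smoothing state for the right-hand side over to the left-hand side by renormalizing it. Write $s := \Tr(\sigma)$, so that $\Sigma = \sigma/s$. Since $H_\ast^\ast = -\log h_\ast^\ast$, the desired inequality~(\ref{deltasquared}) is equivalent to
\[
h_{min}^{\delta}(Q \mid R)_{\Sigma} \;\leq\; \frac{1}{s}\, h_{min}^{\delta^2 s/2}(Q \mid R)_{\sigma},
\]
and it therefore suffices to exhibit a single subnormalized state lying within purified distance $\delta$ of $\Sigma$ whose unsmoothed quantity $h_{min}(Q \mid R)$ equals $\tfrac{1}{s}\,h_{min}^{\delta^2 s/2}(Q \mid R)_{\sigma}$.

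First I would pick $\sigma' \in \mathcal{S}(QR)$ attaining the minimum in the definition of $h_{min}^{\delta^2 s/2}(Q \mid R)_{\sigma}$, so that $P(\sigma, \sigma') \leq \delta^2 s/2$ and $h_{min}(Q \mid R)_{\sigma'} = h_{min}^{\delta^2 s/2}(Q \mid R)_{\sigma}$; by the remark following Definition~\ref{entropydef1} we may additionally take $\Tr(\sigma') \leq \Tr(\sigma) = s$ (and we may assume $\sigma' \neq 0$, the opposite case being immediate). Applying Lemma~\ref{renormlemma2} with $\lambda := \sigma'$ — legitimate precisely because $\Tr(\sigma') \leq s$ — gives
\[
P\!\left( \frac{\sigma}{s},\, \frac{\sigma'}{s} \right) \;\leq\; \sqrt{\frac{2}{s}\, P(\sigma, \sigma')} \;\leq\; \sqrt{\frac{2}{s}\cdot \frac{\delta^2 s}{2}} \;=\; \delta .
\]
Since $\Tr(\sigma'/s) = \Tr(\sigma')/s \leq 1$, the operator $\sigma'/s$ is a subnormalized state, and by the display it lies within purified distance $\delta$ of $\Sigma = \sigma/s$; hence it is an admissible competitor in the minimization defining $h_{min}^{\delta}(Q \mid R)_{\Sigma}$, so that $h_{min}^{\delta}(Q \mid R)_{\Sigma} \leq h_{min}(Q \mid R)_{\sigma'/s}$.

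Finally I would invoke the scalar homogeneity of $h_{min}$: directly from Definition~\ref{entropydef1}, the constraint $I_Q \otimes \tau \geq \sigma'/s$ is equivalent to $I_Q \otimes (s\tau) \geq \sigma'$, so $h_{min}(Q \mid R)_{\sigma'/s} = \tfrac{1}{s}\, h_{min}(Q \mid R)_{\sigma'} = \tfrac{1}{s}\, h_{min}^{\delta^2 s/2}(Q \mid R)_{\sigma}$. Chaining this with the previous inequality yields $h_{min}^{\delta}(Q \mid R)_{\Sigma} \leq \tfrac{1}{s}\, h_{min}^{\delta^2 s/2}(Q \mid R)_{\sigma}$, and taking $-\log$ (which reverses the inequality and converts the multiplicative $1/s$ into the additive $-\log(1/s) = -\log(1/\Tr(\sigma))$) gives exactly~(\ref{deltasquared}). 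I do not expect a genuine obstacle here: the quantitative content is entirely carried by Lemma~\ref{renormlemma2}, and the only points needing care are arranging $\Tr(\sigma') \leq \Tr(\sigma)$ so that lemma applies, and the one-line homogeneity computation for $h_{min}$.
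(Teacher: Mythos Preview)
Your proof is correct and follows essentially the same route as the paper's: pick an optimal smoothing state $\sigma'$ for $\sigma$ with $\Tr(\sigma')\le s$, rescale by $1/s$, apply Lemma~\ref{renormlemma2} to bound $P(\sigma/s,\sigma'/s)\le\delta$, and use the scalar homogeneity of $h_{min}$ to produce the $\log(1/\Tr(\sigma))$ term. You spell out the homogeneity computation and the $\sigma'\ne 0$ edge case more explicitly than the paper does, but there is no substantive difference.
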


\begin{proof}
Let $s = \Tr ( \sigma)$.
Find a state $\sigma'$ satisfying
satisfying $\Tr ( \sigma' ) \leq s$ and $P ( \sigma' , \sigma ) \leq \delta^2 s /2$ such that
\begin{eqnarray}
H_{min} ( Q \mid R )_{\sigma'} = H_{min}^{\delta^2s/2} ( Q \mid R )_\sigma.
\end{eqnarray}
(See the discussion following Definition~\ref{entropydef1}.)
The conditional min-entropy of $\sigma'/s$ is then given by the expression on the righthand
side of (\ref{deltasquared}), and by Lemma~\ref{renormlemma2},
\begin{eqnarray}
P ( \sigma'/s , \sigma/s ) \leq \sqrt{ 2 P ( \sigma', \sigma ) / s } \leq \delta.
\end{eqnarray}
Inequality (\ref{deltasquared}) follows.
\end{proof}

The next proposition similarly addresses how $H_0^\delta$ behaves under normalization.

\begin{proposition}
\label{nentprop2}
Let $q$ be a nonzero subnormalized probability distribution on $S \times T$, where $S, T$ are finite sets, and let 
$s$ be the norm of $q$.  Let $\delta > 0$.  Then,
\begin{eqnarray}
 H_0^{\delta} ( S \mid T )_{q/s} & = & H_0^{s \delta} ( S \mid T )_q.
\end{eqnarray}
\end{proposition}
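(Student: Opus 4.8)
The plan is to reduce the stated identity to the corresponding identity for $h_0^\delta = 2^{-H_0^\delta}$ and then to set up an explicit scaling bijection between the two smoothing neighbourhoods. Since $H_0^\delta = -\log h_0^\delta$, it suffices to show $h_0^\delta(S\mid T)_{q/s} = h_0^{s\delta}(S\mid T)_q$, where $s = \sum_{(u,t)\in S\times T} q(u,t) > 0$, so that $q/s$ is normalized. Two elementary observations drive the argument. First, the formula defining $\Delta$ on subnormalized distributions shows that $\Delta$ is positively homogeneous: $\Delta(c p, c p') = c\,\Delta(p,p')$ for every scalar $c>0$. Second, $h_0(S\mid T)_{\,\cdot\,}$ is scale-invariant, because for $c>0$ the set $\{u \in S : c\,q'(u,t) > 0\}$ equals $\{u \in S : q'(u,t) > 0\}$ for every $t$, and hence $h_0(S\mid T)_{cq'} = h_0(S\mid T)_{q'}$.

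Next I would invoke the remark following \eqref{cneighborhood}, which allows both smooth quantities to be computed by optimizing only over distributions dominated by the reference distribution; that is,
\[
h_0^\delta(S\mid T)_{q/s} = \max\{\, h_0(S\mid T)_{q'} : q' \le q/s,\ \Delta(q', q/s) \le \delta \,\},
\]
and likewise $h_0^{s\delta}(S\mid T)_q = \max\{\, h_0(S\mid T)_{q''} : q'' \le q,\ \Delta(q'', q) \le s\delta \,\}$. Restricting to dominated distributions is precisely what makes the scaling map well-behaved: if $q' \le q/s$ then $sq' \le q$ and $\|sq'\|_1 \le s \le 1$, so $sq'$ is a subnormalized distribution; conversely, if $q'' \le q$ then $q''/s \le q/s$ is subnormalized. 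Thus $q' \mapsto sq'$ is a bijection from the first candidate set onto the second, with inverse $q'' \mapsto q''/s$. By homogeneity $\Delta(q', q/s) = \tfrac{1}{s}\,\Delta(sq', q)$, so $q'$ lies in the first neighbourhood if and only if $sq'$ lies in the second, and by scale-invariance $h_0(S\mid T)_{q'} = h_0(S\mid T)_{sq'}$. Hence the two maxima coincide, which is the desired identity.

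The only step that requires care --- and the one I would regard as the crux --- is that the two smoothing neighbourhoods both consist of \emph{subnormalized} distributions, and these neighbourhoods are not literally scalar multiples of one another, since naively rescaling a subnormalized distribution can push its total mass above $1$. Passing to dominated distributions, as licensed by the earlier remark, removes this obstruction. Everything else is routine unwinding of the definitions of $\Delta$ and $h_0$.
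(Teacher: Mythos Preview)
Your proof is correct and follows the same idea as the paper's own proof, which is a single line: ``This is immediate from the linearity of the distance function $\Delta$.'' Your version simply spells out the details of that linearity argument, and your care in passing to dominated distributions (so that the scaling map $q'\mapsto sq'$ stays inside the subnormalized regime) is a genuine subtlety that the paper leaves implicit.
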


\begin{proof}
This is immediate from the linearity of the distance function $\Delta$.
\end{proof}

\section{Proof of Proposition~\ref{magicprop}}

\label{sgapp}

Our proof builds off of steps from the proof of rigidity for the Magic Square
game \cite{Wu:2016}.  We will reproduce the fact that
any near-optimal strategy for the Magic Square must involve approximately anti-commuting measurements,
and use that fact  to deduce inequality (\ref{magicpropineq}).

Let $\{ F_x^a \}, \{ G_y^b \}, \{ {F'}_x^a \}$ be the measurements
used by Alice, Bob, and Alice$'$, respectively, which we will assume (without loss of generality) to be projective, and let $\Phi$
denote their shared state, which we will assume to be pure: $\Phi = \phi \phi^*$.  For $i,j \in \{ 1, 2, 3 \}$, let $F_{ij}$ denote the reflection
operator
\begin{eqnarray}
F_{ij} & = & \sum_{\substack{a \in \mathcal{X} \\ a_j = 0} } F_i^a - \sum_{\substack{a \in \mathcal{X} \\ a_j = 1} } F_i^a,
\end{eqnarray}
define $F'_{ij}$ similarly in terms of $\{ {F'}_x^a \}$, and let
\begin{eqnarray}
G_{ij} & = & \sum_{\substack{b \in \mathcal{X} \\ b_i = 0} } G_j^b - \sum_{\substack{b \in \mathcal{Y} \\ b_i = 1} } G_j^b.
\end{eqnarray}
Note that $F_{ij}$ and $F_{ik}$ always commute and
$F_{i1} F_{i2} F_{i3} = I$, that $G_{ij}$ and $G_{kj}$ always commute and $G_{1j} G_{2j} G_{3j} = -I$,
and similar relationships hold for $F'_{ij}$.

Let
\begin{eqnarray}
\delta & = & \mathbf{P} ( A_Y \neq B_X )  \\
\delta_{ij} & = & \mathbf{P} ( A_Y \neq B_X \mid X = i, Y = j) 
\end{eqnarray}
and
\begin{eqnarray}
\epsilon & = & \mathbf{P} ( A_Y \neq A'_{Y'} \mid X = X', Y = Y' ), \\
\epsilon_{ij} & = & \mathbf{P} ( A_Y \neq A'_{Y'} \mid X = X' = i, Y = Y' = j ).
\end{eqnarray}
Note that
\begin{eqnarray}
\mathbf{P} ( L = 1 ) & \leq & \mathbf{P} ( X = X', Y = Y') \mathbf{P} ( Z = Z' \vee A_Y = B_X \mid X = X', Y = Y') \\
& = & (1/9) (1 - \delta/2),
\end{eqnarray}
and also
\begin{eqnarray}
\mathbf{P} ( L = 1 ) & \leq & \mathbf{P} ( X = X', Y = Y') \mathbf{P} ( A_Y = A'_{Y'} \mid X = X', Y = Y' ) \\
& = & (1/9) ( 1 - \epsilon ).
\end{eqnarray}
Thus,
\begin{eqnarray}
\label{ellone}
\mathbf{P} ( L = 1 ) & \leq & (1/9) - (1/9) \max \left\{ \epsilon, \delta/2 \right\},
\end{eqnarray}
and to complete the proof it suffices to find a general lower bound for $\max \left\{ \epsilon, \delta/2 \right\}$.

Note that for any $i,j  \in \{ 1, 2, 3 \}$,
\begin{eqnarray}
\mathbf{P} ( A_Y \neq B_X \mid X = i, Y = j ) & = & (1 - \phi^* F_{ij} \otimes G_{ij} \phi )/2 \\
& = & \left\| \phi - (F_{ij} \otimes G_{ij}) \phi \right\|^2/4,
\end{eqnarray}
and thus
\begin{eqnarray}
\left\| \phi - (F_{ij} \otimes G_{ij}) \phi \right\| & = & 2 \sqrt{\delta_{ij}}.
\end{eqnarray}
By similar reasoning,
\begin{eqnarray}
\left\| \phi - (F_{ij} \otimes F'_{ij}) \phi \right\| & = & 2 \sqrt{\epsilon_{ij}}.
\end{eqnarray}

We exploit the approximate anti-commutativity relations for $\{ F_{ij} \}$ which are proven in \cite{Wu:2016}.  We have the following.
\begin{eqnarray*}
\left\| (F_{11} F_{22}  ) \phi - (F_{11} \otimes G_{22}  ) \phi \right\| & \leq & 2 \sqrt{\delta_{22}} \\
\left\| (F_{11} F_{22}  ) \phi - (G_{22} G_{11}  ) \phi \right\| & \leq & 2 (\sqrt{\delta_{22}} + \sqrt{\delta_{11}})  \\
\left\| (F_{11} F_{22}  ) \phi - (G_{12} G_{32} G_{31}  G_{21}  ) \phi \right\| & \leq & 2 (\sqrt{\delta_{22}} + \sqrt{\delta_{11}}) \\
\left\| (F_{11} F_{22}  ) \phi - (F_{21} F_{31} F_{32}\otimes G_{12} ) \phi \right\| & \leq & 2 (\sqrt{\delta_{22}} + \sqrt{\delta_{11}} +
\sqrt{\delta_{32}} + \sqrt{\delta_{31}} + \sqrt{\delta_{21}}) \\
\left\| (F_{11} F_{22}  ) \phi - (F_{21} F_{33}  \otimes G_{12}  ) \phi \right\| & \leq & 2 (\sqrt{\delta_{22}} + \sqrt{\delta_{11}} +
\sqrt{\delta_{32}} + \sqrt{\delta_{31}} + \sqrt{\delta_{21}}) \\
\left\| (F_{11} F_{22}  ) \phi - (F_{21}  \otimes G_{12} G_{33} ) \phi \right\| & \leq & 2 (\sqrt{\delta_{22}} + \sqrt{\delta_{11}} +
\sqrt{\delta_{32}} + \sqrt{\delta_{31}} + \sqrt{\delta_{21}} + \sqrt{\delta_{33}}) \\
\left\| (F_{11} F_{22}  ) \phi - (- F_{21}  \otimes G_{12} G_{13} G_{23}  ) \phi \right\| & \leq & 2 (\sqrt{\delta_{22}} + \sqrt{\delta_{11}} +
\sqrt{\delta_{32}} + \sqrt{\delta_{31}} + \sqrt{\delta_{21}} + \sqrt{\delta_{33}}) \\
\left\| (F_{11} F_{22}  ) \phi - (- F_{21}  F_{23} F_{13} F_{12}  ) \phi \right\| & \leq & 2 \sum_{ij} \sqrt{\delta_{ij}} \\
\left\| (F_{11} F_{22}  ) \phi - (- F_{22} F_{11} ) \phi \right\| & \leq & 2 \sum_{ij} \sqrt{\delta_{ij}} \\
\left\| (F_{11} F_{22}  ) \phi + ( F_{22} F_{11}  ) \phi \right\| & \leq & 2 \sum_{ij} \sqrt{\delta_{ij}}.
\end{eqnarray*} 
By the concavity of the square root function, this yields
\begin{eqnarray}
\nonumber
\left\| (F_{11} F_{22}  ) \phi + ( F_{22} F_{11}  ) \phi \right\| & \leq & 18 \sum_{ij} \sqrt{\delta_{ij}}/9. \\
\nonumber
& \leq & 18 \sqrt{ \sum_{ij} \delta_{ij} / 9} \\
\label{deltabound} & = & 18 \sqrt{ \delta}.
\end{eqnarray}

We also have the following, in which we make use of the approximate compatibility of the measurements $\{ F_{ij} \}$
and the measurements $\{ F'_{ij} \}$. 
\begin{eqnarray}
\left\| (F_{11} F_{22} \otimes I ) \phi - (F_{11} \otimes F'_{22} \otimes I ) \phi \right\| & \leq & 2 \sqrt{\epsilon_{11}}  \\
\left\| (F_{11} F_{22} \otimes I ) \phi - (G_{11} \otimes F'_{22} \otimes I ) \phi \right\| & \leq & 2 \sqrt{\epsilon_{11}} + 2 \sqrt{\delta_{22}} \\
\left\| (F_{11} F_{22} \otimes I ) \phi - (G_{11} \otimes F_{22} \otimes I ) \phi \right\| & \leq &  4 \sqrt{\epsilon_{11}} + 2 \sqrt{\delta_{22}} \\
\label{almostcommute}
\left\| (F_{11} F_{22} \otimes I ) \phi - (F_{22} F_{11}  \otimes I ) \phi \right\| & \leq & 4 \sqrt{\epsilon_{11}} + 4 \sqrt{\delta_{22}},
\label{secondfineq}
\end{eqnarray}
Combining (\ref{almostcommute}) via the triangle inequality with (\ref{deltabound}) (and using the fact that $(F_{22} F_{11} \otimes I ) \phi$ is a unit vector) yields
\begin{eqnarray}
2 & \leq & 18 \sqrt{\delta} + 4 \sqrt{\epsilon_{11}} + 4 \sqrt{\delta_{22}} 
\end{eqnarray}
By symmetry, we likewise have the following for any $i, j, i', j' \in \{ 1, 2, 3 \}$ with $i \neq i', j \neq j'$:
\begin{eqnarray}
2 & \leq & 18 \sqrt{\delta} + 4 \sqrt{\epsilon_{ij}} + 4 \sqrt{\delta_{i'j'}} 
\end{eqnarray}
Averaging all such inequalities and exploiting the concavity of the square root function, we obtain
\begin{eqnarray}
2 & \leq & 18 \sqrt{\delta} + 4 \sqrt{\epsilon} + 4 \sqrt{\delta},
\end{eqnarray}
which implies
\begin{eqnarray}
\label{oneineq}
1 & \leq & 11 \sqrt{\delta} + 2 \sqrt{\epsilon}.
\end{eqnarray}

From (\ref{oneineq}), we can compute a lower bound on $\max \{ \epsilon, \delta/2 \}$.  If $\epsilon \leq \delta/2$, then, 
\begin{eqnarray}
1 & \leq & 11 \sqrt{\delta} + \sqrt{2 \delta}
\end{eqnarray}
which yields
\begin{eqnarray}
\delta/2 & \geq & (1/2) \cdot (11 + \sqrt{2})^{-2},
\end{eqnarray}
while if $\epsilon \geq \delta/2$, similar reasoning yields
\begin{eqnarray}
\epsilon & \geq & (1/2) \cdot (11 + \sqrt{2})^{-2}.
\end{eqnarray}
Therefore,
\begin{eqnarray}
\max \{ \epsilon, \delta/2 \} & \geq & (1/2) \cdot (11 + \sqrt{2})^{-2}.
\end{eqnarray}
Substituting this value into (\ref{ellone}), we find
\begin{eqnarray}
\mathbf{P} ( L = 1 ) & \leq & (1/9) - (1/18) \cdot (11 + \sqrt{2})^{-2} \\
& \leq & (1/9) - 0.00035,
\end{eqnarray}
as desired.

\section{Randomly chosen rounds in parallel repetition of a free game}

\label{parsec}

In this appendix, we prove that in a parallel repetition of a free game,
the performance of the players on a small number of randomly chosen rounds is not much better
than their performance would have been in a sequential scenario.  
Our proof is a rearrangement of elements from \cite{Chailloux:2014, Jain:2014, Chung:2015, Bavarian:2015arXiv}.

For any
state $\rho$ of a bipartite system $QR$, the mutual information between $Q$ and $R$ and is given by
\begin{eqnarray}
I ( Q \colon R )_\rho & = & H ( QR ) - H ( Q) - H ( R ).
\end{eqnarray}
Let $S ( \rho \| \sigma ) = \Tr [ \rho \log \rho] - \Tr [ \rho \log \sigma ]$ denote the relative
entropy function.  The following relationship holds:
\begin{eqnarray}
\label{altiexp}
I ( Q \colon R )_\rho & = & S ( \rho \| \rho^A \otimes \rho^B ) \\
\end{eqnarray}
Also, the relative entropy function is related to the purified distance as follows: if $\alpha, \beta$ are density operators, then
\begin{eqnarray}
P ( \alpha, \beta ) & \leq & \sqrt{ S ( \alpha \| \beta ) }. 
\end{eqnarray}
(This follows from, e.g., Lemma~5 in \cite{Jain:2003}.) 

Throughout this section, we assume that a free game $G = ( \mathcal{X}, \mathcal{A}, p, L)$, with $w ( G ) > 0$, has been fixed.  (Thus we avoid any need to note the influence of $G$ on error terms.)

\subsection{Preliminaries}

Our first result
asserts (roughly) that if a state $\gamma$ of a bipartite system $TQ$ is dominated
by a small scalar multiple of a state that is uniform on $T$, then $H ( T \mid Q )_\gamma$ must
be close to $\log \left| T \right|$.
\begin{lemma}
\label{subuniformlemma}
Let $\gamma$ be a classical-quantum state of a bipartite system $TQ$ such that
\begin{eqnarray}
\gamma \leq \lambda (U_T \otimes \gamma^Q),
\end{eqnarray}
where $\lambda$ denotes a real number.  Then,
\begin{eqnarray}
H ( T \mid Q )_\gamma & \geq & \log \left| T \right| - 2 \log (1/\lambda)
\end{eqnarray}
\end{lemma}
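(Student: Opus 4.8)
The plan is to unfold the definition of the conditional von Neumann entropy and rewrite the hypothesis $\gamma \leq \lambda ( U_T \otimes \gamma^Q )$ as a relative-entropy bound. Recall that $H ( T \mid Q )_\gamma = H ( TQ )_\gamma - H ( Q )_\gamma = - \Tr [ \gamma \log \gamma ] + \Tr [ \gamma^Q \log \gamma^Q ]$. The key observation is that for a classical–quantum state $\gamma$ on $TQ$ with classical register $T$, one has $\Tr [ \gamma \log ( U_T \otimes \gamma^Q ) ] = - \log \lvert T \rvert \cdot \Tr ( \gamma ) + \Tr [ \gamma^Q \log \gamma^Q ]$, since $\log ( U_T \otimes \gamma^Q ) = ( \log \lvert T \rvert^{-1} ) I_T \otimes I_Q + I_T \otimes \log \gamma^Q$ and the partial trace of $\gamma$ over $T$ is $\gamma^Q$. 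Hence
\begin{eqnarray}
H ( T \mid Q )_\gamma & = & - \Tr [ \gamma \log \gamma ] + \Tr [ \gamma \log ( U_T \otimes \gamma^Q ) ] + \log \lvert T \rvert \\
& = & \log \lvert T \rvert - S \big( \gamma \, \big\| \, U_T \otimes \gamma^Q \big),
\end{eqnarray}
where I assume the state is normalized, $\Tr ( \gamma ) = 1$ (the subnormalized case is handled by the same computation with an extra additive term that only helps).

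First I would establish this identity carefully, using that $\gamma$ is classical on $T$ so that $\gamma$ and $U_T \otimes \gamma^Q$ are simultaneously block-diagonal in the $T$-basis, which makes all the logarithms and traces well defined and the cross term clean. Next, I would use the operator inequality $\gamma \leq \lambda ( U_T \otimes \gamma^Q )$ to bound the relative entropy from above. By operator monotonicity of the logarithm, $\log \gamma \leq \log \big( \lambda ( U_T \otimes \gamma^Q ) \big) = ( \log \lambda ) I + \log ( U_T \otimes \gamma^Q )$ on the support of $\gamma$; multiplying on both sides by $\gamma^{1/2}$ and taking traces gives $\Tr [ \gamma \log \gamma ] \leq ( \log \lambda ) \Tr ( \gamma ) + \Tr [ \gamma \log ( U_T \otimes \gamma^Q ) ]$, i.e. $S ( \gamma \, \| \, U_T \otimes \gamma^Q ) \leq \log \lambda$. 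Since $\lambda \geq 1$ necessarily (as $\gamma$ and $U_T \otimes \gamma^Q$ both have trace $1$ and the inequality forces $\lambda \geq 1$), this is a genuine bound. Combining with the displayed identity yields $H ( T \mid Q )_\gamma \geq \log \lvert T \rvert - \log \lambda$, which is even slightly stronger than the claimed $\log \lvert T \rvert - 2 \log ( 1 / \lambda )$ — so I would either present the sharper bound or simply note that the stated one follows a fortiori (here $\log ( 1/\lambda ) = - \log \lambda \leq 0$, so one should read the statement as $2 \log \lambda$; in any case the factor $2$ gives slack).

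The main obstacle, such as it is, lies in the correct handling of supports and the direction of the operator-monotonicity argument: $\log$ is operator monotone, so $A \leq B$ implies $\log A \leq \log B$ only when both are positive and one restricts to the appropriate support, and one must check that $\gamma \leq \lambda ( U_T \otimes \gamma^Q )$ implies $\mathrm{supp} ( \gamma ) \subseteq \mathrm{supp} ( U_T \otimes \gamma^Q )$ so that $\log ( U_T \otimes \gamma^Q )$ makes sense against $\gamma$. This is immediate, but worth stating. A secondary point is the bookkeeping in the subnormalized case (if needed later): one would replace $\log \lvert T \rvert$ by the same expression and absorb $\Tr ( \gamma ) \leq 1$ with the right sign; since the paper's usage here appears to be for normalized or near-normalized $\gamma$, I would just remark on this and not belabor it.
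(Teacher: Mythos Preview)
Your proof is correct and in fact slightly cleaner than the paper's. The paper decomposes $H(T\mid Q) = H(T) - I(T:Q)$ and bounds each piece separately: first $H(T) \geq \log|T| - \log\lambda$ from the eigenvalue bound $\gamma^T \leq (\lambda/|T|)I$, and then $I(T:Q) = S(\gamma \,\|\, \gamma^T\otimes\gamma^Q) \leq S(\gamma \,\|\, U_T\otimes\gamma^Q) \leq \log\lambda$ via exactly the operator-monotonicity step you use. Combining these costs a factor of $2$. You instead invoke the exact identity $H(T\mid Q) = \log|T| - S(\gamma\,\|\,U_T\otimes\gamma^Q)$ and apply the same monotonicity bound once, obtaining $H(T\mid Q)\geq \log|T|-\log\lambda$ directly. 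Both routes hinge on the same key step (operator monotonicity of $\log$ applied to $\gamma \leq \lambda(U_T\otimes\gamma^Q)$); your version simply avoids the detour through $\gamma^T\otimes\gamma^Q$ and the separate $H(T)$ estimate, which is why you recover the sharper constant. Your reading of the sign issue is also right: throughout one should interpret the loss as $\log\lambda$ (with $\lambda\geq 1$), and in the paper's application $\lambda = w(G)^{-2t}$ so $\log\lambda = O(t)$ as needed.
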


\begin{proof}
We have $H ( T \mid Q )_\gamma  =  H ( T )_\gamma - I ( T : Q )_\gamma$.
It is obvious that the quantity $H ( T )_\gamma$ is at least $\log \left| T \right| - \log (1/\lambda)$
since the eigenvalues of $\gamma^T$ do not exceed $\lambda/\left| T \right|$.   Thus we need
only prove that $I ( T: Q )_\gamma \leq \log ( 1/\lambda)$.

We can write
\begin{eqnarray}
I ( T : Q ) & = & S ( \gamma \| \gamma^T \otimes \gamma^Q ).  
\end{eqnarray}
Note that the quantity
\begin{eqnarray}
S ( \gamma \|  U_T \otimes \gamma^Q ) - S ( \gamma \| \gamma^T \otimes \gamma^Q )
\end{eqnarray}
is equal to $S ( U_T \| \gamma^T )$, which is nonnegative, and therefore
\begin{eqnarray}
I ( T : Q ) & \leq & S ( \gamma \| U_T \otimes \gamma^Q ).  
\end{eqnarray}
We therefore have the following, using the fact that the logarithm function is operator monotone:
\begin{eqnarray}
I ( T : Q ) & \leq & S ( \gamma \| U_T \otimes \gamma^Q ) \\
& = & \Tr [ \gamma \log \gamma ] - \Tr [ \gamma \log (   U_T \otimes \gamma^Q ) ] \\
& \leq & \Tr [ \gamma \log \gamma ] - \Tr [ \gamma \log ( \gamma / \lambda) ] \\
& = & \log ( 1/\lambda),
\end{eqnarray}
as desired.
\end{proof}

By definition, if two pure bipartite states $\psi, \phi \in \mathcal{D} ( Q \otimes R )$ are
such that $P ( \psi^Q, \phi^Q ) = \delta$, then there is a unitary automorphism of $R$
which maps $\phi$ to a state that is within $\Delta$-distance $\delta$ from $\psi$.  The next lemma
asserts that if these bipartite states have some additional structure, then we
can find such a  unitary automorphism
that is similarly structured.

\begin{lemma}
\label{controlledlemma}
Suppose that $S, S', Q, R$ are registers, where $S$ is a copy of $S'$, and that $\psi, \phi$ are
pure states on $SS' QR$ that are supported on $\textnormal{Span} \{ e \otimes e \} \otimes Q \otimes R$,
where $e$ varies over the standard basis elements of $S, S'$.  Let $\delta = P ( \psi^{SQ}, \phi^{SQ} )$.  Then, there
exists an $S'$-controlled unitary operator $U$ on $S' \otimes R$ such that $\Delta ( U \psi , \phi ) = \delta$.  
\end{lemma}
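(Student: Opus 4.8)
The plan is to diagonalize $\psi$ and $\phi$ with respect to the ``classical'' registers $S,S'$ and then apply Uhlmann's theorem block by block. Using the support hypothesis, write $\ket{\psi} = \sum_e \ket{e}_S \ket{e}_{S'} \ket{\psi_e}_{QR}$ and $\ket{\phi} = \sum_e \ket{e}_S \ket{e}_{S'} \ket{\phi_e}_{QR}$, where $e$ ranges over the standard basis of $S$ (identified with that of $S'$) and the $\ket{\psi_e},\ket{\phi_e}$ are subnormalized vectors in $Q\otimes R$. Every $S'$-controlled unitary on $S'\otimes R$ has the form $U = \sum_e \ketbra{e}_{S'}\otimes U_e$ with each $U_e$ a unitary on $R$, and a one-line computation gives $U\ket{\psi} = \sum_e \ket{e}_S\ket{e}_{S'}(I_Q\otimes U_e)\ket{\psi_e}$, so that $\langle\phi|U|\psi\rangle = \sum_e \langle\phi_e|(I_Q\otimes U_e)|\psi_e\rangle$. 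Since $U\psi$ and $\phi$ are pure, $\Delta(U\psi,\phi) = P(U\psi,\phi) = \sqrt{1 - |\langle\phi|U|\psi\rangle|^2}$, so it suffices to choose the $U_e$ so that $|\langle\phi|U|\psi\rangle| = F(\psi^{SQ},\phi^{SQ})$.

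Next I would record that $\psi^{SQ} = \Tr_{S'R}\psi = \sum_e \ketbra{e}_S\otimes \psi_e^Q$ with $\psi_e^Q := \Tr_R\ketbra{\psi_e}$, and likewise $\phi^{SQ} = \sum_e \ketbra{e}_S\otimes\phi_e^Q$; both are normalized. Because the square root of a block-diagonal positive operator is the direct sum of the square roots of its blocks, and the trace norm is additive over direct sums, $F(\psi^{SQ},\phi^{SQ}) = \|\sqrt{\psi^{SQ}}\sqrt{\phi^{SQ}}\|_1 = \sum_e \|\sqrt{\psi_e^Q}\sqrt{\phi_e^Q}\|_1$. For each $e$, Uhlmann's theorem applied to the (subnormalized) purifications $\ket{\psi_e},\ket{\phi_e}\in Q\otimes R$ of $\psi_e^Q,\phi_e^Q$ furnishes a unitary $U_e$ on $R$ with $|\langle\phi_e|(I_Q\otimes U_e)|\psi_e\rangle| = \|\sqrt{\psi_e^Q}\sqrt{\phi_e^Q}\|_1$; multiplying $U_e$ by a global phase, we may take $\langle\phi_e|(I_Q\otimes U_e)|\psi_e\rangle = \|\sqrt{\psi_e^Q}\sqrt{\phi_e^Q}\|_1 \geq 0$.

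With these choices, $\langle\phi|U|\psi\rangle = \sum_e \|\sqrt{\psi_e^Q}\sqrt{\phi_e^Q}\|_1 = F(\psi^{SQ},\phi^{SQ})$, hence $\Delta(U\psi,\phi) = P(U\psi,\phi) = \sqrt{1 - F(\psi^{SQ},\phi^{SQ})^2} = P(\psi^{SQ},\phi^{SQ}) = \delta$, as required. I do not anticipate a genuine obstacle here: the substance of the lemma is simply that the block structure of $\psi$ and $\phi$ in the correlated registers $S,S'$ is inherited by the aligning unitary. The only point needing a word of care is the dimension bookkeeping in Uhlmann's theorem — since $\ket{\psi_e}$ and $\ket{\phi_e}$ both live in $Q\otimes R$, the register $R$ has dimension at least $\max(\mathrm{rank}\,\psi_e^Q,\mathrm{rank}\,\phi_e^Q)$, so the partial isometry produced by Uhlmann extends to an honest unitary on $R$ — and the phase alignment across the blocks, which is what lets the magnitudes add up to the full $F(\psi^{SQ},\phi^{SQ})$ rather than merely bounding it.
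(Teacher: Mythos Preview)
Your proof is correct and follows essentially the same approach as the paper's: both decompose $\psi$ and $\phi$ into blocks indexed by the diagonal register $e$, compute $F(\psi^{SQ},\phi^{SQ})$ as a sum over blocks, and then pick a unitary $U_e$ on each block to attain that fidelity. The only cosmetic difference is that you invoke Uhlmann's theorem by name, whereas the paper writes out the matrices explicitly (setting $M^e = [m^e_{fg}]$, $N^e = [n^e_{fg}]$ and choosing $U^e$ so that $\Tr[U^e (M^e)^* N^e] = \|(M^e)^* N^e\|_1$ via polar decomposition); this is just Uhlmann's theorem unpacked in coordinates.
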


\begin{proof}
Write $\psi = u u^*, \phi = v v^*$ with
\begin{eqnarray}
u & = & \sum_{e,f,g} \left( m_{fg}^e \right) e \otimes e \otimes f \otimes g \\
v & = & \sum_{e, f, g}\left(  n_{fg}^e \right) e \otimes e \otimes f \otimes g,
\end{eqnarray}
where $e, f, g$ vary over the standard basis elements of $S, Q, R$, respectively.  The fidelity $F(\psi^{SQ} , \phi^{SQ})$ is then given by the expression
\begin{eqnarray}
\sum_e \left\| (M^e)^* (N^e) \right\|_1,
\end{eqnarray}
where $M^e = [m_{fg}^e]_{fg}$ and $N^e = [n_{fg}^e ]_{fg}$ denote linear operators
from $R$ to $Q$.  Find unitary operators
$U^e \colon R \to R$ such that 
\begin{eqnarray}
\Tr [ U^e (M^e)^* (N^e) ] = \left\| (M^e)^* (N^e) \right\|_1.
\end{eqnarray}
Then, the controlled operator $\sum_e e e^* \otimes U^e $ on $S' \otimes R$ satisfies the desired condition.
\end{proof}

Now we prove a proposition about states
that approximate the behavior of players in a free nonlocal game. 
(The statement of this proposition is based in particular on the statement of Lemma~4.3
in \cite{Chung:2015}.)

\begin{proposition}
\label{approxgameprop}
Let $X, X'$ denote $\mathcal{X}$-valued registers,
let $A$ denote an $\mathcal{A}$-valued register, and let $Q = Q^1 Q^2 \cdots Q^n$ denote
a $n$-partite register.  
Let $\psi$ be a pure state of $XX'QA$ given by $\psi = u u^*$,
\begin{eqnarray}
\label{psistate}
u = \sum_{x \in \mathcal{X}} \sqrt{ \mu ( x ) } \left| xx \right> \otimes u_x,
\end{eqnarray}
where $\mu$ is a probability distribution on $\mathcal{X}$ and each $u_x$ is a unit vector in $Q A$, and suppose that
\begin{eqnarray}
H ( X^k \mid X^{\widehat{k}} {X'}^{\widehat{k}} Q^{\widehat{k}} A^{\widehat{k}} )_\psi & \geq & \log \left| \mathcal{X}^k \right| - \delta
\end{eqnarray}
for all $k \in \{ 1, 2, \ldots, n \}$.
Then,
\begin{eqnarray}
\label{epsiineq}
\mathbb{E}_\psi \left[ L ( X, A ) \right] & \leq & w (G) +  O ( \sqrt{\delta} ).
\end{eqnarray}
\end{proposition}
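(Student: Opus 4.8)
The plan is to show that the hypotheses force the state $\psi$ to coincide, up to local unitaries on the individual players' registers and up to total error $O(\sqrt{\delta})$, with a product measurement strategy for $G$ executed on uniform inputs; inequality~(\ref{epsiineq}) is then immediate from the definition of $w(G)$ together with the freeness of $G$.

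First I would convert each entropy bound into a closeness statement. Fix $k$ and set $Y_k := X^{\widehat k}{X'}^{\widehat k}Q^{\widehat k}A^{\widehat k}$. Exactly as in the proof of Lemma~\ref{subuniformlemma} one has the identity $S(\psi^{X^kY_k}\|U_{X^k}\otimes\psi^{Y_k}) = \log|\mathcal{X}^k| - H(X^k\mid Y_k)_\psi$, so the hypothesis gives $S(\psi^{X^kY_k}\|U_{X^k}\otimes\psi^{Y_k})\le\delta$ and hence $P(\psi^{X^kY_k},U_{X^k}\otimes\psi^{Y_k})\le\sqrt{\delta}$ by the bound $P\le\sqrt{S}$ recalled above. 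Because $\psi$ has $X=X'$, its marginal $\psi^{Y_k}$ is supported on $\textnormal{Span}\{e\otimes e\}_{X^{\widehat k}{X'}^{\widehat k}}\otimes Q^{\widehat k}A^{\widehat k}$; enlarging $Q^k$ by a fixed ancilla if necessary (which affects neither $\mathbb{E}_\psi[L(X,A)]$ nor the hypotheses), I can choose a pure state $\phi_k$ on $X^k{X'}^kQ^kA^kY_k$ supported on $\textnormal{Span}\{e\otimes e\}_{X^k{X'}^k}$ whose $X^k{X'}^k$-part is uniform and maximally correlated, whose $Q^kA^kY_k$-part is a fixed purification of $\psi^{Y_k}$ independent of $X^k$, and with $\phi_k^{X^kY_k}=U_{X^k}\otimes\psi^{Y_k}$. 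Applying Lemma~\ref{controlledlemma} with $S=X^k$, $S'={X'}^k$, $Q=Y_k$, $R=Q^kA^k$ then yields an ${X'}^k$-controlled unitary $W^{(k)}=\sum_{x^k}|x^k\rangle\langle x^k|\otimes W^{(k)}_{x^k}$ on ${X'}^kQ^kA^k$ with $P(\psi,W^{(k)\dagger}\phi_k)\le\sqrt{\delta}$. Writing $u=\sum_{\mathbf{x}}\sqrt{\mu(\mathbf{x})}\,|\mathbf{x}\mathbf{x}\rangle\, u_{\mathbf{x}}$ with $u_{\mathbf{x}}\in QA$ and unpacking this last estimate in the computational basis of $X^{\widehat k}{X'}^{\widehat k}$, I obtain (after normalizing) that for every $k$ and every input $\mathbf{x}$ with $\mu(\mathbf{x})\ge\frac{1}{2|\mathcal{X}|}$ the vector $u_{\mathbf{x}}$ is $O(\sqrt{\delta})$-close to $\big((W^{(k)}_{x^k})^\dagger\otimes I_{Q^{\widehat k}A^{\widehat k}}\big)\,\xi^{(k)}_{\mathbf{x}_{\widehat k}}$ for some unit vector $\xi^{(k)}_{\mathbf{x}_{\widehat k}}\in QA$ that does not depend on $x^k$.

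Next I would assemble these coordinate-wise statements. By the chain rule, $H(X)_\psi=\sum_k H(X^k\mid X^1\cdots X^{k-1})_\psi\ge\sum_k H(X^k\mid Y_k)_\psi\ge\log|\mathcal{X}|-n\delta$ (the middle step because conditioning on additional registers cannot increase conditional entropy), so $\mu$ is $O(\sqrt{n\delta})$-close to uniform and the inputs with $\mu(\mathbf{x})\ge\frac{1}{2|\mathcal{X}|}$ carry all but an $O(\sqrt{\delta})$-fraction of the mass. Peeling the coordinates off one at a time — using the relations of the previous paragraph, the triangle inequality, and a fixed reference value for each $x^j$ as it is passed — shows that $\big\|u_{\mathbf{x}}-(\bigotimes_k (W^{(k)}_{x^k})^\dagger)\,\eta\big\|=O(n\sqrt{\delta})$ for all such $\mathbf{x}$, where $\eta\in QA$ is a single fixed pure state. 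The right-hand side is precisely the state produced by the following strategy for $G$: the players share $\eta$ across $Q^1A^1\otimes\cdots\otimes Q^nA^n$, and on input $x^k$ player $k$ applies $(W^{(k)}_{x^k})^\dagger$ to $Q^kA^k$ and then measures $A^k$ in the standard basis (this implements the projective measurement $\{W^{(k)}_{x^k}(|a^k\rangle\langle a^k|_{A^k}\otimes I_{Q^k})(W^{(k)}_{x^k})^\dagger\}_{a^k}$ on $Q^kA^k$). Since $G$ is free, the winning probability of this product measurement strategy on uniform inputs is at most $w(G)$. Substituting $(\bigotimes_k(W^{(k)}_{x^k})^\dagger)\eta$ for $u_{\mathbf{x}}$ and the uniform distribution for $\mu$ in $\mathbb{E}_\psi[L(X,A)] = \sum_{\mathbf{x},\mathbf{a}}\mu(\mathbf{x})L(\mathbf{x},\mathbf{a})\,\|(\langle\mathbf{a}|_A\otimes I_Q)u_{\mathbf{x}}\|^2$ — each substitution costing $O(\sqrt{\delta})$, with the low-probability inputs discarded for another $O(\sqrt{\delta})$ — gives $\mathbb{E}_\psi[L(X,A)]\le w(G)+O(\sqrt{\delta})$, which is~(\ref{epsiineq}).

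I expect the main obstacle to be the peeling/assembly step. The key fact that makes it go through is that all $n$ of the coordinate-wise estimates are extracted from the single state $\psi$ \emph{before} any rounding is performed, so the assembly is plain linear algebra on the vectors $u_{\mathbf{x}}$ and the unitaries $W^{(k)}_{x^k}$; consequently the errors accumulate only linearly in $n$, rather than with the exponential blow-up one would incur from an iterated conditioning on the inputs. Since $G$ (hence $n$ and the $|\mathcal{X}^k|$) is fixed, the constants hidden in $O(\sqrt{\delta})$ depend only on $G$, as permitted. One should also note that the claim is vacuous once $\delta$ exceeds a $G$-dependent threshold (because $w(G)>0$), so it suffices to argue in the regime $\delta\ll|\mathcal{X}^k|^{-2}$, in which the normalization steps above are legitimate.
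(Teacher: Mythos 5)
Your proposal is correct, and it rests on the same two pillars as the paper's proof --- the conversion $H(X^k\mid Y_k)\ge\log|\mathcal{X}^k|-\delta \Rightarrow P(\psi^{X^kY_k},U_{X^k}\otimes\psi^{Y_k})\le O(\sqrt{\delta})$ via relative entropy, and Lemma~\ref{controlledlemma} to extract input-controlled unitaries acting only on player $k$'s registers --- but it handles the approximate case by a genuinely different mechanism. The paper first proves the exact ($\delta=0$) case, where the controlled unitaries transport $u_{\mathbf x}$ between input values and immediately yield a strategy of value $\le w(G)$, and then reduces the general case to it by a ``swap-and-recover'' decoupling: fresh maximally entangled input registers $Y^kY'^k$ are swapped in for $X^k{X'}^k$ and an ${X'}^k$-controlled recovery unitary returns the global state to within purified distance $O(\sqrt{\delta})$ of $\Psi$, producing a nearby state that satisfies the exact hypothesis for all $k$ simultaneously; the score bound then transfers by trace distance. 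You instead stay at the level of the component vectors $u_{\mathbf x}$, normalize each high-probability component, and peel off coordinates with a triangle inequality to exhibit $u_{\mathbf x}\approx(\bigotimes_k(W^{(k)}_{x^k})^\dagger)\eta$ explicitly. Your route makes the structure of a near-optimal state more transparent and avoids auxiliary registers, at the cost of having to control the per-component normalization and to observe (as you correctly do) that for $\delta$ below a $G$-dependent threshold the near-uniformity of $\mu$ forces \emph{every} input string, hence every intermediate point of the peeling, to carry mass at least $\tfrac{1}{2|\mathcal{X}|}$, while the statement is vacuous above that threshold. The paper's state-level reduction sidesteps both issues, since the exact case absorbs all the bookkeeping and the only error transfer is a single trace-distance comparison. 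Both arguments accumulate error linearly in $n$ and give the same $O(\sqrt{\delta})$ with $G$-dependent constants, as the proposition permits.
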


\begin{proof}
\textbf{Case 1:} Assume  that $\delta = 0$.

Then, the state of $X^k (X {X'} Q A )^{\widehat{k}}$ is uniform on $X^k$.  Making use of
Lemma~\ref{controlledlemma}, we can find unitary automorphisms $U^k_{x^k \to y^k}$ on $Q^kA^k$ for any $x^k, y^k \in \mathcal{X}^k$ such that
\begin{eqnarray}
U^k_{x^k \to y^k} u_{x^1 x^2 \cdots x^k \cdots x^n} = u_{x^1 x^2 \cdots y^k \cdots x^n}.
\end{eqnarray}
The expected score $\mathbb{E}_\psi \left[ L ( X, A ) \right]$ can be achieved at the game $G$ by having the $n$-players share some state of the form $u_{x} u_x^*$ with
$x \in \mathcal{X}$, receiving
an input sequence $y^1 \ldots y^n \in \mathcal{X}$, each applying the unitary $U^k_{x^k \to y^k}$ to their subsystem, and
then measuring $A^k$ to determine their output.  This is a valid quantum strategy, and so $\mathbb{E}_\psi \left[ L ( X, A ) \right]$
cannot exceed $w ( G )$.

\vskip0.15in

\textbf{Case 2:} General case.

Note that 
\begin{eqnarray}
I ( X^k : (XX'QA)^{\widehat{k}} )_\psi & \leq & \delta,
\end{eqnarray}
or equivalently,
\begin{eqnarray}
S ( \psi^{ X^k (XX'QA)^{\widehat{k}} } \| \psi^{ X^k} \otimes \psi^{ (XX'QA)^{\widehat{k}} } ) & \leq & \delta.
\end{eqnarray}
Therefore
\begin{eqnarray}
P ( \psi^{ X^k (XX'QA)^{\widehat{k}} } , \psi^{ X^k} \otimes \psi^{ (XX'QA)^{\widehat{k}} } ) & \leq & O ( \sqrt{ \delta } ).
\end{eqnarray}
Also, since $H ( X^k  )_\psi \geq \log \left| \mathcal{X}^k \right| - \delta$ and $I ( X^k : X^{1 \ldots (k-1) } ) \leq \delta$, the chain
rule implies $H ( X )_\psi \geq \log \left| \mathcal{X} \right| - O ( \delta ) $, and
therefore the distribution of $\mu$ is within purified distance $O ( \sqrt{ \delta } )$ from a uniform distribution.  Thus,
\begin{eqnarray}
\label{purifieddecouple}
P ( \psi^{ X^k (XX'QA)^{\widehat{k}} } , U_{ X^k} \otimes \psi^{ (XX'QA)^{\widehat{k}} } ) & \leq & O ( \sqrt{ \delta } ).
\end{eqnarray}

We will reduce to Case 1 via the use of a ``decoupling'' procedure.  Let $Y, Y'$ denote $\mathcal{X}$-valued registers. 
Let $\Psi$ be the state of $XX'YY'AQ$ such that $XX'AQ$ are in state $\psi$ and each register $Y^k {Y'}^k$ is in a Bell state.
Consider the following two-step process carried out on the state $\Psi$ by player $k$.
For simplicity, let $\Player^k = (XX'YY'AQ)^k$.  
\begin{enumerate}
\item \textbf{(Swap.)} Swap the state of the registers $X^k {X'}^k$ with the state of the registers $Y^k {Y'}^k$.

\item \textbf{(Recover.)} The state of the registers $X^k \Player^{\widehat{k}}$ is now
\begin{eqnarray}
\left( U_{ X^k} \right)  \otimes \left( \Psi^{ \Player^{\widehat{k}} } \right).
\end{eqnarray}
Using inequality (\ref{purifieddecouple}) and Lemma~\ref{controlledlemma}, apply an ${X'}^k$-controlled unitary operator $V^k$ to
the register $(X'YY'AQ)^k$ to bring the registers $\Player^{1 \ldots n}$ to a state that is within purified distance $O ( \sqrt{\delta })$
from state $\Psi$.
\end{enumerate}

Denote this process (which takes place on the registers $\Player^k$) by the symbol $\mathcal{U}^k$.  The state
$\mathcal{U}^k ( \Psi )$ is within purified distance $O ( \sqrt{ \delta } )$ from $\Psi$.
At the same time --- since the registers $X^k \Player^{\widehat{k}}$ are not used in step 2
--- we have $H ( X^k \mid \Player^{\widehat{k}} ) = \log \left| \mathcal{X} \right|$ under the state $\mathcal{U}^k ( \Psi )$. 

Applying the data processing inequality
and the triangle inequality, the state
\begin{eqnarray}
\mathcal{U}^1 \circ \mathcal{U}^2 \circ \cdots \circ \mathcal{U}^n ( \Psi ).  
\end{eqnarray}
is within $\Delta$-distance $O ( \sqrt{ \delta } )$ from $\Psi$,
and it also satisfies
\begin{eqnarray}
H ( X^k : \Player^{\widehat{k}} ) = \log \left| \mathcal{X} \right|
\end{eqnarray}
for all $k$.  The desired result therefore follows from Case 1.
\end{proof}

\subsection{The Pure Parallel Repetition Process}

\def\PureParallel{\textnormal{PureParallel}}

\begin{figure}[h]
\fbox{\parbox{6in}{\textbf{The Pure Parallel Repetition Process ($\PureParallel$)} \\ \\
\textit{Participants:} Players $1, \ldots, n$ and a referee.\\ \\ 
\textit{Parameters:} \\
\begin{tabular}{rl}
$N$: & A positive integer \\
$G$: & An $n$-player free game with input alphabet $\mathcal{X}$ and \\ & output alphabet $\mathcal{A}$ \\
$\Phi \in \mathcal{D} ( C^1 \otimes \ldots \otimes C^n ) $: & A pure $n$-partite state \\
$\{ M_{\mathbf{a} \mid \mathbf{x}} = M^1_{{\mathbf{a}^1} \mid {\mathbf{x}^1}} \otimes \ldots \otimes M^n_{{\mathbf{a}^n} \mid {\mathbf{x}^n}} \}_{\mathbf{a}, \mathbf{x}}$:
& A projective measurement strategy for $G^N$ \\ & (where $M^i_{\mathbf{a}^i \mid \mathbf{x}^i} \in \mathcal{P} ( C^i )$). 
\end{tabular} 
\\ \\
\textit{Registers:} \\
\begin{tabular}{rl}
$\{ A_k^i \}$: & Quantum registers (where $A_k^i$ is $\mathcal{A}^i$-valued, $1 \leq i \leq n$, $1 \leq k \leq N$) \\
$\{ X_k^i \}, \{ {X'}_k^i \}$: & Quantum registers (where ${X}_k^i, {X'}_k^i $ are $\mathcal{X}^i$-valued, $1 \leq i \leq n$, $1 \leq k \leq N$) \\
$\{ W_k \}$: & Bit registers ($1 \leq k \leq N$) \\
$P$ : & A $\Perm ( \{ 1, 2, \ldots, N \} )$-valued register.
\end{tabular} 
\\ \\ 
\textit{Procedure:} 
\begin{center} \parbox{5.5in}{
\begin{enumerate}
\item Players $1, 2, \ldots, N$ collectively prepare the register
$C$ in the state $\Phi$.

\item For each $i  \in \{ 1, 2, \ldots, N \}$, the $i$th player prepares the registers $\{ X_k^i \}_k$ and $\{ {X'}_k^i \}_k$
so that $X_k^i {X'}_k^i$ is in a Bell state for all $k$.

\item For each $i \in \{ 1, 2, \ldots, N \}$, the $i$th player applies the process from $X^iC^i$ to $X^i C^i A^i$ given by the unitary map
\begin{eqnarray*}
\left| \mathbf{x}^i \right> \left| v \right> & \mapsto & \sum_{\mathbf{a}^i \in \mathcal{A}^{\times N}} \left| \mathbf{x}^i \right>
\left| M^i_{\mathbf{a}^i \mid \mathbf{x}^i} v  \right> \left| \mathbf{a}^i \right>.
\end{eqnarray*}

\item The referee chooses a random permutation $\sigma \colon \{ 1, 2 , \ldots, N \} \to \{ 1, 2, \ldots, N \}$, stores
it in $P$ and broadcasts it to the players.  Each player applies the permutation $\sigma$ 
to their  registers $X^i_1, \ldots, X^i_N$, the registers ${X'}^i_1, \ldots, {X'}^i_N$, and the registers $A^i_1, \ldots, A^i_N$.

\item For each $i$, player $i$ measures $X_1^i, A_1^i$ and announces their values.  The referee sets
$W_1 := L ( X_1, A_1 )$.

\item For each $i$, player $i$ measures $X_2^i, A_2^i$ and announces their values.  The referee sets
$W_2 := L ( X_2, A_2 )$.

\item[] \hskip0.5in \vdots

\item[$N+4.$]  For each $i$, player $i$ measures $X_N^i, A_N^i$ and announces their values.  The referee sets
$W_N := L ( X_N, A_N )$.
\end{enumerate}
}\end{center} }}
\caption{A parallel repetition process with entangled inputs and a pure initial state.}
\label{parrepproc}
\end{figure}

We study the parallel repetition process given in Figure~\ref{parrepproc} (\PureParallel).  This $\PureParallel$ process
is similar to the process $\Par$ in Figure \ref{parfig}, except that it assumes the strategy used by the players involves
a pure state and projective measurements, and that they obtain their input symbols from a maximally entangled state.

For each $i \in \{ 1, 2, \ldots, n \}$ and $t \in \{ 1, 2, \ldots, N \}$, let $\Player^i_t$ denote the registers
of which player $i$ has knowledge at the conclusion of step $(t+4)$:
\begin{eqnarray}
\Player^i_t & = & X^i {X'}^i A^i C^i X^{\widehat{i}}_{1 \ldots t} A^{\widehat{i}}_{1 \ldots t} P.
\end{eqnarray}
Then, following our convention, $\Player^{\widehat{i}}_t$ denotes the registers of which players
$1, 2, \ldots, i-1, i+1, \ldots, n$ have knowledge at conclusion of step $(t+4)$:
\begin{eqnarray}
\Player^{\widehat{i}}_t & = & X^{\widehat{i}} {X'}^{\widehat{i}} A^{\widehat{i}} C^{\widehat{i}} X^{i}_{1 \ldots t} A^{i}_{1 \ldots t} P.
\end{eqnarray}

The next proposition asserts that, if the probability of winning the first $t$ rounds is not too unlikely, then these players
possess only a limited amount of information about player $i$'s input on the $(t+1)$st round.  The lower bound that we choose
for the winning probability in the first $t$ rounds can be somewhat arbitrary; we will take it to be $w(G)^{2t}$. 
\begin{proposition}
\label{infopossprop}
Suppose that $\mathbf{P} ( W_{1 \ldots t} = \mathbf{1}) \geq w ( G )^{2t}$ in $\PureParallel$.  Then, for any $i \in \{ 1,2, \ldots, n \}$,
the state $\Gamma^{t+4}$ that occurs after step $t+4$ satisfies
\begin{eqnarray}
H ( X^i_{t+1} \mid \Player_t^{\widehat{i}}, W_{1 \ldots t} = \mathbf{1}) & \geq & \log \left| \mathcal{X}^i \right| - O ( t/N).
\end{eqnarray}
\end{proposition}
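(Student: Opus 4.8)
The plan is to exploit the permutation symmetry of $\PureParallel$ to reduce the bound to one on $H ( X^i_{t+1 \ldots N} \mid \Player^{\widehat{i}}_t )$, and then to control that quantity through a single operator inequality combined with a subuniformity estimate of the kind proved in Lemma~\ref{subuniformlemma}. Write $\Gamma_0$ for the unconditioned state $\Gamma^{t+4}$ and $\Gamma$ for $\Gamma^{t+4}$ conditioned on the event $W_{1 \ldots t} = \mathbf{1}$; we may assume $t \leq N/2$, since for larger $t$ the claimed bound becomes vacuous once the hidden constant is large enough. After step~$4$ the permutation $P$ is uniform and is applied simultaneously to every player's slots, and both the event $W_{1 \ldots t} = \mathbf{1}$ and the register collection $\Player^{\widehat{i}}_t$ are invariant under permutations of the slots $\{ t+1, \ldots, N \}$; hence $\Gamma$ is exchangeable in those slots, so $H ( X^i_s \mid \Player^{\widehat{i}}_t )_\Gamma$ does not depend on $s \in \{ t+1, \ldots, N \}$. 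By subadditivity of conditional entropy,
\[
H ( X^i_{t+1} \mid \Player^{\widehat{i}}_t )_\Gamma \; = \; \frac{1}{N - t} \sum_{s = t+1}^N H ( X^i_s \mid \Player^{\widehat{i}}_t )_\Gamma \; \geq \; \frac{1}{N-t}\, H ( X^i_{t+1 \ldots N} \mid \Player^{\widehat{i}}_t )_\Gamma ,
\]
so it suffices to show $H ( X^i_{t+1 \ldots N} \mid \Player^{\widehat{i}}_t )_\Gamma \geq (N-t) \log | \mathcal{X}^i | - O ( t )$.

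The key structural input concerns $\Gamma_0$: the registers $X^i_{1 \ldots N}$ are exactly uniformly distributed and jointly independent of $\Player^{\widehat{i}}_0 := X^{\widehat{i}} {X'}^{\widehat{i}} A^{\widehat{i}} C^{\widehat{i}} P$. This holds because the inputs enter as halves of Bell pairs $X^i_k {X'}^i_k$, because in step~$3$ the players' maps act on the pairwise-disjoint registers $X^j C^j A^j$ with $X^j$ used only as a control, and because $\sum_{\mathbf{a}^i} M^i_{\mathbf{a}^i \mid \mathbf{x}^i} = I$, so that tracing out $C^i A^i$ erases all dependence on $\mathbf{x}^i$; the permutation in step~$4$ and the announcements of rounds $1, \ldots, t$ in steps $5, \ldots, t{+}4$ act either within $\Player^{\widehat{i}}_0$ or merely decohere $X^i_{1 \ldots t}$, hence preserve this independence. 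Since moreover $X^i_{1 \ldots N}$ is itself a product distribution, we get $\Gamma_0^{X^i_{t+1 \ldots N} X^i_{1 \ldots t} \Player^{\widehat{i}}_0} = U_{X^i_{t+1 \ldots N}} \otimes \Gamma_0^{X^i_{1 \ldots t} \Player^{\widehat{i}}_0}$.

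Now assemble the operator inequality. Since $\Player^{\widehat{i}}_t = \Player^{\widehat{i}}_0\, X^i_{1 \ldots t}\, A^i_{1 \ldots t}$ and $A^i_{1 \ldots t}$ is classical of dimension $| \mathcal{A}^i |^t$, decomposing over its values and using that conditioning a classical register inflates the state by at most the reciprocal of its probability gives, with $\sigma := U_{A^i_{1 \ldots t}} \otimes \Gamma_0^{X^i_{1 \ldots t} \Player^{\widehat{i}}_0}$,
\[
\Gamma_0^{X^i_{t+1 \ldots N} \Player^{\widehat{i}}_t} \; \leq \; I_{A^i_{1 \ldots t}} \otimes \Gamma_0^{X^i_{t+1 \ldots N} X^i_{1 \ldots t} \Player^{\widehat{i}}_0} \; = \; | \mathcal{A}^i |^t \bigl( U_{X^i_{t+1 \ldots N}} \otimes \sigma \bigr) ,
\]
the last equality by the identity of the previous paragraph. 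The same inflation principle applied to the classical register $W_{1 \ldots t}$ gives $\Gamma \leq \mathbf{P} ( W_{1 \ldots t} = \mathbf{1} )^{-1} \Gamma_0$, and by hypothesis $\mathbf{P} ( W_{1 \ldots t} = \mathbf{1} ) \geq w ( G )^{2t}$. Combining and passing to the marginal on $X^i_{t+1 \ldots N} \Player^{\widehat{i}}_t$,
\[
\Gamma^{X^i_{t+1 \ldots N} \Player^{\widehat{i}}_t} \; \leq \; \Lambda \bigl( U_{X^i_{t+1 \ldots N}} \otimes \sigma \bigr) , \qquad \Lambda := \frac{| \mathcal{A}^i |^t}{\mathbf{P} ( W_{1 \ldots t} = \mathbf{1} )} \leq \Bigl( \tfrac{| \mathcal{A}^i |}{w ( G )^2} \Bigr)^t .
\]
Finally, using $H ( T \mid Q )_\gamma = \log | T | - S ( \gamma \,\|\, U_T \otimes \gamma^Q )$, the inequality $S ( \gamma \,\|\, U_T \otimes \gamma^Q ) \leq S ( \gamma \,\|\, U_T \otimes \sigma )$ (the difference being $S ( \gamma^Q \,\|\, \sigma ) \geq 0$), and operator monotonicity of $\log$ (which converts the displayed bound into $S ( \gamma \,\|\, U_T \otimes \sigma ) \leq \log \Lambda$ with $T = X^i_{t+1 \ldots N}$, $Q = \Player^{\widehat{i}}_t$, $\gamma = \Gamma^{TQ}$), we obtain $H ( X^i_{t+1 \ldots N} \mid \Player^{\widehat{i}}_t )_\Gamma \geq (N - t) \log | \mathcal{X}^i | - \log \Lambda$ with $\log \Lambda = O ( t )$. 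Dividing by $N - t$ and invoking $t \leq N/2$ yields $H ( X^i_{t+1} \mid \Player^{\widehat{i}}_t )_\Gamma \geq \log | \mathcal{X}^i | - O ( t / N )$.

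The step demanding the most care is the structural independence claim together with the precise bookkeeping of which registers constitute $\Player^{\widehat{i}}_t$: one must check that applying the permutation and announcing rounds $1, \ldots, t$ leak nothing about $X^i_{t+1 \ldots N}$ beyond the $O ( t )$ bits carried by $A^i_{1 \ldots t}$, and that the factor $\Lambda$ indeed absorbs both the $A^i_{1 \ldots t}$-conditioning and the $W_{1 \ldots t}$-conditioning with only an $O ( t )$ loss. The usual obstruction in parallel repetition --- the bad behaviour of repeated games under conditioning --- does not arise here, because we probe only a single, randomly located round, so the $\tfrac{1}{N-t}$ averaging against an $O ( t )$ information budget is all that is needed.
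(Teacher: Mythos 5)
Your proof is correct and follows essentially the same route as the paper's: establish that $X^i$ is uniform and independent of the other players' registers, absorb the conditioning on the success event and on the announced registers $X^i_{1\ldots t}A^i_{1\ldots t}$ into a $2^{O(t)}$ operator-domination factor, apply the subuniformity estimate (your inlined relative-entropy argument is exactly the content of Lemma~\ref{subuniformlemma}), and distribute the $O(t)$ loss over the $N-t$ remaining slots via subadditivity and permutation symmetry. The only cosmetic difference is that the paper invokes the lemma first and then pays a separate $O(t)$ for conditioning on the classical registers of range $2^{O(t)}$, whereas you fold that cost into the single constant $\Lambda$.
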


\begin{proof}
Note that in the state $\Gamma^{t+4}$, the registers $X^i$ are uniformly distributed relative to $X^{\widehat{i}} {X'}^{\widehat{i}} A^{\widehat{i}} C^{\widehat{i}} P$.  Since
the conditional state $\Gamma^{t+4}_{\mid W_{1 \ldots t} = \mathbf{1}}$ satisfies
\begin{eqnarray}
(w( G ))^{2t} \cdot \Gamma^{t+4}_{\mid W_{1 \ldots t} = \mathbf{1} } \leq \Gamma^{t+4},
\end{eqnarray}
we have by Lemma~\ref{subuniformlemma} that
\begin{eqnarray}
H ( X^i \mid X^{\widehat{i}} {X'}^{\widehat{i}} A^{\widehat{i}} C^{\widehat{i}} P, W_{1 \ldots t} = \mathbf{1}) & \geq & N \log \left| \mathcal{X}^i \right| - O ( t ).
\end{eqnarray}
The registers $X^{i}_{1 \ldots t } A^i_{1 \ldots t}$ have a range of size $2^{O ( t ) }$, and so when we additionally  condition
on them we obtain
\begin{eqnarray}
H ( X^i_{(t+1) \ldots N} \mid \Player_t^{\widehat{i}}, W_{1 \ldots t} = \mathbf{1}) & \geq & N \log \left| \mathcal{X}^i \right| - O ( t ),
\end{eqnarray}
which implies
\begin{eqnarray}
\label{unifsum}
\sum_{j = t+1}^N H ( X^i_j \mid \Player_t^{\widehat{i}}, W_{1 \ldots t} = \mathbf{1}) & \geq & N \log \left| \mathcal{X}^i \right| - O ( t ).
\end{eqnarray}
By permutation symmetry, the value of every term in the summation in (\ref{unifsum}) is the same.\footnote{The permutation symmetry argument can be
made explicit as follows.
Let $p^j_\sigma = \mathbf{P} ( W_{1 \ldots t} = \mathbf{1} , P = \sigma )$.  Let 
 $s^j_\sigma := H ( X^i_j \mid \Player^{\widehat{i}},  W_{1 \ldots t} = \mathbf{1},  P = \sigma)$
if $p^j_\sigma \neq 0$ (and otherwise, let $s^j_\sigma = 0$). Then, the terms of the summation in (\ref{unifsum}) are the quantities $\left( \sum_\sigma p^j_\sigma s^j_\sigma \right)$ for 
$j \in \{ t+1, \ldots, N \}$.  For any $j, \ell \in \{ t+1, \ldots, N \}$, if we choose an $N$-permutation $\alpha$ that maps
$j$ to $\ell$ and fixes $\{ 1, 2, \ldots , t \}$, then $p^j_\sigma s^j_{\sigma} = p^\ell_{(\alpha \circ \sigma)} s^\ell_{(\alpha \circ \sigma)}$,
and so the quantities $\sum_\sigma p^j_\sigma s^j_\sigma$ and $\sum_\sigma p^\ell_\sigma s^\ell_\sigma$ are the same.}  Therefore,
\begin{eqnarray}
H ( X^i_{t+1} \mid \Player_t^{\widehat{i}}, W_{1 \ldots t} = \mathbf{1}) & \geq & \left[ N \log \left| \mathcal{X}^i \right|  - O ( t ) \right]/(N-t) \\
& \geq & \log \left| \mathcal{X}^i \right| - O ( t/N),
\end{eqnarray}
as desired.  
\end{proof}

We will use the previous proposition to prove by induction an upper bound on the probability that $W_{1 \ldots t} = \mathbf{1}$.

\begin{proposition}
\label{infopossprop2}
Suppose that $\mathbf{P} ( WIN ( t )) \geq w ( G )^{2t}$.  Then,
\begin{eqnarray}
\label{squarerooterror}
\mathbf{P} ( W_{1 \ldots (t+1)} = \mathbf{1} ) \leq \mathbf{P} ( W_{1 \ldots t} = \mathbf{1} ) \cdot (w ( G ) + O ( \sqrt{t / N  } ) ).
\end{eqnarray}
\end{proposition}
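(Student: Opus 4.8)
The plan is to write $\mathbf{P}(W_{1\ldots(t+1)}=\mathbf{1})=\mathbf{P}(W_{1\ldots t}=\mathbf{1})\cdot\mathbf{P}(W_{t+1}=1\mid W_{1\ldots t}=\mathbf{1})$ and to bound the conditional factor by $w(G)+O(\sqrt{t/N})$, using Proposition~\ref{infopossprop} to control information about round $t+1$ and Proposition~\ref{approxgameprop} to turn that into a score bound. (We may assume $\mathbf{P}(W_{1\ldots t}=\mathbf{1})>0$, since it is at least $w(G)^{2t}>0$ by the hypothesis and the standing assumption $w(G)>0$.) Everything in $\PureParallel$ through step $t+4$ is a unitary or a classical coin flip, except for the measurement-and-announcement of $X_{1\ldots t}$ and $A_{1\ldots t}$; hence conditioning $\Gamma^{t+4}$ on a particular value $(\mathbf{x},\mathbf{a},\sigma)$ of the announced data $(X_{1\ldots t},A_{1\ldots t},P)$ with $L(\mathbf{x}_j,\mathbf{a}_j)=1$ for all $j\le t$ yields a \emph{pure} state $\Psi_{\mathbf{x},\mathbf{a},\sigma}$, and $\Gamma^{t+4}_{\mid W_{1\ldots t}=\mathbf{1}}$ is the mixture of these over the conditional distribution of the announced data.

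First I would match the component states to the hypotheses of Proposition~\ref{approxgameprop}. Because the round-$(t+1)$ input $X^i_{t+1}$ was prepared as one half of a Bell pair with $X'^i_{t+1}$, and the round-$1\ldots t$ measurements act only on registers disjoint from $X_{t+1}X'_{t+1}$, the state $\Psi_{\mathbf{x},\mathbf{a},\sigma}$ has exactly the required form: setting $X:=X_{t+1}$, $X':=X'_{t+1}$, $A:=A_{t+1}$, and letting player $i$'s side register $Q^i$ comprise $C^i$ together with all of player $i$'s registers for rounds other than $t+1$, one gets $\Psi_{\mathbf{x},\mathbf{a},\sigma}=\psi\psi^*$ with $\psi=\sum_{\mathbf{x}_{t+1}}\sqrt{\mu(\mathbf{x}_{t+1})}\,|\mathbf{x}_{t+1}\mathbf{x}_{t+1}\rangle\otimes u_{\mathbf{x}_{t+1}}$ for some distribution $\mu$ and unit vectors $u_{\mathbf{x}_{t+1}}\in QA$. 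Moreover $\mathbf{P}(W_{t+1}=1\mid(\mathbf{x},\mathbf{a},\sigma))=\mathbb{E}_{\Psi_{\mathbf{x},\mathbf{a},\sigma}}[L(X_{t+1},A_{t+1})]$, since step $t{+}5$ of $\PureParallel$ merely measures $X_{t+1}$ and $A_{t+1}$ and evaluates $L$.

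Second I would do the entropy accounting. For each data value let $\delta_{\mathbf{x},\mathbf{a},\sigma}:=\max_k\bigl(\log|\mathcal{X}^k|-H(X^k_{t+1}\mid X^{\widehat k}_{t+1}X'^{\widehat k}_{t+1}Q^{\widehat k}A^{\widehat k}_{t+1})_{\Psi_{\mathbf{x},\mathbf{a},\sigma}}\bigr)$. The conditioning register here is precisely what remains of $\Player^{\widehat k}_t$ once the classical registers $X^k_{1\ldots t}A^k_{1\ldots t}P$ (whose values we fixed) are stripped away, so by additivity of conditional von Neumann entropy over a classical register, $H(X^k_{t+1}\mid\Player^{\widehat k}_t,W_{1\ldots t}=\mathbf{1})=\mathbb{E}\bigl[H(X^k_{t+1}\mid\cdots)_{\Psi_{\mathbf{x},\mathbf{a},\sigma}}\bigr]$, the expectation over the conditional distribution of the announced data. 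Since $\mathbf{P}(W_{1\ldots t}=\mathbf{1})\ge w(G)^{2t}$, Proposition~\ref{infopossprop} gives $\log|\mathcal{X}^k|-H(X^k_{t+1}\mid\Player^{\widehat k}_t,W_{1\ldots t}=\mathbf{1})\le O(t/N)$ for each $k$; summing over the constant number of players and using $\max_k(\cdot)\le\sum_k(\cdot)$ yields $\mathbb{E}[\delta_{\mathbf{x},\mathbf{a},\sigma}]\le O(t/N)$. Now apply Proposition~\ref{approxgameprop} to $\Psi_{\mathbf{x},\mathbf{a},\sigma}$ and the game $G$ to get $\mathbb{E}_{\Psi_{\mathbf{x},\mathbf{a},\sigma}}[L(X_{t+1},A_{t+1})]\le w(G)+O(\sqrt{\delta_{\mathbf{x},\mathbf{a},\sigma}})$; averaging over the announced data and using concavity of the square root, $\mathbf{P}(W_{t+1}=1\mid W_{1\ldots t}=\mathbf{1})\le w(G)+O(\mathbb{E}[\sqrt{\delta_{\mathbf{x},\mathbf{a},\sigma}}])\le w(G)+O(\sqrt{\mathbb{E}[\delta_{\mathbf{x},\mathbf{a},\sigma}]})\le w(G)+O(\sqrt{t/N})$. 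Multiplying through by $\mathbf{P}(W_{1\ldots t}=\mathbf{1})$ gives inequality~(\ref{squarerooterror}).

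The main obstacle is the register bookkeeping in the first two steps: one must verify carefully that after the round-$1\ldots t$ measurements the conditional state is pure with $X_{t+1}X'_{t+1}$ in the required ``diagonal'' form, and that the conditioning set in the hypothesis of Proposition~\ref{approxgameprop} coincides, after averaging over the announced data, with the set $\Player^{\widehat k}_t$ appearing in Proposition~\ref{infopossprop}. Once that dictionary is in place, the remaining manipulations — the probabilistic chain rule, additivity of von Neumann entropy over a classical register, $\max\le\sum$, and Jensen's inequality for $\sqrt{\cdot}$ — are routine.
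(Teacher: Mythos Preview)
Your proposal is correct and follows essentially the same approach as the paper: condition on the announced data $(X_{1\ldots t},A_{1\ldots t},P)$ to obtain pure states, invoke Proposition~\ref{infopossprop} to bound the expected entropy deficit by $O(t/N)$, apply Proposition~\ref{approxgameprop} pointwise, and finish with concavity of the square root. Your write-up is in fact more careful than the paper's about the register bookkeeping (identifying $Q^i$ and matching the conditioning sets) and about the passage from the per-player bound of Proposition~\ref{infopossprop} to the uniform-in-$k$ hypothesis of Proposition~\ref{approxgameprop} via $\max_k\le\sum_k$; the paper leaves these steps implicit.
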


\begin{proof}
Consider the state of the $\PureParallel$ protocol after step $t+4$.
By Proposition~\ref{infopossprop}, the expected value of the quantity
\begin{eqnarray}
H ( X^i_{t+1} \mid \Player_t^{\widehat{i}}, X_{1 \ldots t}=x_{1 \ldots t }, A_{1 \ldots t }=a_{1 \ldots t},  P = \sigma,  W_{1 \ldots t} = \mathbf{1}),
\end{eqnarray}
when $x_{1 \ldots t }, a_{1 \ldots t}, \sigma$ vary according to the distribution given by the state $\Gamma^{t+4}_{\mid W_{1 \ldots t } = \mathbf{1} }$, is lower bounded
by $\log \left| \mathcal{X}^i \right| - O ( t/N)$.  Additionally, the state of the registers $\Player^{1 \ldots n }_t$ when conditioned on any such values
$X_{1\ldots t} = x_{1 \ldots t}, A_{1 \ldots t} = a_{1 \ldots t}, P = \sigma$, is a pure state.
By Proposition~\ref{approxgameprop} and the concavity of the square root function, the probability of the players winning the $(t+1)$st game under the distribution
$\Gamma^{t+4}_{\mid WIN ( t ) }$ is no more than $w ( G ) + O ( \sqrt { t / N } )$, as desired.
\end{proof}

\begin{theorem}
\label{parallelthm}
For any $t \in \{ 1, 2, \ldots, N \}$,
\begin{eqnarray}
\label{shortroundineq}
\mathbf{P} ( W_{1 \ldots t} = \mathbf{1} ) \leq (w ( G ) + O ( \sqrt{t / N } ))^t.
\end{eqnarray}
\end{theorem}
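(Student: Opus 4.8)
The plan is to derive the bound by induction on $t$, using Proposition~\ref{infopossprop2} as the only nontrivial input. Abbreviate $P_t := \mathbf{P}(W_{1 \ldots t} = \mathbf{1})$, and let $c$ be a constant at least as large as the hidden constant in the $O(\sqrt{t/N})$ appearing in Proposition~\ref{infopossprop2}; set $B_t := (w(G) + c\sqrt{t/N})^t$. The goal is to show $P_t \le B_t$ for all $t \in \{0,1,\ldots,N\}$, which for $t \ge 1$ is exactly inequality (\ref{shortroundineq}). The base case $t=0$ is immediate: $P_0 = 1 = B_0$, since the empty conjunction $W_{1\ldots 0}=\mathbf{1}$ holds with certainty. (As a sanity check, applying Proposition~\ref{infopossprop2} at $t=0$ recovers $P_1 \le w(G) = B_1$, since the error term vanishes there.)

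For the inductive step, assume $P_t \le B_t$ with $t+1 \le N$, and split on whether the hypothesis of Proposition~\ref{infopossprop2} is met. If $P_t \ge w(G)^{2t}$, then Proposition~\ref{infopossprop2} gives $P_{t+1} \le P_t\,(w(G) + O(\sqrt{t/N})) \le B_t\,(w(G) + c\sqrt{t/N}) = (w(G) + c\sqrt{t/N})^{t+1} \le B_{t+1}$, where the second inequality uses the induction hypothesis together with $c$ being chosen large enough, and the last uses monotonicity of $s \mapsto w(G) + c\sqrt{s/N}$. If instead $P_t < w(G)^{2t}$, then since appending rounds only shrinks the winning probability, $P_{t+1} \le P_t < w(G)^{2t}$; for $t \ge 1$ we have $2t \ge t+1$ and $0 < w(G) \le 1$, hence $w(G)^{2t} \le w(G)^{t+1} \le B_{t+1}$, which is what we need. (The subcase $t=0$ cannot arise in this branch, because $P_0 = 1 = w(G)^0$ is not strictly less than $w(G)^0$.) This closes the induction and yields the theorem.

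The real content is already contained in Proposition~\ref{infopossprop2} and the chain behind it (Lemma~\ref{subuniformlemma}, Lemma~\ref{controlledlemma}, Proposition~\ref{approxgameprop}, Proposition~\ref{infopossprop}); the present statement is just the bookkeeping that converts a one-step multiplicative estimate into a uniform $t$-fold bound. The only place that requires a moment's care is the branch where the hypothesis $P_t \ge w(G)^{2t}$ fails: one must verify that the crude estimate $P_{t+1} < w(G)^{2t}$ is \emph{still} no larger than $B_{t+1}$, and this works precisely because the exponent in $w(G)^{2t}$ doubles relative to $t+1$ while $w(G) \le 1$. Beyond that, the sole thing to confirm is that a single constant $c$ can absorb the $O(\cdot)$ term uniformly in $t$, which it can, since that hidden constant depends only on the fixed game $G$. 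I do not anticipate any genuine obstacle.
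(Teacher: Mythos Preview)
Your proposal is correct and follows essentially the same inductive argument as the paper: split on whether $P_t \ge w(G)^{2t}$, invoke Proposition~\ref{infopossprop2} in the first branch, and use $w(G)^{2t} \le w(G)^{t+1}$ in the second. Your version is in fact slightly more explicit than the paper's about the monotonicity step $(w(G)+c\sqrt{t/N})^{t+1} \le B_{t+1}$ and the uniformity of the constant $c$, but there is no substantive difference.
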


\begin{proof}
Let $E$ denote the function represented by $O$ on the righthand side of inequality (\ref{squarerooterror}).
We apply induction on $t$.  The base case is obvious.  For the inductive step, assume that
\begin{eqnarray}
\mathbf{P} ( W_{1 \ldots t} = \mathbf{1} ) & \leq & (w ( G ) + E ( \sqrt{ t/N }  ) )^t
\end{eqnarray}
holds
for a given value of $t \in \{ 1, 2, \ldots, N-1 \}$.  If $\mathbf{P} ( W_{1 \ldots t} = \mathbf{1} ) < (w ( G ) )^{2t}$, then 
\begin{eqnarray}
\mathbf{P} ( W_{1 \ldots (t+1)} = \mathbf{1} ) & < & (w ( G ))^{2t} \\ & \leq & (w ( G ))^{t+1}, \end{eqnarray} and there is nothing to prove.  If 
$\mathbf{P} ( W_{1 \ldots t} = \mathbf{1} ) \geq (w ( G ) )^{2t}$, then by Proposition~\ref{infopossprop2}, 
\begin{eqnarray}
\label{inddeduc}
\mathbf{P} ( W_{1 \ldots (t+1)} = \mathbf{1} ) \leq (w ( G ) + E ( \sqrt{t / N } ))^{t+1},
\end{eqnarray}
which completes the proof.
\end{proof}

\clearpage
\bibliographystyle{plain}

\bibliography{../resources/ParallelDIQKD}

\end{document}